\documentclass[11pt,english]{article}

\pdfoutput=1 

\usepackage{scalefnt} 

\raggedbottom 

\usepackage[utf8]{inputenc}
\usepackage[T1]{fontenc}
\usepackage[english]{babel}
\usepackage[dvipsnames]{xcolor} 
\usepackage{amsmath} 
\usepackage{amssymb} 
\usepackage{amsthm}  
\usepackage{hyperref}  
\usepackage{tikz}      
\usepackage{tcolorbox} 
\usepackage{dcolumn}   
\usepackage{graphicx, wrapfig, subcaption, setspace, booktabs, float, epsfig} 
\usepackage{multirow}  
\usepackage[all]{xy}

\usepackage{enumitem}
\usepackage{tikz}

\theoremstyle{plain}
\newtheorem{theorem}{Theorem}[section]

\newtheorem{conjecture}{Conjecture}[section]

\newtheorem{conclusion}{Conclusion}

\newtheorem{lemma}{Lemma}[section]

\newtheorem{corollary}{Corollary}
\newtheorem{example}{Example}

\theoremstyle{definition}
\newtheorem{remark}{Remark}

\setlength{\textwidth}{6.5in} \setlength{\oddsidemargin}{0in}
\setlength{\evensidemargin}{0in}
\setlength{\marginparwidth}{0.5in} \setlength{\textheight}{8.5in}
\setlength{\topmargin}{0in} \setlength{\footskip}{0.3in}

\hypersetup{
colorlinks=true,
linkcolor=blue,
citecolor=blue,
filecolor=blue,
urlcolor=black,
pdfauthor={Luiz Felipe Andrade Campos},
pdftitle={Reducible Yang-Mills Theories}}

\begin{document}
\title{On Extensions of Yang-Mills-Type Theories, Their Spaces and Their Categories}

\author{Yuri Ximenes Martins\footnote{yurixm@ufmg.br (corresponding author)}, Luiz Felipe Andrade Campos\footnote{luizfelipeac@ufmg.br},  Rodney Josu\'e Biezuner\footnote{rodneyjb@ufmg.br}\\ \\ \textit{Departamento de Matem\'atica, ICEx, Universidade Federal de Minas Gerais,}  \\  \textit{Av. Ant\^onio Carlos 6627, Pampulha, CP 702, CEP 31270-901, Belo Horizonte, MG, Brazil}}


\maketitle
\begin{abstract}
In this paper we consider the classification problem of extensions of Yang-Mills-type (YMT) theories. For us, a YMT theory differs from the classical Yang-Mills theories by allowing an arbitrary pairing on the curvature. The space of YMT theories with a prescribed gauge group $G$ and instanton sector $P$ is classified, an upper bound to its rank is given and it is compared with the space of Yang-Mills theories. We present extensions of YMT theories as a simple and unified approach to many different notions of deformations and addition of correction terms previously discussed in the literature. A relation between these extensions and emergence phenomena in the sense of \cite{emergence_yuri} is presented. We consider the space of all extensions of a fixed YMT theory $S^G$ and we prove that for every additive group action of $\mathbb{G}$ in $\mathbb{R}$ and every commutative and unital ring $R$, this space has an induced structure of $R[\mathbb{G}]$-module bundle. We conjecture that this bundle can be continuously embedded into a trivial bundle. Morphisms between extensions of a fixed YMT theory are defined in such a way that they define a category of extensions. It is proved that this category is a reflective subcategory of a slice category, reflecting some properties of its limits and colimits.
\end{abstract}
\section{Introduction}
\quad\;\,Although the Standard Model of Elementary Particle Physics (SM) is at present probably the most accurate and tested physical theory \cite{review_particle_physics}, there are indications that it must be viewed as an effective theory of a more fundamental theory \cite{beyond_SM_1,beyond_SM_2}. A lot of different approaches have been suggested along the last decades, e.g., string theory \cite{string} and other approaches involving higher-dimensional spacetime \cite{higher_dimension_1,higher_dimension_2}, loop quantum theory and other attempts to quantum gravity \cite{QG_1,QG_2}, symmetry breaking extensions such as the Standard-Model Extension \cite{SM_extension_1,SM_extension_2,SM_extension_3}, symmetry enlargement models such as the Minimal Supersymmetric Standard Model and other supersymmetric extensions with higher supersymmetry \cite{next_minimal_SM,next_minimal_SM_2,next_minimal_SM_3}, and so on.  This \textit{zoo} of extensions naturally leads one to consider the \textit{classification problem} of all already existing and all the possible extensions that could eventually be discovered in the future, avoiding spending time studying informal or superficial theories (which is clearly not the case of the approaches cited above). In this paper we propose the beginning of a program attempting to formalize and give a strategy of attack for such a classification problem.

Notice that since the SM can be built by coupling spinorial field theories with Yang-Mills (YM) theories and with its classical Higgs fields, one can see that the basic strategy is to begin by classifying the possible extensions in each component piece. In this paper and in the next two \cite{complete_extension,gauge_breaking_YMT_extensions} we will focus on the YM part. Actually, in this work we prove that a classification scheme for the Higgs sector and for the YM part with a fixed spinorial background follows from a classification for the YM part (see Sections \ref{sec_higgs} and \ref{sec_further_examples}, respectively), so that the next step is to work on the classification of extensions of the spinorial sector for fixed YM background. 

In order to motivate our definition of extension, let us begin by noticing that in the current literature we find many ways to extend Yang-Mills theories, such as those in \cite{YM_deformation_1, YM_deformation_2, YM_deformation_CS, tensor_gauge_1, tensor_gauge_2, tensor_gauge_3, SW_map, forgacs1980, Harnad, PhysRevD.99.115026, YM_extension_book, YM_extension_families, YM_extension_stringy}. We note that most of them can be organized into three classes: \textit{deformations}, \textit{addition of a correction term} and \textit{extension of the gauge group}. In the first class there exists a theory $\hat{S}$ which is the limit $\lim_{i \rightarrow \infty} \hat{S}_i$ of a family of other theories $\hat{S}_i$ and such that the first term $\hat{S}_0$ on the sequence is precisely a YM theory $S^G$. Typically, the sequence is formed by partial sums, i.e,  $\hat{S}_i=\sum_{j\leq i}f_j(\lambda_j)S_j$ depending on certain \textit{fundamental} (or \textit{deformation}) parameters $\lambda_j$ and on functions of them such that $f_0(\lambda_0)=1$ and  $\lim_{\lambda_j\rightarrow 0}f_j=0$. In this case, $\hat{S}=\sum_{i\geq 0}f_i(\lambda_i)S_i$ and the YM theory $S^G$ is recovered in the limit $(\lambda_I)\rightarrow 0$, where $I=\{0,1,...$\}. In the second class of examples, we begin with a YM theory $S^G$ and we add some sort of correction term $C$, typically breaking some symmetry, so that the extended theory is the sum $\hat{S}=S^G+C$. In the third one there is a theory $\hat{S}^{\hat{G}}$ depending on a larger group $\hat{G}$, with $G\subseteq \hat{G}$, and such that there is some process of reduction from $\hat{G}$ to $G$ such that, when applied to $\hat{S}^{\hat{G}}$ we recover $S^G$. Typically $\hat{G}=G\times K$ for some other group $K$ and the reduction process is some dimensional reduction.

Now, let $\hat{S}$ be an extension by deformation (in the sense above) of a YM theory $S^G$ and note that $\hat{S}=S_0+\sum_{i\geq 1}f_i(\lambda_i)S_i $. Let $\delta$ be the map which assigns a configuration $\varphi(\lambda_I)$ in the domain of $\hat{S}$ to its limit $\lim_{\lambda_I \rightarrow0 }\varphi(\lambda_I)$. Thus, the composition $S^G\circ \delta$ is precisely $S_0$, so that one can write 
\begin{equation}\label{decompo_intro}
  \hat{S}=S^G\circ \delta + \sum_{i\geq 1}f_i(\lambda_i)S_i,  
\end{equation}
which basically says that, except for the presence of the $\delta$-map, an extension by deformation is an extension by adding the correction term $C=\sum_{i\geq 1}f_i(\lambda_i)S_i$. This suggests that there must be a wider notion of ``extension'' which unifies both classes, as we will propose in this paper.

Actually, we will work in a more general setting. Most of the discussion is about the geometry underlying the YM theories, so that it does not depends on the way used to attach this geometry in order to build the action functional. More precisely, our entire discussion applies for gauge theories whose Lagrangian is given by some quadratic form $q(F_D)$ of the curvature, not necessarily that given by $\operatorname{tr}(F_D\wedge \star F_D)$. This has the great advantage of avoiding the requirements on the existence of a semi-riemannian structure in the spacetime (used to build the Hodge star) and on a compactness and/or semi-simpleness hypothesis on the Lie algebra (needed to make the Killing form a nice pairing). Thus, throughout the paper we work with what we call \textit{Yang-Mills-type} (YMT) theories. As we show the additional degrees of freedom on the choice of the pairing have real meaning, in the sense that there are really many more YMT theories than YM theories.

Looking at the two classes of extensions discussed in the literature described above, one can see that deformations are about \textit{restriction of scales}, while addition of correction terms is about summing terms in the \textit{entire domain}. Thus, if we are searching for an unifying notion of extension this should be about \textit{adding correction terms in some scales allowing an enlarging of the gauge group}. This is the core of our definition. Indeed, given a YMT theory $S^G$ with gauge group $G$, an \textit{extension} for it is defined by the following:
\begin{enumerate}
    \item a possibly larger group $\hat{G}$ such that $G\subseteq \hat{G}$, representing the enlargement of the gauge group;
    \item a space $\widehat{\operatorname{Conn}}_{\hat{G}}$ containing all $\hat{G}$-connections and which is invariant by global gauge transformations of $\hat{G}$, called the \textit{extended domain} and describing the domain in which the extended theory is defined;
    \item a gauge invariant action functional $\hat{S}^{\hat{G}}:\widehat{\operatorname{Conn}}_{\hat{G}} \rightarrow \mathbb{R}$, called the \textit{extended functional} and defining the extended theory;
    \item a smaller space $C_{\hat{G}}\subset \widehat{\operatorname{Conn}}_{\hat{G}}$, called the \textit{correction domain} and playing the role of a special regime, or scale;
    \item another functional $C_{\hat{G}}\rightarrow  \mathbb{R}$, called the \textit{correction functional}, corresponding to the additional term;
    \item a map $\delta:C_{\hat{G}}\rightarrow \operatorname{Conn}_G$ connecting the configurations at a special scale with the configurations of the starting YMT theory,
\end{enumerate}
all of this subjected to the condition $\hat{S}^{\hat{G}} \vert_{C_{\hat{G}}}=S^G\circ \delta +C$, which is the analogue of (\ref{decompo_intro}). 

Once a class of objects is introduced, the main classification problem is about finding a bijective correspondence between this class and some other set which is known a priori. Thus, the primary classification problem for extensions of a YMT theory $S^G$ is about studying the space $\operatorname{Ext}(S^G;\hat{G})$ of its extensions relative to a fixed extended gauge group $\hat{G}$. We prove that if a group $\mathbb{G}$ acts in the set of real numbers $\mathbb{R}$ preserving the sum, then it induces a structure of a $R[\mathbb{G}]$-module bundle for any commutative unital ring R and we make the conjecture that for certain $R$ and $\mathbb{G}$ this bundle is a continuous subbundle of a specific trivial $R[\mathbb{G}]$-module bundle. Thus, in order to classify the extensions of $S^G$ one can analyze the algebraic-topological properties of the corresponding module bundles, which should be another natural next step in this program.

On the other hand, recall that there is another (more refined) way to work on the classification of a class of objects: by means of considering how each object interacts with each other. In other words, one can search for natural notions of ``morphisms'' between such objects in such a way that they constitute a category $\mathbb{C}$. In this case, one can try to classify the category itself or the starting class of objects, now up to isomorphisms. Following this philosophy, we prove that there really exists a notion of \textit{morphisms between extensions} so that $\operatorname{Ext}(S^G;\hat{G})$ is the collection of objects of a category $\mathbf{Ext}(S^G;\hat{G})$ of extensions of $S^G$. We embed this category in a conservative way into a slice category, which produces some constraints on the possible categorical constructions that can be done between extensions. We also conjecture that this category can be regarded, for certain additive $\mathbb{G}$-actions on $\mathbb{R}$ and certain rings $R$, a category internal to the category $\mathbf{Bun}_{R[\mathbb{G}]}$.

There is also a third approach, more constructive, to the classification of $\operatorname{Ext}(S^G;\hat{G})$. It is about first classifying more easy subclasses $\mathcal{E}(S^G;\hat{G})\subset \operatorname{Ext}(S^G;\hat{G})$. Notice that the categorical approach also applies to this case, since each subclass of the class of objects of a category $\mathbf{C}$ induces a full subcategory. In our context, notice that $\operatorname{Ext}(S^G;\hat{G})$ can be decomposed into two disjoint subclasses: those with vanishing correction term (i.e, such that $C\equiv 0$) and those with non-null correction term. The first one is the class of the so-called \textit{complete extensions}, while the second one are the \textit{incomplete extensions}. In \cite{complete_extension} we propose a list of four additional problems which should be studied for a given subclass $\mathcal{E}(S^G;\hat{G})\subset \operatorname{Ext}(S^G;\hat{G})$ in view to the classification problem: existence problem, universality problem, maximality problem and universality problem and we show that they have solutions internal to any ``coherent'' class of complete extensions. This proves, in particular, that well-behaved complete extensions exist with some generality. The study of incomplete extensions is made in \cite{gauge_breaking_YMT_extensions}, where it is shown that equivariant extensions are always complete, but there is a canonical class of incomplete equivariant extensions and even incomplete gauge-breaking extensions, suggesting that the class of incomplete extensions is more nasty.

Let us finish this introduction with a brief description of how the paper is organized. In Section \ref{sec_YMT} the notion of Yang-Mills-type theory is formally defined and we study the space of all of them. We prove that they define a nontrivial fibration over the set of all triples $(M,G,P)$, where $M$ is the manifold, $G$ a Lie group and $P$ a principal $G$-bundle. We show that if $\dim G\geq 2$ and if $G$ is not discrete, then the corresponding fibers are infinite-dimensional as real vector spaces, but finitely-generated as $C^{\infty}(M)$-modules and in this case we provide an upper bound for their rank. In Section \ref{sec_extensions} the concept of extension is introduced in more precise terms and we explicitly show that the most notions of extensions arising in the literature are particular examples of ours. We also show that the emergence phenomena (in the sense of \cite{emergence_yuri}) are a source of examples for nontrivial extensions.  In Section \ref{sec_space_extensions} the properties of the space of extensions $\operatorname{Ext}(S^G;\hat{G})$ described above are proved. Finally, in Section \ref{sec_category_extensions} the category of extensions $\mathbf{Ext}(S^G;\hat{G})$ is introduced and proved to be a conservative subcategory of a product of slice categories.

\section{Yang-Mills-Type Theories}\label{sec_YMT}
\label{sec:classicalcase}

\quad\;\,We begin by recalling that a \emph{Yang-Mills theory} (YM) is given by
\begin{enumerate}
    \item a $n$-dimensional compact, orientable semi-Riemannian smooth manifold $(M,g)$, regarded as the spacetime;
    \item a real or complex finite-dimensional Lie group $G$, regarded as the gauge group of internal symmetries;
    \item a principal $G$-bundle $P$ over $M$, called the \emph{instanton sector} of the theory.
\end{enumerate}
\quad\;\, The \textit{action functional} is the map $S:\operatorname{Conn}(P;\mathfrak{g})\rightarrow \mathbb{R}$, defined on the space of $G$-connections on $P$, given by
    \begin{equation}\label{YM_action}
        S[D]:= \int_{M} \langle F_D, F_D\rangle_{\mathfrak{g}} dvol_g.
    \end{equation}
Here, $dvol_g$ is the volume form induced by $g$, $F_D=dD+\frac{1}{2}D[\wedge]_{\mathfrak{g}}D$ is the field strength of $D$ and $[\wedge]_{\mathfrak{g}}$ is the wedge product induced by the Lie bracket of $\mathfrak{g}$ (we are using the notations and conventions of \cite{EHP_yuri_rodney,obstructions_yuri_rodney}). Furthermore, $\langle\cdot,\cdot\rangle_{\mathfrak{g}}: \Omega_{heq}^2(P;\mathfrak{g})\otimes \Omega_{heq}^2(P;\mathfrak{g})\rightarrow C^{\infty}(M)$ is the standard $C^{\infty}(M)$-linear pairing on the space of horizontal $G$-equivariant $\mathfrak{g}$-valued 2-forms on $P$, whose definition we recall briefly (see \cite{advanced_field_theory,donaldson_livro} for further details). The semi-riemannian metric $g$ induces a pairing
$\langle\cdot,\cdot\rangle_{g}:\Omega^{2}(M)\otimes\Omega^{2}(M)\rightarrow C^{\infty}(M)$,
given by $\langle\alpha,\beta\rangle_{g}=\alpha\wedge\star_{g}\beta$.
On the other hand, the Killing form of $B_{\mathfrak{g}}:\mathfrak{g}\otimes\mathfrak{g}\rightarrow\mathbb{R}$
of $\mathfrak{g}$ extends to a pairing $\overline{B}_{\mathfrak{g}}:\Gamma(E_{\mathfrak{g}})\otimes\Gamma(E_{\mathfrak{g}})\rightarrow C^{\infty}(M)$
on the global sections of the adjoint bundle $E_{\mathfrak{g}}=P\times_{M}\mathfrak{g}$.
Since $\Omega^{2}(M;E_{\mathfrak{g}})\simeq\Omega^{2}(M)\otimes\Gamma(E_{\mathfrak{g}})$,
together the pairings $\langle\cdot,\cdot\rangle_{g}$ and $\overline{B}_{\mathfrak{g}}$
induce a pairing $\langle\cdot,\cdot\rangle_{g}\otimes\overline{B}_{\mathfrak{g}}\equiv\langle\cdot,\cdot\rangle_{g,\mathfrak{g}}$
on $\Omega^{2}(M;E_{\mathfrak{g}})$. But we have a canonical isomorphism
$\tau:\Omega_{heq}^{2}(P;\mathfrak{g})\simeq\Omega^{2}(M;E_{\mathfrak{g}})$,
leading us to take the pullback of $\langle\cdot,\cdot\rangle_{g,\mathfrak{g}}$
and define the desired pairing $\langle\cdot,\cdot\rangle_{\mathfrak{g}}$
by $\langle\omega,\omega'\rangle_{\mathfrak{g}}=\langle\tau(\omega),\tau(\omega')\rangle_{g,\mathfrak{g}}$.
In the literature this is typically written as $\langle\omega,\omega'\rangle_{\mathfrak{g}}=\operatorname{tr}(\omega[\wedge]_\mathfrak{g}\star_g\omega')$.  We also recall that the functional (\ref{YM_action}) is invariant by the group $\operatorname{Gau}_G(P)$ of global gauge transformations of $P$, i.e, the group of its $G$-principal $M$-bundle automorphisms (we are using notations and conventions of \cite{michor_global_analysis}).

Throughout this paper we will work with \emph{Yang-Mills-type} (YMT) \emph{theories} which differ from the YM theories by replacing the standard pairing $\langle\cdot,\cdot\rangle_{\mathfrak{g}}$ with an arbitrary (possibly degenerate and non-symmetric) $\mathbb{R}$-linear pairing $\langle\cdot,\cdot\rangle: \Omega_{heq}^2(P,\mathfrak{g})\otimes \Omega_{heq}^2(P,\mathfrak{g})\rightarrow C^{\infty}(M)$. The \emph{action functional} of a YMT theory is defined analogously to (\ref{YM_action}): it is given the map $S:\operatorname{Conn}(P;\mathfrak{g})\rightarrow \mathbb{R}$ such that
    \begin{equation}\label{YMT_action}
        S[D]:= \int_{M}\langle F_D,F_D\rangle dvol_g.
    \end{equation} 
\begin{remark}
We emphasize that in a YMT theory we assume that the pairing is only $\mathbb{R}$-linear. If the pairing is actually $C^{\infty}(M)$-linear we will say that the corresponding YMT is \textit{tensorial} or \textit{linear}. This additional assumption is important when one does local computations, as in \cite{gauge_breaking_YMT_extensions}. Since all results of this paper do not involve such local computations, we will work in the general $\mathbb{R}$-linear setting unless explicit stated otherwise.
\end{remark}
\begin{remark}
Since the Killing form is an invariant polynomial, it follows that the classical YM theories are invariant by the action of the group $\operatorname{Gau}_G(P)$ of global gauge transformations.We also emphasize that we will \textit{avoid} the assumption of invariance under global gauge transformations on the pairings $\langle \cdot , \cdot \rangle$, i.e, we will \textit{not} require that $\langle f^{*}\left(F_D\right),f^{*}\left(F_D\right)\rangle = \langle F_D,F_D\rangle$ for every $f\in \operatorname{Gau}_{G}(P)$. This means that for us, \textit{a YMT theory can be gauge-breaking}. This generality is important when studying gauge-breaking extensions of YMT theories \cite{gauge_breaking_YMT_extensions}. In this work, when a YMT theory is such that his pairing is gauge invariant we will refer to it explicitly as a \textit{gauge invariant YMT theory}.  
\end{remark}
\begin{remark}
After standard modifications (replacing arbitrary smooth functions by densities), all definitions above make sense for noncompact and non-orientable smooth manifolds. However, for simplicity we will work in the compact and oriented case.
\end{remark}
\subsection{Yang-Mills \textit{vs} Yang-Mills-Type}\label{sec_YM_vs_YMT}

\quad\;\,In this subsection we will compare the concepts of YM and YMT theories. Let $\mathcal{X}$ be the
space of all triples $(M,G,P)$, where $M$ is a (compact and oriented)
semi-riemannian manifold, $G$ is a finite-dimensional Lie group and
$P$ is an isomorphism class of principal $G$-bundle on $M$. Furthermore,
let $\mathcal{YM}$, $\mathcal{YMT}$ and $\mathcal{YMT}_0$  be the spaces of all Yang-Mills,
Yang-Mills-type and linear Yang-Mills theories respectively. We have obvious projections
$\pi_{YM}:\mathcal{YM}\rightarrow\mathcal{X}$, $\pi_{YMT}:\mathcal{YMT}\rightarrow\mathcal{X}$ and $\pi_{YMT,0}:\mathcal{YMT}_0\rightarrow\mathcal{X}$.
Fixed $(M,G,P)\in\mathcal{X}$, let $\operatorname{YM}_{G}(P)$, 
$\operatorname{YMT}_{G}(P)$ and $\operatorname{YMT}_{G,0}(P)$ denote the corresponding fibers by $\pi_{YM}$,
$\pi_{YMT}$ and $\pi_{YMT,0}$, respectively. They are in bijection with the possible
pairings on each context. Thus, $\operatorname{YM}_{G}(P)$ has a
single object for every $(M,G,P)$, characterized by the canonical
pairing $\langle\omega,\omega'\rangle_{\mathfrak{g}}=\operatorname{tr}\omega\wedge\star_g\omega'$.
In particular, $\pi_{YM}:\mathcal{YM}\simeq\mathcal{X}$ is a bijection.
On the other hand, as consequence of the next lemma we will see that
$\operatorname{YMT}_{G}(P)$ and $\operatorname{YMT}_{G,0}(P)$ typically have a lot of elements, so that
$\pi_{YMT}:\mathcal{YMT}\rightarrow\mathcal{X}$ and $\pi_{YMT,0}:\mathcal{YMT}_0\rightarrow\mathcal{X}$ are nontrivial fibrations.

Let $R$ be a commutative ring, $\mathbb{K}\subset R$ be a subring which is also a field, $V$ and $Z$ two $R$-modules and $T:Z_{\mathbb{K}}\rightarrow\mathbb{K}$ a $\mathbb{K}$-linear map, where $Z_{\mathbb{K}}$ denotes the restriction of scalars.
A \emph{$(T,\mathbb{K})$-pairing} (resp. \emph{$(T,R)$-pairing})  on $V$  is a $\mathbb{K}$-linear map $B:V_{\mathbb{K}}\otimes_{\mathbb{K}}V_{\mathbb{K}}\rightarrow\mathbb{K}$
which factors through $T$, that is, such that $B=T\circ \overline{B}$ (resp. $B=T\circ \Tilde{B}_{\mathbb{K}}$) for some 
$\mathbb{K}$-linear map (resp. $R$-linear map) $ \overline{B}:V_{\mathbb{K}}\otimes_{\mathbb{K}} V_{\mathbb{K}} \rightarrow Z_{\mathbb{K}}$ (resp. $(\Tilde{B}:V\otimes_R V\rightarrow Z)$), where $\Tilde{B}_{\mathbb{K}}$ is the scalar restriction of $\Tilde{B}$.
Let $\operatorname{Pair}_{T,\mathbb{K}}(V)$ (resp. $\operatorname{Pair}_{T,R}(V)$) 
denote the space of all of them. Furthermore, let $\operatorname{Pair}_R(V)$ be the $R$-module of all $R$-linear maps $T:V\otimes_R V\rightarrow R$.

\begin{lemma}
In the same notations above, let $V$, $W$ and $Z$ be $R$-modules and let $T:Z_\mathbb{K} \rightarrow\mathbb{K}$ be a $\mathbb{K}$-linear map. If $T$ is injective, then we have isomorphisms of $\mathbb{K}$-vector spaces:
\begin{align}
 \operatorname{Pair}_{(T,\mathbb{K})}(V\otimes_R W)&\simeq\operatorname{Hom}_{\mathbb{K}}(V_{\mathbb{K}}^{\otimes_{\mathbb{K}}^2}\otimes_{\mathbb{K}} W_{\mathbb{K}}^{\otimes_{\mathbb{K}}^2};Z_{\mathbb{K}}) \\ \label{lemma_pairs_2}
\operatorname{Pair}_{(T,R)}(V\otimes_R W) &\simeq [\operatorname{Hom}_R(V^{\otimes^2_R}\otimes_R W^{\otimes^2_R};Z)]_{\mathbb{K}}
\end{align}
where the right-hand side is the space of linear maps and $X^{\otimes^n}=X\otimes ...\otimes X$, $n$-times. If in addition $V$, $W$ and $R$ are projectives as $R$-modules, then 
\begin{equation}\label{pairs_vs_tensor}
  \operatorname{Pair}_{(T,R)}(V\otimes_R W)\simeq [\operatorname{Pair}_R(V)\otimes_R\operatorname{Pair}_R(W)]_{\mathbb{K}}.  
\end{equation}
\end{lemma}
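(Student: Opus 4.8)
The plan is to reduce every space of pairings to an ordinary $\operatorname{Hom}$-space by exploiting the injectivity of $T$, and then to manipulate the resulting tensor expressions using only the symmetric monoidal structure of $R$-modules together with the standard duality for finitely generated projectives.

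First I would treat an arbitrary $R$-module $U$ in place of $V\otimes_R W$. By definition a $(T,\mathbb{K})$-pairing is a composite $T\circ\overline{B}$ with $\overline{B}\in\operatorname{Hom}_{\mathbb{K}}(U_{\mathbb{K}}\otimes_{\mathbb{K}}U_{\mathbb{K}};Z_{\mathbb{K}})$, and a $(T,R)$-pairing is a composite $T\circ\Tilde{B}_{\mathbb{K}}$ with $\Tilde{B}\in\operatorname{Hom}_R(U\otimes_R U;Z)$. The assignments $\overline{B}\mapsto T\circ\overline{B}$ and $\Tilde{B}\mapsto T\circ\Tilde{B}_{\mathbb{K}}$ are surjective onto the respective spaces of pairings by the very definition of factoring through $T$; they are injective precisely because $T$ is injective, since $T\circ\overline{B}=T\circ\overline{B}'$ forces $\overline{B}=\overline{B}'$ pointwise. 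This yields the base identifications $\operatorname{Pair}_{(T,\mathbb{K})}(U)\simeq\operatorname{Hom}_{\mathbb{K}}(U_{\mathbb{K}}\otimes_{\mathbb{K}}U_{\mathbb{K}};Z_{\mathbb{K}})$ and $\operatorname{Pair}_{(T,R)}(U)\simeq[\operatorname{Hom}_R(U\otimes_R U;Z)]_{\mathbb{K}}$ as $\mathbb{K}$-vector spaces, the second after restriction of scalars (note that the correspondence is only $\mathbb{K}$-linear, not $R$-linear, since $T$ is merely $\mathbb{K}$-linear). This step is routine but is the conceptual heart: injectivity of $T$ is exactly what turns the factorization data into a genuine bijection.

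Next I would specialize to $U=V\otimes_R W$ and rearrange the fourfold tensor product. Over $R$ the braiding and associativity isomorphisms of the symmetric monoidal category of $R$-modules give $(V\otimes_R W)\otimes_R(V\otimes_R W)\simeq V^{\otimes^2_R}\otimes_R W^{\otimes^2_R}$, which combined with the base identification produces (\ref{lemma_pairs_2}); the analogous rearrangement performed after restriction of scalars to $\mathbb{K}$ produces the first isomorphism. Here one must keep careful track of which tensor products are taken over $\mathbb{K}$ and which over $R$, and of how $(-)_{\mathbb{K}}$ interacts with the rearrangement.

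For the final isomorphism (\ref{pairs_vs_tensor}) I would take $Z=R$ and invoke, under the projectivity hypotheses, the duality isomorphism for finitely generated projective modules. Using tensor--hom adjunction and the canonical evaluation map $\operatorname{Hom}_R(A;R)\otimes_R\operatorname{Hom}_R(B;R)\rightarrow\operatorname{Hom}_R(A\otimes_R B;R)$ with $A=V^{\otimes^2_R}$ and $B=W^{\otimes^2_R}$, one identifies $\operatorname{Hom}_R(V^{\otimes^2_R}\otimes_R W^{\otimes^2_R};R)$ with $\operatorname{Pair}_R(V)\otimes_R\operatorname{Pair}_R(W)$ and then restricts scalars to $\mathbb{K}$. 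The main obstacle is precisely this last step: the evaluation map above is always defined but is an isomorphism only when the modules involved are finitely generated and projective, so the crux is to verify that $V^{\otimes^2_R}$ and $W^{\otimes^2_R}$ inherit finite generation and projectivity from $V$ and $W$ (closure of these properties under $\otimes_R$) and that the canonical map is then invertible. Establishing this duality isomorphism cleanly, while respecting the $[-]_{\mathbb{K}}$ bookkeeping, is where essentially all of the real work lies.
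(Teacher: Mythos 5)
Your proposal is correct and takes essentially the same route as the paper: both arguments reduce each space of pairings to a $\operatorname{Hom}$-space by observing that post-composition with the injective map $T$ is injective (the paper phrases this as ``the image of a monomorphism'' plus the isomorphism theorem), rearrange the fourfold tensor product by symmetry and associativity, and deduce (\ref{pairs_vs_tensor}) from the canonical map $\operatorname{Hom}_R(A;R)\otimes_R\operatorname{Hom}_R(B;R)\rightarrow\operatorname{Hom}_R(A\otimes_R B;R)$ being an isomorphism for (finitely generated) projectives. The only detail the paper makes explicit that you pass over is that in the $(T,R)$-case the factorization tacitly involves precomposition with the canonical surjection $(V_{\mathbb{K}}\otimes_{\mathbb{K}}W_{\mathbb{K}})^{\otimes^2_{\mathbb{K}}}\rightarrow[(V\otimes_R W)^{\otimes^2_R}]_{\mathbb{K}}$ coming from the lax monoidal structure of restriction of scalars, and it is the surjectivity of that map (together with faithfulness of $(-)_{\mathbb{K}}$ and injectivity of $T$) that makes the correspondence bijective.
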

\begin{proof}
For the first case, notice that $\operatorname{Pair}_{T,\mathbb{K}}(V\otimes_{R}W)$
is precisely the image of the canonical map
\[
T_{*}:\operatorname{Hom}_{\mathbb{K}}((V_{\mathbb{K}}\otimes_{\mathbb{K}}W)^{\otimes ^2_{\mathbb{K}}};Z_{\mathbb{K}})\rightarrow \operatorname{Hom}_{\mathbb{K}}((V_{\mathbb{K}}\otimes_{\mathbb{K}}W)^{\otimes ^2_{\mathbb{K}}};\mathbb{K}),
\]
given by $T_*(\overline{B})=T\circ\overline{B}$. Since covariant hom-functors preserve monomorphisms, it follows that $T_{*}$ is injective. The result then follows from the isomorphism
theorem and from the commutativity up to isomorphisms of the tensor
product. For the second case, notice that for every $V,W$ and $Z$ we have a canonical $\mathbb{K}$-linear injective map 
$$
\alpha:\operatorname{Hom}_{\mathbb{K}}([(V\otimes_R W)^{\otimes^2_R}]_\mathbb{K};Z_{\mathbb{K}})\rightarrow \operatorname{Hom}_{\mathbb{K}}((V_{\mathbb{K}}\otimes_{\mathbb{K}}W)^{\otimes ^2_{\mathbb{K}}};Z_{\mathbb{K}}),
$$
obtained as follows. Recall that restriction of scalars is right-adjoint to extension of scalars. Since tensor product is commutative up to isomorphisms, the extension of scalars functor is a strong monoidal functor relative to the monoidal structure given by tensor products (Chapter II of \cite{bourbaki}), so that its left adjoint is lax monoidal (see Chapter 5 of \cite{monoidal_adjunction}). Therefore, for every $R$-modules $X,Y$ there is a $\mathbb{K}$-linear map $\mu: X_{\mathbb{K}}\otimes_{\mathbb{K}}Y_{\mathbb{K}}\rightarrow (X\otimes _R Y)_{\mathbb{K}}$ which in our case is a projection and then surjective. Thus, for $X=V\otimes_R W=Y$, a diagram chasing give us a surjective map 
$$
\mu:(V_{\mathbb{K}}\otimes_{\mathbb{K}}W)^{\otimes ^2_{\mathbb{K}}} \rightarrow [(V\otimes_R W)^{\otimes^2_R}]_\mathbb{K},
$$
so that $\alpha = \mu^*$, i.e, $\alpha (\varphi)=\varphi \circ \mu$. Finally, observe that $\operatorname{Pair}_{T,R}(V\otimes_{R}W)$ is the image of the following composition with $X=V\otimes_R W=Y$:
$$
\xymatrix{\ar@{-->}[d] \operatorname{Hom}_R (X\otimes_R Y;Z) \ar[r]^-{(-)_{\mathbb{K}}} & \operatorname{Hom}_{\mathbb{K}} ([X\otimes_R Y]_{\mathbb{K}};Z_{\mathbb{K}}) \ar[r]^-{\alpha} & \operatorname{Hom}_{\mathbb{K}}(X_{\mathbb{K}}\otimes _{\mathbb{K}} Y_{\mathbb{K}};Z_{\mathbb{K}}) \ar@/^/[lld]^{T_{*}} \\
\operatorname{Hom}_{\mathbb{K}}(X_{\mathbb{K}}\otimes _{\mathbb{K}} Y_{\mathbb{K}};\mathbb{K})}
$$
Since restriction of scalars is a faithful functor, the first of these maps is injective \cite{riehl_category}. The second one is injective from the above disussion, while the third one is injective by the first case of this lemma, so that again by the isomorphism theorem and the commutativity up to isomorphisms of the tensor product we get the desired result. Now, for (\ref{pairs_vs_tensor}) recall that we always have a canonical morphism $$\operatorname{Hom}_R(V;R)\otimes_R\operatorname{Hom}_R(W;R)\rightarrow \operatorname{Hom}_R(V\otimes_RW;R)$$ which is an isomorphism if the $R$-modules in question are projective \cite{bourbaki}. Composing this isomorphism with (\ref{lemma_pairs_2}) we get (\ref{pairs_vs_tensor}). 
\end{proof}

\begin{theorem}
If $G$ is not discrete and $\dim M\geq2$, then for every bundle
$P$ the fibers $\operatorname{YMT}_{G}(P)$ and $\operatorname{YMT}_{G,0}(P)$ of $\pi_{YMT}:\mathcal{YMT}\rightarrow\mathcal{X}$ and $\pi_{YMT,0}:\mathcal{YMT}_0\rightarrow\mathcal{X}$, respectively, are infinite-dimensional real vector spaces. Furthermore, as a $C^{\infty}(M)$-module   $\operatorname{YMT}_{G,0}(P)$ is projective of finite rank. Otherwise, i.e, if $G$ is discrete and/or $\dim M=0,1$, then the fibers are zero-dimensional as real vector spaces.
\end{theorem}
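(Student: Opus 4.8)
The plan is to reduce every fiber to a space of pairings on the single $C^{\infty}(M)$-module $V:=\Omega_{heq}^{2}(P;\mathfrak{g})$ and to read off its size from the algebraic structure of $V$. The first step is to record, using the canonical isomorphism $\tau$ already employed to define the standard pairing, that $V\simeq\Omega^{2}(M;E_{\mathfrak{g}})\simeq\Omega^{2}(M)\otimes_{C^{\infty}(M)}\Gamma(E_{\mathfrak{g}})$. By Serre--Swan both factors are finitely generated projective $C^{\infty}(M)$-modules: $\Omega^{2}(M)=\Gamma(\Lambda^{2}T^{*}M)$ has rank $\binom{n}{2}$ with $n=\dim M$, and $\Gamma(E_{\mathfrak{g}})$ has rank $\dim\mathfrak{g}=\dim G$, so $V$ is projective of rank $r=\binom{n}{2}\dim G$. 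Since the fibers are by construction in bijection with the admissible pairings, I would identify $\operatorname{YMT}_{G,0}(P)$ with $\operatorname{Pair}_{C^{\infty}(M)}(V)=\operatorname{Hom}_{C^{\infty}(M)}(V\otimes_{C^{\infty}(M)}V;C^{\infty}(M))$ and $\operatorname{YMT}_{G}(P)$ with the larger space $\operatorname{Hom}_{\mathbb{R}}(V\otimes_{\mathbb{R}}V;C^{\infty}(M))$ of merely $\mathbb{R}$-bilinear pairings.

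I would dispose of the degenerate case first, as it is immediate. If $G$ is discrete then $\mathfrak{g}=0$ and hence $\Gamma(E_{\mathfrak{g}})=0$; if $\dim M\leq 1$ then $\Lambda^{2}T^{*}M=0$ and hence $\Omega^{2}(M)=0$. In either case the tensor product $V$ vanishes, so $V\otimes V=0$ and the only pairing is the zero map; thus both fibers collapse to the trivial vector space and are zero-dimensional.

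In the main case $\dim G\geq 1$ and $\dim M\geq 2$ we have $r\geq 1$, so $V$ is a nonzero finitely generated projective module and hence so is $V\otimes_{C^{\infty}(M)}V$. Consequently $\operatorname{YMT}_{G,0}(P)$, being the $C^{\infty}(M)$-dual of a finitely generated projective module, is itself finitely generated projective, which proves the finite-rank assertion. To pin down the rank I would invoke the tensor--dual isomorphism $\operatorname{Hom}_{R}(A;R)\otimes_{R}\operatorname{Hom}_{R}(B;R)\simeq\operatorname{Hom}_{R}(A\otimes_{R}B;R)$, valid for finitely generated projective modules (the step used at the end of the proof of the Lemma), which gives $\operatorname{YMT}_{G,0}(P)\simeq\operatorname{Pair}_{C^{\infty}(M)}(\Omega^{2}(M))\otimes_{C^{\infty}(M)}\operatorname{Pair}_{C^{\infty}(M)}(\Gamma(E_{\mathfrak{g}}))$ of rank $r^{2}=\binom{n}{2}^{2}(\dim G)^{2}$, in accordance with (\ref{pairs_vs_tensor}). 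For infinite-dimensionality over $\mathbb{R}$ I would use that a nonzero finitely generated projective $C^{\infty}(M)$-module is, by Serre--Swan, the section module $\Gamma(E)$ of a positive-rank bundle over the positive-dimensional manifold $M$; a local nonvanishing section multiplied by a countable family of bump functions with pairwise disjoint supports furnishes infinitely many $\mathbb{R}$-linearly independent global sections, so $\operatorname{YMT}_{G,0}(P)$ is infinite-dimensional over $\mathbb{R}$. Since every $C^{\infty}(M)$-bilinear pairing is in particular $\mathbb{R}$-bilinear, the natural inclusion $\operatorname{YMT}_{G,0}(P)\hookrightarrow\operatorname{YMT}_{G}(P)$ forces $\operatorname{YMT}_{G}(P)$ to be infinite-dimensional as well.

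The step I expect to be most delicate is the bookkeeping in the identifications and in the application of the Lemma. One must verify that the projectivity hypotheses hold for all modules involved (here guaranteed by Serre--Swan together with the fact that $C^{\infty}(M)$ is free over itself), that $\tau$ genuinely presents $V$ as the honest tensor product $\Omega^{2}(M)\otimes_{C^{\infty}(M)}\Gamma(E_{\mathfrak{g}})$ needed to feed into (\ref{pairs_vs_tensor}), and that the resulting isomorphism respects the $C^{\infty}(M)$-module structure and not merely the underlying real vector space, since it is precisely this module structure that carries the finite-rank conclusion while the real structure carries the infinite-dimensionality.
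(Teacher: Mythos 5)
Your proposal is correct and follows essentially the same route as the paper: identify each fiber with the space of ($\mathbb{R}$- resp. $C^{\infty}(M)$-bilinear) $C^{\infty}(M)$-valued pairings on $\Omega^{2}(M)\otimes_{C^{\infty}(M)}\Gamma(E_{\mathfrak{g}})$, kill the degenerate cases by the vanishing of one tensor factor, and settle the main case via Serre--Swan and the positivity of the bundle ranks. The only differences are cosmetic: you bypass the paper's $(T,\mathbb{K})$-pairing lemma and its factorization through $\int_{M}$ by working with $C^{\infty}(M)$-valued pairings directly, and you make explicit (via bump functions and the inclusion $\operatorname{YMT}_{G,0}(P)\hookrightarrow\operatorname{YMT}_{G}(P)$) the infinite-dimensionality assertions the paper merely cites.
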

\begin{proof}
First of all, notice that 
\begin{align}
    \operatorname{YMT}_{G}(P)&\simeq \operatorname{Pair}_{(\int_M, \mathbb{R})}(\Omega^2(M)\otimes_{C^{\infty}} \Gamma(E_\mathfrak{g})) \\
     \operatorname{YMT}_{G,0}(P)&\simeq \operatorname{Pair}_{(\int_M, C^{\infty})}(\Omega^2(M)\otimes_{C^{\infty}} \Gamma(E_\mathfrak{g})), 
\end{align}
where $\int_M:C^\infty(M)_{\mathbb{R}}\rightarrow \mathbb{R}$ is the integral and we wrote $C^{\infty}$ instead of $C^{\infty}(M)$ in order to simplify the notation. Since the integral is injective up to sets with zero measure, a small change in the previous lemma allows us to conclude that 
\begin{align}
\operatorname{YMT}_{G}(P)&\simeq\operatorname{Hom}_{\mathbb{R}}((\Omega^{2}(M)_\mathbb{R}^{\otimes_\mathbb{R}^2} \otimes_{\mathbb{R}} \Gamma(E_{\mathfrak{g}})_\mathbb{R}^{\otimes_\mathbb{R}^2}  ;C^{\infty}(M)_{\mathbb{R}})\label{dimension_YMT}\\
\operatorname{YMT}_{G,0}(P)&\simeq\operatorname{Hom}_{C^{\infty}}((\Omega^{2}(M)^{\otimes_{C^{\infty}}^2} \otimes_{C^{\infty}} \Gamma(E_{\mathfrak{g}})^{\otimes_{C^{\infty}}^2}  ;C^{\infty}(M))_{\mathbb{R}}\label{dimension_YMT_0}
\end{align}
Thus, if $G$ is discrete, then $\dim G=0$, so that $\dim\mathfrak{g}=0$
and $E_{\mathfrak{g}}\simeq M\times0$, implying $\Gamma(E_{\mathfrak{g}})_{\mathbb{R}}\simeq0$,
which means that the whole (\ref{dimension_YMT}) is zero-dimensional. Since the rank of a $C^\infty(M)$-module is bounded from above by its rank as a $\mathbb{R}$-module, it follows that under the previous assumptions (\ref{dimension_YMT_0}) is zero-dimensional too. 
Similarly, if $\dim M<2$, then $\Omega^{2}(M)_\mathbb{R}\simeq0$, showing again
that (\ref{dimension_YMT}) and (\ref{dimension_YMT_0}) are zero-dimensional. Thus, suppose that
$\dim M\geq2$ and that $G$ is not discrete. In this case, the bundles $E_{\mathfrak{g}}$, $\Lambda ^2 T^*M$ have positive rank. But if $E\rightarrow M$ is any vector bundle with positive rank, then $\dim_{\mathbb{R}}(\Gamma (E)_\mathbb{R})=\infty$, so
that $\operatorname{YMT}_{G}(P)$ is infinite-dimensional. Notice that
\begin{equation}\label{YMT_0_as_module}
\operatorname{Hom}_{C^{\infty}}((\Omega^{2}(M)^{\otimes_{C^{\infty}}^2} \otimes_{C^{\infty}} \Gamma(E_{\mathfrak{g}})^{\otimes_{C^{\infty}}^2}  ;C^{\infty}(M))\simeq \Gamma (\operatorname{Hom}(\Lambda ^2 T^{*}M^{\otimes^2}\otimes E_{\mathfrak{g}}^{\otimes^2}; M\times \mathbb{R})),    
\end{equation}
so that $\operatorname{YMT}_{G,0}(P)$ is also infinite-dimensional as a real vector space.  Finally, recall that, regarded as a $C^{\infty}(M)$-module, the space of global sections $\Gamma(E)$ of any vector bundle is projective and of finite rank. Thus, from the isomorphism above, we see that under the hypotheses $\operatorname{YMT}_{G,0}(P)$ is a projective $C^{\infty}(M)$-module of finite rank.
\end{proof}
\begin{corollary}
For discrete gauge groups (and therefore for instanton sectors given
by covering spaces) Yang-Mills-type theories, linear Yang-Mills-type theories and Yang-Mill theories are all the same thing.
\end{corollary}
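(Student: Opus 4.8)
The plan is to deduce the corollary directly from the Theorem together with the elementary observation that a zero-dimensional real vector space is a single point. First I would record the two structural facts underlying the statement. Since $G$ is discrete we have $\dim G=0$, hence $\mathfrak{g}=0$ and the adjoint bundle $E_{\mathfrak{g}}=P\times_M\mathfrak{g}$ is the rank-zero bundle $M\times 0$; moreover a principal bundle with discrete structure group is exactly a covering space, which is what justifies the parenthetical remark in the statement.

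Next I would make explicit the natural inclusions relating the three fibers. The canonical Yang--Mills pairing $\langle\cdot,\cdot\rangle_{\mathfrak{g}}$ is built from the metric pairing $\langle\cdot,\cdot\rangle_g$ and the Killing form $B_{\mathfrak{g}}$, both of which are $C^{\infty}(M)$-linear, so it is a \emph{tensorial} (linear) pairing; and every $C^{\infty}(M)$-linear pairing is in particular $\mathbb{R}$-linear, since $\mathbb{R}\subset C^{\infty}(M)$. Identifying each fiber with its space of admissible pairings exactly as in the proof of the Theorem, this produces a chain of inclusions
$$
\operatorname{YM}_G(P)\subseteq\operatorname{YMT}_{G,0}(P)\subseteq\operatorname{YMT}_G(P),
$$
in which $\operatorname{YM}_G(P)$ is nonempty because it always contains the canonical theory.

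Finally I would invoke the Theorem: for $G$ discrete the largest fiber $\operatorname{YMT}_G(P)$ is a zero-dimensional real vector space, hence the trivial vector space, whose only element is the zero pairing. A nonempty subset of a one-point set equals that set, so the inclusions above collapse and $\operatorname{YM}_G(P)=\operatorname{YMT}_{G,0}(P)=\operatorname{YMT}_G(P)$, which is precisely the assertion that the three notions coincide.

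There is essentially no obstacle here; the only point requiring a little care is the translation between ``zero-dimensional real vector space'' and ``single theory,'' and checking that the canonical Yang--Mills pairing genuinely sits inside the linear Yang--Mills-type fiber so that the inclusion chain is legitimate. I would reinforce the conclusion with the hands-on remark that when $\mathfrak{g}=0$ the curvature $F_D\in\Omega^2_{heq}(P;\mathfrak{g})$ is forced to vanish, so the action functional $S[D]=\int_M\langle F_D,F_D\rangle\,dvol_g$ is identically zero regardless of which pairing is chosen and regardless of whether one works in the Yang--Mills, linear Yang--Mills-type, or general Yang--Mills-type framework; thus all three degenerate to the same trivial theory.
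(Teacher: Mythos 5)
Your proposal is correct and follows essentially the same route as the paper: both invoke the theorem to conclude that $\operatorname{YMT}_G(P)$ and $\operatorname{YMT}_{G,0}(P)$ are zero-dimensional, hence singletons, and combine this with the fact that $\operatorname{YM}_G(P)$ always consists of the single canonical pairing. Your version is marginally sharper in that the inclusion chain upgrades the paper's conclusion ``all three sets are in bijection'' to an actual equality of sets, and the closing observation that $F_D=0$ forces every action functional to vanish is a nice concrete confirmation, but neither changes the substance of the argument.
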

\begin{proof}
By the last theorem, if $G$ is discrete, then $\operatorname{YMT}_G(P)\simeq 0 \simeq \operatorname{YMT}_{G,0}(P)$. But from the discussion at the beginning of Subsection \ref{sec_YM_vs_YMT} we know that  $\operatorname{YM}_G(P)$ has always a single element. Thus, under the hypothesis all three sets are in bijection.
\end{proof}

\subsection{Upper Bound}

\quad \;\,If $\dim M\geq 2$ and $G$ is not discrete, we can actually give an upper bound to rank of  $\operatorname{YMT}_{G,0}(P)$ when regarded as a $C^{\infty}(M)$-module. This goes as follows.
By the Serre-Swan theorem (and its extensions to the non-compact case), the category of projective finitely generated $C^{\infty}(M)$-modules is equivalent to the category of vector bundles with finite rank over $M$, and the equivalence is given precisely by the functor of global sections \cite{serre_swan}. Thus, if $E$ is a bundle such that $\Gamma (E)$ is generated by $N$ elements, then the module $\Gamma (E)$ has rank bounded from above by $N$, i.e, $\operatorname{rnk}\Gamma (E)\leq N$. 

But in the construction of this generating set, we see that $N=m+k$, where $k$ is the number of elements of a finite trivializing open covering for $E$. Let $\min_k(E)$ be the minimum of such $k$, i.e, the minimum number of elements in a trivilizing open covering of $E$. Thus, $1\leq \min_k(E)$ and $\min_k(E)=1$ if $E$ is trivial. It can be proved that for $\min_k(E)\leq \operatorname{cat}(M)$ for every $E$, where $\operatorname{cat}(M)$ is the Lusternik-Schnirelmann category of $M$, which satisfies $\operatorname{cat}(M)\leq n+1$, where $n$ is the dimension of $M$ \cite{LS_category}. Thus, for every vector bundle $E$ we have the upper bound $\operatorname{rnk}\Gamma (E)\leq m+n+1$. One can also prove that if $M$ is $q$-connected with i.e, $\pi_i(M)=0$ for $1\leq i\leq q$, then $\operatorname{cat}(M)< (n+1/q+1)+1$, so that $\operatorname{rnk}\Gamma (E)<m+(n+1/q+1)+1$.

\begin{remark}\label{remark_rank}
We notice that if $M$ is not contractible, then $q < n$. Indeed, every topological $n$-manifold is homotopic to a $n$-dimensional CW-complex \cite{milnor_CW,wall1_CW}. But if a $n$-dimensional CW-complex $X$ is $q$-connected, with $q\geq n$, then $X$ is contractible.  On the other hand, if $M$ is contractible, then $M$ is not compact\footnote{Recall that a compact manifold of positive dimension is never contractible.} and every bundle $E\rightarrow M$ is trivial. Thus, $
\operatorname{rank}(E)=m+1$.
\end{remark}

\begin{corollary}
For every $l$-dimensional Lie group $G$, with $l>0$, every smooth $n$-dimensional manifold $M$, with $n\geq 2$ and every principal $G$-bundle $P$ we have $$
\operatorname{rnk}\operatorname{YMT}_{G,0}(P)\leq \frac{(n ^2 - n)^2l^2}{4}+n +1.\quad
$$
Furthermore,
\begin{enumerate}
    \item if the smooth manifold $M$ is $q$-connected, we also have 
$$
\operatorname{rnk}\operatorname{YMT}_{G,0}(P)< \frac{(n ^2 - n)^2l^2}{4}+\frac{n+1}{q+1} +1;
$$
\item if $M$ is paralellizable and $G$ is abelian, or if $M$ is contractible, then 
\begin{equation}\label{rank_contractible}
  \operatorname{rnk}\operatorname{YMT}_{G,0}(P)= \frac{(n ^2 - n)^2 l^2}{4} +1.  
\end{equation}
\end{enumerate}

\end{corollary}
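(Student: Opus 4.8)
The plan is to reduce the statement to the section-module description of $\operatorname{YMT}_{G,0}(P)$ obtained in the previous theorem and then to feed the fiber rank of the relevant bundle into the Serre--Swan bound developed above. First I would invoke the isomorphism \eqref{YMT_0_as_module}, which realizes $\operatorname{YMT}_{G,0}(P)$ as $\Gamma(E)$ for the bundle $E=\operatorname{Hom}(\Lambda^2T^*M^{\otimes^2}\otimes E_{\mathfrak{g}}^{\otimes^2};M\times\mathbb{R})$. Since the rank of $\Gamma(E)$ over $C^\infty(M)$ is bounded by the number of generators, and the construction recalled above furnishes $N=m+k$ of them with $m=\operatorname{rank}(E)$ and $k$ the size of a finite trivializing cover, the whole problem splits into computing $m$ and controlling $\min_k(E)$.

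The computation of $m$ is the combinatorial core. With $\dim M=n$ the bundle $\Lambda^2T^*M$ has rank $\binom{n}{2}=\tfrac{n^2-n}{2}$, so its square $\Lambda^2T^*M^{\otimes^2}$ has rank $\big(\tfrac{n^2-n}{2}\big)^2=\tfrac{(n^2-n)^2}{4}$; similarly the adjoint bundle $E_{\mathfrak{g}}=P\times_M\mathfrak{g}$ has rank $\dim\mathfrak{g}=l$, so $E_{\mathfrak{g}}^{\otimes^2}$ has rank $l^2$. Tensoring and dualizing (the $\operatorname{Hom}$ into the line bundle $M\times\mathbb{R}$ leaves the rank unchanged) yields $m=\tfrac{(n^2-n)^2l^2}{4}$. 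Plugging this into $\operatorname{rnk}\Gamma(E)\leq m+\min_k(E)$ together with $\min_k(E)\leq\operatorname{cat}(M)\leq n+1$ gives the general inequality at once, while using instead $\operatorname{cat}(M)<\tfrac{n+1}{q+1}+1$ for $q$-connected $M$ produces the stated strict bound.

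For the equality in the last item I would first show that $E$ is trivial: if $M$ is parallelizable then $T^*M$ and all its tensor and exterior powers are trivial, and if $G$ is abelian the adjoint representation is trivial so that $E_{\mathfrak{g}}$, hence $E_{\mathfrak{g}}^{\otimes^2}$, is trivial; if instead $M$ is contractible then, by Remark \ref{remark_rank}, every bundle over it is trivial. In both cases a single chart trivializes $E$, so $\min_k(E)=1$ and the generator count collapses to $N=m+1$, giving the upper bound $\operatorname{rnk}\operatorname{YMT}_{G,0}(P)\leq m+1$. The hard part will be the matching lower bound needed for equality: the displayed estimates only bound the rank from above, so one must argue that the trivializing construction of Remark \ref{remark_rank} is optimal and actually pins the rank of the free module $\Gamma(E)$ to $m+1=\tfrac{(n^2-n)^2l^2}{4}+1$. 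Securing this tightness, rather than the routine upper estimates, is where the argument demands the most care.
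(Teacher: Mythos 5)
Your proposal follows essentially the same route as the paper: it feeds the fiber rank $(n^2-n)^2l^2/4$ of the bundle \eqref{bundle} into the Serre--Swan generator count $N=m+\min_k(E)$ and the Lusternik--Schnirelmann estimates $\min_k(E)\leq\operatorname{cat}(M)\leq n+1$ (resp.\ $\operatorname{cat}(M)<\tfrac{n+1}{q+1}+1$), exactly as the paper does, and handles item (2) by observing that parallelizability plus abelianness, or contractibility, trivializes the bundle. The one point you flag as unresolved --- the matching lower bound needed to promote $\operatorname{rnk}\operatorname{YMT}_{G,0}(P)\leq \tfrac{(n^2-n)^2l^2}{4}+1$ to the asserted equality --- is likewise not supplied by the paper, which simply appeals to the bare assertion in Remark \ref{remark_rank} that $\operatorname{rank}(E)=m+1$ for trivial bundles, so your caution there identifies a genuine weakness of the original argument rather than a gap in yours.
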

\begin{proof}
For the main assertion and for (1), just apply the previous bounds to (\ref{YMT_0_as_module}) noticing that the rank of the bundle 
\begin{equation} \label{bundle}
\operatorname{Hom}(\Lambda ^2 T^{*}M^{\otimes^2}\otimes E_{\mathfrak{g}}^{\otimes^2}; M\times \mathbb{R})
\end{equation}
is given by $(n^2-n)^2l^2/4$. For the case (2), notice that if $M$ is paralellizable, then $\Lambda^2 T^*M$ is trivial. Furthermore, the adjoint bundle of a principal $G$-bundle with abelian $G$ is also trivial. Thus, (\ref{bundle}) is trivial and the result follows from the previous discussion. If, instead, $M$ is contractible, then the bundle (\ref{bundle}) is automatically trivial and the proof is done.   
\end{proof}

Some interesting situations to keep in mind:

\begin{example}[lower dimension examples] \emph{As we saw above, the lowest rank of $\operatorname{YMT}_{G,0}(M)$ is realized when $M$ is contractible or when $M$ is parallelizable and $G$ is abelian. Assuming one of these conditions and fixing a upper bound $z$ for (\ref{rank_contractible}) and find the integer solutions for the inequality 
$$
\frac{(n^2-n)^2l^2}{4} +1 \leq z,
$$
where $n\geq 2$ and $l>0$. For instance, if we fixes $z=7$, we get the graphs below. In particular, we see that \textit{there is no triple $(M,G,P)\in \mathcal{X}$ such that $\operatorname{rnk}\operatorname{YMT}_{G,0}(M)=1$}. } 
\begin{figure}[H]
\begin{centering}
\includegraphics[scale=0.35]{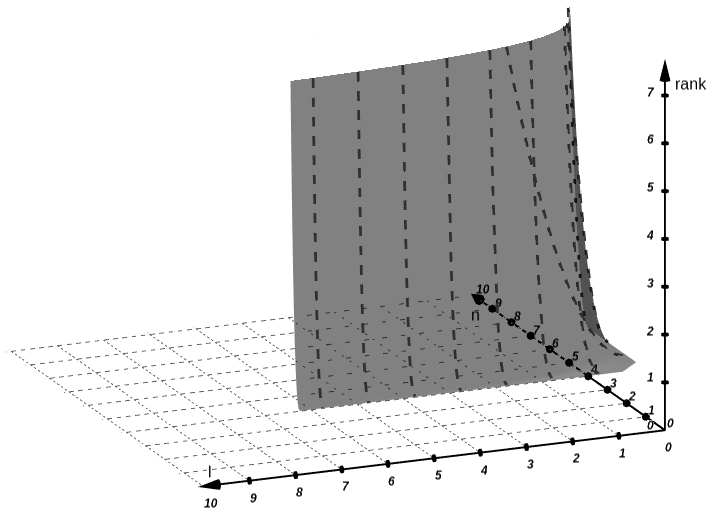}\includegraphics[scale=0.35]{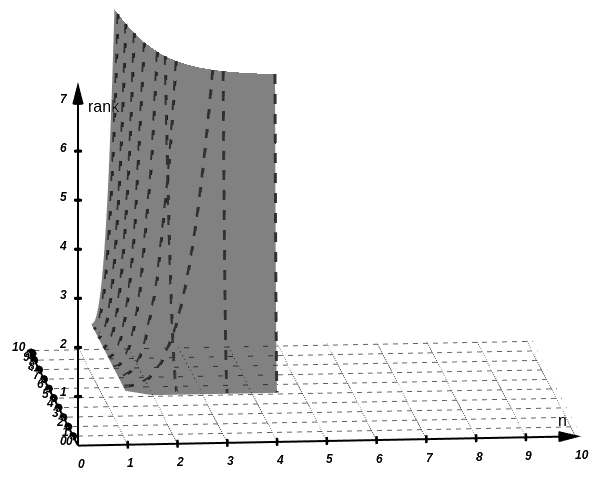}
\par\end{centering}
\caption{Two different views of the graphic for the values of $n$ and $l$ which produces a space of linear YMT theories
with rank $z\leq7$. }
\end{figure}
\end{example}

\subsection{Gauge Invariant Case}
\quad\;\,This section is a remark on the classification of the subspace $\operatorname{YMT}_{G,0}(P)_{\operatorname{gau}}\subset \operatorname{YMT}_{G,0}(P)$ of gauge invariant linear YMT theories with prescribed Lie group $G$ and instanton sector $P$. Since $\operatorname{YMT}_{G,0}(P)_{\operatorname{gau}}\subset \operatorname{YMT}_{G}(P)_{\operatorname{gau}}\subset \operatorname{YMT}_{G,0}(P)$, this can also be understood as lower bounds for the space $\operatorname{YMT}_{G}(P)_{\operatorname{gau}}$ of not necessaily linear gauge invariant YMT theories.
We begin by noting that from (\ref{dimension_YMT_0}) and (\ref{pairs_vs_tensor}) we get 
\begin{align}
 \operatorname{YMT}_{G,0}(P)&= \operatorname{Pair}_{C^{\infty}}(\Omega^2_{heq}{(P;\mathfrak{g}})) \\
 &\simeq \operatorname{Pair}_{C^{\infty}}((\Omega^{2}(M))\otimes_{C^{\infty}} \operatorname{Pair}_{C^{\infty}}( \Gamma(E_{\mathfrak{g}})) \\
 &\simeq \operatorname{Pair}_{C^{\infty}}((\Omega^{2}(M))\otimes_{C^{\infty}} \Gamma(\operatorname{Pair}(E_{\mathfrak{g}})),
\end{align}
where $\operatorname{Pair}(E)$ is the \textit{bundle of pairs}, i.e, the bundle $\operatorname{Hom}(E\otimes E;M\times \mathbb{R})$. Since $\Omega^2(M)$ is independent of $G$, the action by pullbacks of $\operatorname{Bun}_G(P)$ on $\operatorname{Pair}_{C^{\infty}}(\Omega^2_{heq}{(P;\mathfrak{g}}))$ corresponds, via the decomposition above, to an adjoint  action on $\Gamma(\operatorname{Pair}(E_{\mathfrak{g}}))$, i.e, $[f^*\langle\cdot, \cdot\rangle(s,s')]_p=\langle Ad_{f(p)}s(p), Ad_{f(p)}s'(p) \rangle $, where we are using the identification $\operatorname{Gau}_G(P)\simeq C^{\infty}_{eq}(M;G)$ of global gauge transformations with equivariant smooth maps. Let us define an \textit{adjoint structure} on $P$ as a $C^{\infty}$-linear pairing $\langle\cdot,\cdot \rangle$ on $E_{\mathfrak{g}}$ which is $\operatorname{Gau}_G(P)$-invariant, i.e, such that  $ \langle Ad_{f(p)}s(p), Ad_{f(p)}s'(p) \rangle =\langle s(p);s'(p)\rangle $ for every $p\in M$ and every sections $s,s'\in \Gamma (E_{\mathfrak{g}})$. Let $\operatorname{AdStr}(P;\mathfrak{g})$ be the set of all of them. It is actually a $C^{\infty}(M)$-submodule of $\Gamma(\operatorname{Pair}(E_{\mathfrak{g}}))$. Thus:

\begin{lemma}\label{lemma_YMT_gauge}
For every Lie group $G$ and every principal $G$-bundle $P$, we have $$\operatorname{YMT}_{G,0}(P)_{\operatorname{gau}}\simeq \operatorname{Pair}_{C^{\infty}}((\Omega^{2}(M))\otimes_{C^\infty}\operatorname{AdStr}(P;\mathfrak{g}).$$
\end{lemma}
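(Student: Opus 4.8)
The plan is to obtain the result by taking $\operatorname{Gau}_G(P)$-invariants in the chain of isomorphisms already established for $\operatorname{YMT}_{G,0}(P)$. Writing $A := \operatorname{Pair}_{C^{\infty}}(\Omega^2(M))$ and $B := \Gamma(\operatorname{Pair}(E_{\mathfrak{g}}))$, the decomposition recorded just before the statement identifies $\operatorname{YMT}_{G,0}(P) \simeq A \otimes_{C^{\infty}} B$ as $C^{\infty}(M)$-modules, and identifies the pullback action of $\operatorname{Gau}_G(P)$ with the operator $\mathrm{id}_A \otimes \rho$, where $\rho$ is the (fibrewise, hence $C^{\infty}(M)$-linear) adjoint action on $B$. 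Since the gauge-invariant theories are by definition the fixed points of this action, the whole statement reduces to the purely algebraic claim that $(A \otimes_{C^{\infty}} B)^{\operatorname{Gau}} \simeq A \otimes_{C^{\infty}} B^{\operatorname{Gau}}$, together with the identification $B^{\operatorname{Gau}} = \operatorname{AdStr}(P;\mathfrak{g})$.

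First I would dispose of the easy inclusion. If $b \in B^{\operatorname{Gau}}$ and $a \in A$, then $(\mathrm{id}_A \otimes \rho_f)(a \otimes b) = a \otimes \rho_f(b) = a \otimes b$ for every $f \in \operatorname{Gau}_G(P)$, so $A \otimes_{C^{\infty}} B^{\operatorname{Gau}}$ lands inside $(A \otimes_{C^{\infty}} B)^{\operatorname{Gau}}$; this yields a canonical comparison map, which is injective because $A$ is flat (being projective by the Theorem) and $B^{\operatorname{Gau}} \hookrightarrow B$. For the reverse inclusion I would express invariants as a kernel: taking $\operatorname{Gau}_G(P)$-invariants of a $C^{\infty}(M)$-module $N$ is the kernel of the $C^{\infty}(M)$-linear map $N \to \prod_{f} N$ sending $n \mapsto (\rho_f(n) - n)_f$, i.e. a finite-limit (equalizer) construction. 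The key point is that $A$ is not merely projective but finitely generated projective over $C^{\infty}(M)$: indeed $A = \Gamma(\operatorname{Pair}(\Lambda^2 T^*M))$ is the module of global sections of a finite-rank vector bundle, so Serre--Swan applies. A finitely generated projective module is flat and finitely presented, hence $A \otimes_{C^{\infty}}(-)$ is exact and commutes with arbitrary products. Applying this functor to the defining kernel sequence for $B^{\operatorname{Gau}}$, and using that $\mathrm{id}_A \otimes \rho_f$ is exactly the gauge operator on $A \otimes_{C^{\infty}} B$, one gets $A \otimes_{C^{\infty}} B^{\operatorname{Gau}} \simeq (A \otimes_{C^{\infty}} B)^{\operatorname{Gau}}$.

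It then remains to identify the second factor. By the very definition of an adjoint structure, $B^{\operatorname{Gau}} = \operatorname{AdStr}(P;\mathfrak{g})$: the condition $\rho_f(\langle\cdot,\cdot\rangle) = \langle\cdot,\cdot\rangle$ for all $f$ unwinds, via the formula $[f^*\langle\cdot,\cdot\rangle(s,s')]_p = \langle \mathrm{Ad}_{f(p)} s(p), \mathrm{Ad}_{f(p)} s'(p)\rangle$, to precisely the $\operatorname{Gau}_G(P)$-invariance that defines $\operatorname{AdStr}(P;\mathfrak{g})$ as a $C^{\infty}(M)$-submodule of $\Gamma(\operatorname{Pair}(E_{\mathfrak{g}}))$. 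Substituting this into the previous isomorphism gives $\operatorname{YMT}_{G,0}(P)_{\operatorname{gau}} \simeq \operatorname{Pair}_{C^{\infty}}(\Omega^{2}(M)) \otimes_{C^{\infty}} \operatorname{AdStr}(P;\mathfrak{g})$, which is the assertion.

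The main obstacle is the commutation of invariants with the tensor factor $A$: one must ensure that $A \otimes_{C^{\infty}}(-)$ both preserves the kernel defining the invariants and commutes with the product $\prod_f$ indexed by the (generally infinite) gauge group. Both follow from $A$ being finitely generated projective, so the crux is really the careful invocation of Serre--Swan to guarantee \emph{finite} generation, rather than mere projectivity. A secondary point to verify is that $\rho$ is genuinely $C^{\infty}(M)$-linear, so that all the maps above live in the category of $C^{\infty}(M)$-modules and the kernel/product manipulations are legitimate; this is immediate from the fibrewise nature of the adjoint action.
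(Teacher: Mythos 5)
Your proposal is correct and follows essentially the same route as the paper, which states the lemma as an immediate consequence ("Thus:") of the decomposition $\operatorname{YMT}_{G,0}(P)\simeq \operatorname{Pair}_{C^{\infty}}(\Omega^{2}(M))\otimes_{C^{\infty}}\Gamma(\operatorname{Pair}(E_{\mathfrak{g}}))$ together with the observation that the pullback action lives only on the second factor and that $\operatorname{AdStr}(P;\mathfrak{g})$ is by definition its fixed submodule. The only content you add beyond what the paper leaves implicit is the careful justification that taking $\operatorname{Gau}_G(P)$-invariants commutes with tensoring by $\operatorname{Pair}_{C^{\infty}}(\Omega^{2}(M))$, which you correctly ground in that module being finitely generated projective (hence flat and finitely presented) via Serre--Swan.
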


Let $\operatorname{inv}(\mathfrak{g})$ be the semifree differential graded algebra of invariant polynomials of $\mathfrak{g}$. Every element of $\omega \in \operatorname{inv}^2(\mathfrak{g})$ degree 2 in $\operatorname{inv}(\mathfrak{g})$ induces an Ad-invariant pairing in $\mathfrak{g}$, which in turn induces an adjoint structure on $P$. Note that $\operatorname{inv}^2(\mathfrak{g})$ always contains the Killing form of $\mathfrak{g}$, but it is typically higher dimensional. For instance, even if we are looking for nondegenerate Ad-invariant pairings, one can find them not only on semi-simple Lie algebras (given in this case by the Killing form), but also on a large class of solvable Lie algebras \cite{ad_metrics}.

\section{Extensions}\label{sec_extensions}

\quad\;\, Let $G$ be a Lie group. A \emph{basic extension} of $G$ is another Lie group $\hat{G}$ and an injective Lie group homomorphism $\imath:G\rightarrow \hat{G}$. A basic extension is called a \emph{split} if there exists a Lie group isomorphism $\hat{G}\simeq \operatorname{Ker}(\imath)\ltimes G$. The relation between basic extensions and the classical notion of group extensions is as follows. Recall that a group $\hat{G}$ is called an \emph{extension} of another group $G$ if they belong to an short exact sequence of groups of the form
\begin{equation}
\label{diagram2}
    \xymatrix{1\ar[r]& N\ar[r]^{\imath}& \hat{G}\ar[r]^{\pi}&G\ar[r]& 1.}
\end{equation}

This implies that $N$ is isomorphic to the normal subgroup $i(N)\subset G$ and that $G$ is isomorphic to the group $\hat{G}\slash \imath(N)$. Hence $G$ is not necessarily a subgroup of $\hat{G}$ \textit{a priori}. It is precisely if the map $\pi:\hat{G}\rightarrow G$ has a global section, i.e. iff the extension is split. On the other hand an inclusion $i:G\rightarrow \hat{G}$ does not necessarily fulfill $\hat{G}$ as a classical extension of $G$, but it does iff $G$ admits a semidirect complement in $\hat{G}$ and in this case the induced extension is necessarily split. More precisely, we have the dotted horizontal arrow below and its image is precisely the image of the vertical arrow. Thus, it factors producing the diagonal one, which is a bijection.
\begin{equation}
\label{diagram3}
    \xymatrix{\{\text{split extensions\}}\ar@{^(-->}[r]\ar@{..>}[rd]_{\simeq}& \text{\{basic extensions\}}\\ &
    \text{\{split basic extension\}}\ar@{^{(}->}[u]}
\end{equation}
\quad\;\,In sum, we have the following conlusion:
\begin{conclusion}
The typical examples of basic extensions of a Lie group are their split extensions. Actually, split extensions are in bijection with split basic extensions.
\end{conclusion}
Let $P\rightarrow M$ be a principal $G$-bundle over a smooth manifold
$M$ and let $\imath : G\hookrightarrow\hat{G}$ be a basic extension
of $G$ (for example, a split extension by the above). From the classification theorem of principal bundles, $P$
is classified by a homotopy class of continuous maps $f:M\rightarrow BG$,
where $BG$ is the classifying space of $G$ \cite{michor_global_analysis,michor1988gauge}. Since $B$ is functorial,
we have a map $Bf:BG\rightarrow B\hat{G}$, which classifies a principal
$\hat{G}$-bundle $\hat{P}$. In terms of cocycles this means that
for every trivializing open covering $U_{i}$ of $M$, the cocycles
$\hat{g}_{ij}:U_{ij}\rightarrow\hat{G}$ are just the compositions
$\imath\circ g_{ij}$. Since the classification of bundles is realized
by pullbacks, from universality there exists a bundle map $\imath:P\hookrightarrow\hat{P}$, denoted by the same notation of the basic extension map, which is a monomorphism due to the stability of monomorphisms under
pullbacks.

Furthermore, each basic extension also induces a group homomorphism $\xi:\operatorname{Gau}_{G}(P)\rightarrow \operatorname{Gau}_{\hat{G}}(\hat{P})$. Indeed, recall that $\operatorname{Gau}_{G}(P)$ can be regarded as the group of global sections of the group bundle $P_G=(P\times G)/G\rightarrow M$ \cite{advanced_field_theory}. Since $\hat{P}$ is the bundle induced by the basic extension, the actions of $G$ on $P$ and of $\hat{G}$ on $\hat{P}$ are compatible, so that there exists the dotted arrow below, which is a morphism of group bundles. Applying the functor of global sections we get $\xi$. A similar homotopical conclusion is the following. In \cite{xi_homotopy} it is shown that we have a homotopy equivalence $\operatorname{Gau}_G(P)\simeq \Omega\operatorname{Map}(M;BG)_f$, where $f:M\rightarrow BR$ is the map whose homotopy class classifies $P$. Furthermore, $\Omega$ denotes the loop space, $\operatorname{Map}(X;Y)$ is the space of continuous maps in the compact-open topology and $\operatorname{Map}(X;Y)_f$ is path-component of $f$.    
$$
\xymatrix{\ar[d]_{\pi} P\times G \ar[rr]^{\imath \times \imath} && \hat{P} \times \hat{G} \ar[d]^{\hat{\pi}} \\
(P\times G)/G \ar@{-->}[rr] && (\hat{P} \times \hat{G})/\hat{G}}
$$

We can now define the main objects of this paper: \textit{extensions of YMT theories}. Let $G$ be a Lie group and let $S^G:\operatorname{Conn}(P;\mathfrak{g})\rightarrow \mathbb{R}$ be the action functional of a YMT theory with gauge group $G$ and instanton sector $P$. An \emph{extension} of $S^G$ consists of the following data:

\begin{enumerate}
    \item a basic extension $\imath: G \hookrightarrow \hat{G}$ of $G$;
    \item an equivariant extension of the space of connection 1-forms on $\hat{P}$. More precisely, a subset \begin{equation}{\label{domain_extension}
    \operatorname{Conn}(\hat{P};\hat{\mathfrak{g}}) \subseteq \widehat{\operatorname{Conn}}(\hat{P};\hat{\mathfrak{g}})\subseteq \Omega^{1}_{eq}(\hat{P};\hat{\mathfrak{g}}), }
    \end{equation}
    called the \textit{extended domain}, where $\hat{P}$ is the $\hat{G}$-bundle induced from $P$, which is invariant by the canonical action of $\operatorname{Gau}_{\hat{G}}(\hat{P})$ by pullbacks\footnote{Notice that although the YMT functionals are \textit{not} necessarily gauge invariant, their extended functional $\hat{S}^{\hat{G}}$ \textit{must be} gauge invariant};
    \item a $\operatorname{Gau}_{\hat{G}}(\hat{P})$-invariant functional $\hat{S}^{\hat{G}}:\widehat{\operatorname{Conn}}(\hat{P};\hat{\mathfrak{g}})\rightarrow \mathbb{R}$, called the \emph{extended action functional};
    \item a nonempty subset  $0\subseteq C^1(\hat{P};\hat{\mathfrak{g}})\subseteq \widehat{\operatorname{Conn}}(\hat{P};\hat{\mathfrak{g}})$, called the  \emph{correction subspace} and whose inclusion map in $\widehat{\operatorname{Conn}}(\hat{P};\hat{\mathfrak{g}})$ we will denote by $\jmath$;
    \item a \emph{correction functional} $C:C^1(\hat{P};\hat{\mathfrak{g}})\rightarrow \mathbb{R}$ and map $\delta: C^1(\hat{P};\hat{\mathfrak{g}})\rightarrow \operatorname{Conn}(P;\mathfrak{g})$ such that
    \begin{equation}\label{decomposition_extension}
      \hat{S}^{\hat{G}}\circ \jmath = S^{G}\circ \delta + C.
    \end{equation}
\end{enumerate}

Thus, if  $\hat{S}^{\hat{G}}$ is an extension of a YMT theory $S^{G}$ we have the diagram below, representing the sequence of subspaces and maps defining it. If $\widehat{\operatorname{Conn}}(\hat{P};\hat{\mathfrak{g}})\simeq \Omega^1_{eq}(\hat{P};\hat{\mathfrak{g}})$ we say that the extension is \emph{full}. If the correction functional $C$ is null, we say that the extension is \emph{complete on the correction subspace $C^1(\hat{P};\hat{\mathfrak{g}})$} or simply that it is \emph{complete}. If such subspace is such that $C^1(\hat{P};\hat{\mathfrak{g}})\simeq \widehat{\operatorname{Conn}}(\hat{P};\hat{\mathfrak{g}})$ and $C\equiv 0$ we say that the extension is \emph{fully complete}.
$$
\xymatrix{0 \ar[r] & \operatorname{Conn}(\hat{P};\hat{\mathfrak{g}}) \ar@{^(->}[r] & \widehat{\operatorname{Conn}}(\hat{P};\hat{\mathfrak{g}})  \ar@{^(->}[r] & \Omega ^1_{eq}(\hat{P};\hat{\mathfrak{g}}) \\
& \Omega ^1_{eq}(P;\mathfrak{g}) &  \ar@{_(->}[l] \operatorname{Conn}(P;\mathfrak{g}) & \ar[l]^-{\delta} \ar@{_(->}[lu]_{\jmath} C^1(\hat{P};\hat{\mathfrak{g}}) & \ar[l] 0  }
$$

Physically, a complete extension $\hat{S}^{\hat{G}}$ of a YMT theory $S^G$ is one which reproduces $S^G$ \textit{exactly} (i.e, without any correction term) when restricted to the subspace $C^1(\hat{P};\hat{\mathfrak{g}})$. Furthermore, it is a genuine extension of $S^G$ in the mathematical sense. If we think of this subspace as determining the configurations of $\hat{S}^{\hat{G}}$ in a certain scale, then we conclude that a complete extension of $S^G$ is one which reproduces $S^G$ exactly in \textit{some scales}. In turn, a fully complete extension is one describing $S^G$ exactly in \textit{every scales}. This suggests a relation between emergence phenomena and complete extensions, which will be confirmed in Subsection  \ref{sec_emergence}.

Finally, a full extension of $S^G$ is one whose vector potentials are not connection 1-forms on $\hat{P}$, but actually arbitrary equivariant 1-forms. Thus, in them, we have a physical meaning for 1-forms $\hat{D}:T\hat{P}\rightarrow \hat{g}$ which are not vertical. 
\begin{remark}[linear and equivariant extensions]\label{remar_linear_equivariant_extensions}
The definition of extension above has some variations. Indeed, as defined above, in an extension of a YMT theory the extended domain is required to be a $\operatorname{Gau}_{\hat{G}}(\hat{P})$-set such that the extended functional is $\operatorname{Gau}_{\hat{G}}(\hat{P})$-invariant. No structures are required on the correction subspace and the correction functional and the $\delta$-map are just functions. The situation can be improved in two directions: 
\begin{enumerate}
    \item by requiring that the extended domain and the correction subspace are actually linear subspaces of $\Omega ^1_{eq}(\hat{P};\hat{\mathfrak{g}})$. We can also require linearity of the correction function and/or of the $\delta$-map;
    \item by requiring that not only the extended domain, but also the correction subspace is a $\operatorname{Gau}_{\hat{G}}(\hat{P})$-set. In this situation it is typically useful to require equivariance of $C$ and $\delta$, so that the whole structure is equivariant. 
\end{enumerate}
In the first case, we say that the extension is \textit{partially linear} (resp. \textit{linear}) if the extended domain and the correction subspaces have linear structures (resp. if in addition $C$ and $\delta$ are linear). Notice that since the action of $\operatorname{Gau}_{\hat{G}}(\hat{P})$ preserves the linear structure of $\Omega^{1}_{eq}(\hat{P};\hat{\mathfrak{g}})$, it follow that it also preserves the linear structure of the vector subspace $\operatorname{Conn}(\hat{P};\hat{\mathfrak{g}})$. In the second case, we say that the extension is \textit{weakly equivariant} if the correction subspace is $\operatorname{Gau}_{\hat{G}}(\hat{P})$-invariant and \textit{equivariant} if in addition $C$ and $\delta$ are equivariant. Notice that if $\delta \neq id$, then we may have equivariant extensions of YMT which are not gauge invariant. The linear context is useful when studying complete extensions \cite{complete_extension}, while the equivariant context in useful in the study of incomplete extensions \cite{gauge_breaking_YMT_extensions}. 

\end{remark}
\subsection{Trivial Extensions}

\quad\;\, Each YMT theory admits some trivial extensions, as described in the following examples.

\begin{example}[\textit{null-type extensions}]\emph{\label{null_extension}
Every YMT $S^G$ admits an  extension relative to any basic extension $\imath:G \hookrightarrow \hat{G}$ and whose space is any given $\operatorname{Gau}_{\hat{G}}(\hat{P})$-invariant subset $\widehat{\operatorname{Conn}}(\hat{P};\hat{\mathfrak{g}})$ satisfying (\ref{domain_extension}). It is obtained by taking the null extended functional $\hat{S}^{\hat{G}}=0$, null correction subspace, i.e, $C^1(\hat{P};\hat{\mathfrak{g}})=0$, null  correction functional $C=0$ and $\delta =0$. Condition  \ref{decomposition_extension} is immediately satisfied. We say that this is the \emph{null extension of $S^G$ with extended space $\widehat{\operatorname{Conn}}(\hat{P};\hat{\mathfrak{g}})$}. In the case when the extended space is the space of connections $\operatorname{Conn}(\hat{P};\hat{\mathfrak{g}})$, we say simply that it is the \emph{null extension of $S^G$}. This is an example of an extension which is equivariant even if $S^G$ is not gauge invariant.}
\end{example}

\begin{example}[identity extensions]\label{identity_extension}\emph{
For every gauge invariant\footnote{Here is an example where the hypothesis of gauge invariance is needed.} YMT $S^G$ we can assign the \textit{identity extension}, relative to the identity basic extension $id:G\hookrightarrow G$, defined as follows. The extended domain and the correction subspace are 
\begin{equation}\label{identity_extension_domain}
    \widehat{\operatorname{Conn}}(P;\mathfrak{g})=\operatorname{Conn}(P;\mathfrak{g})=C^1(P;\mathfrak{g}).
\end{equation}
The $\delta$-map is the identity of $\operatorname{Conn}(P;\mathfrak{g})$ and $C\equiv 0$. We clearly have $S^G\circ \delta +C=S^G$, so that this realizes $S^G$ as a complete extension of itself.}
\end{example}

\begin{example}[equivariant choice extension]\emph{As in the last example, consider the identity basic extension $\operatorname{id}:G\hookrightarrow G$, with same correction subspace $C^1(P;\mathfrak{g})=\operatorname{Conn}(P;\mathfrak{g})$, with same correction functional $C\equiv 0$ and with same $\delta$-map, i.e, $\delta = \operatorname{id}$. But instead of taking the extended domain as in (\ref{identity_extension_domain}), take $\widehat{\operatorname{Conn}}(P;\mathfrak{g})$ as some $\operatorname{Gau}_G(P)$-invariant subset of $\Omega ^1(P;\mathfrak{g})$ containing $\operatorname{Conn}(P;\mathfrak{g})$, i.e, which satisfies (\ref{domain_extension}). Then, by the equivariant version of the Axiom of Choice\footnote{More precisely, if we assume the Axiom of Choice, then for every group $G$, in the corresponding category of $G$-sets, every mono and every epi are split \cite{category_G_sets}.}, it follows that there are $\operatorname{Gau}_G(P)$-equivariant retracts, as below. If $r$ is one such retract, define $\hat{S}^G=S^G\circ r$. Then $\hat{S}^G\vert_{C^1(P;\mathfrak{g})}=S^G = S^G\circ \delta +C$, which means that  $\hat{S}^G$ is another extension of $S^G$. But notice that it is just the identity extension with a larger extended domain arising from the Axiom of Choice.
$$
\xymatrix{
\operatorname{Conn}(P;\mathfrak{g}) \ar@/_{0.5cm}/[rr]_{\operatorname{id}} \ar@{^(->}[r]& \ar[r]^-{r} \widehat{\operatorname{Conn}}(P;\mathfrak{g})  &  \operatorname{Conn}(P;\mathfrak{g})}
$$
} 
\end{example}
\begin{remark}
A look at the last example reveals that one can actually extend the domain of every equivariant extension such that $\delta=\operatorname{id}$. In particular, we can take $\widehat{\operatorname{Conn}}(P;\mathfrak{g})=\Omega^1_{eq}(P;\mathfrak{g})$ in order to get a full extension. For a detailed discussion, see \cite{gauge_breaking_YMT_extensions}. 
\end{remark}

\begin{example}[constant extensions]\emph{ Notice that if $c\in \mathbb{R}$ is any real number and $X$ is any $G$-set, then the constant function $f$ with value $c$ is $G$-invariant, because $f(g\cdot x)=c=f(x)$. With this in mind, fixed a basic extension $\imath:G\hookrightarrow \hat{G}$, let $\widehat{\operatorname{Conn}}(\hat{P};\hat{\mathfrak{g}})$ be any $\operatorname{Gau}_{\hat{G}}(\hat{P})$-invariant set satisfying (\ref{domain_extension}) and let $C^1(\hat{P};\hat{\mathfrak{g}})$ be any subset of it. Given a real number $c\in \mathbb{R}$ and a connection $D_0$ in $P$ we will build an extension for every YMT theory $S^G$ whose extended domain and whose correction subspace are the above. Define $\delta :C^1(\hat{P};\hat{\mathfrak{g}})\rightarrow \operatorname{Conn}(P;\mathfrak{g}$) as the constant map in $D_0$. Define $C:C^1(\hat{P};\hat{\mathfrak{g}})\rightarrow \mathbb{R}$ as the constant map with value $c$. Define $\hat{S}^{\hat{G}}:\widehat{\operatorname{Conn}}(\hat{P};\hat{\mathfrak{g}}) \rightarrow \mathbb{R}$ as the constant map in $S^G[D_0]+c$, which is $\operatorname{Gau}_{\hat{G}}(\hat{P})$-invariant by the previous remark. Furthermore, we clearly have $\hat{S}^{\hat{G}}\circ \jmath=S^G\circ \delta + C$, so that the described data really defines an extension of $S^G$.} 
\end{example}

An extension which is not of the types above is called \textit{nontrivial}. Of course, we will be more interested in these ones. There are a lot of well-known examples of nontrivial extensions for specific kinds of YMT theories. In the following we will describe some of them. We begin with a detailed example.

\subsection{A Detailed Example: Higgs Mechanism}\label{sec_higgs}

\quad\;\,It is well known that the choice of a specific scalar vacuum field with non zero expectation value coupling with the YM gauge field causes a spontaneous symmetry breaking (SSB) of the classical YM theory  \cite{HIGGS1964132,PhysRevLett.13.508,PhysRevLett.13.321}. Here we will see that any equivariant correction of a YMT $S^G$ theory involving the classical Higgs fields (in particular the corresponding Yang-Mills-Higgs (YMH) theory) can be viewed as an extension of $S^G$ in the sense of the previous section. 

Recall that for a principal $\hat{G}$-bundle $\hat{P}$ and for a closed Lie subgroup $G\subset \hat{G}$, the space of \emph{classical Higgs fields}, denoted by $\operatorname{Higgs}(\hat{P};G)$, is the space $\Gamma (\hat{P}/G)$ of global sections $\phi$ of the quotient bundle $\hat{P}/G$, which can be identified with the associated bundle $\hat{P}\times_{\hat{G}} (\hat{G}/G)$ \cite{sardanashvily2016classical,v.Westenholz1980}. But sections of this associated bundle are in 1-1 correspodence with $G$-reductions of $\hat{P}$ \cite{kobayashi1996foundations}, so that classical Higgs fields are in bijection with these $G$-reductions. In this sense, for a classical field theory $S[s]$ depending on $\hat{P}$, we say that \emph{the $\hat{G}$-symmetry  is spontanously broken to a $G$-symmetry} when such a classical Higgs field exists. In this case, the action functional $S$ couples with the Higgs fields by means of adding a term  $C[s,\phi]$ depending on both the fields, so that we have a total action $S_C=S+C$. Thus, in the case of a YM theory $S^G$ on $P$, we have $S^G_C[D,\phi]=S^G[D]+C[D,\phi]$, suggesting that one can regard  the functionals $S^G_C$ as incomplete extensions of $S^G$. 

Let $P$ be a principal $G$-bundle and let $\imath:G\hookrightarrow \hat{G}$ be a basic extension of $G$. The space $\operatorname{Higgs}(\hat{P};G)$ is clearly nonempty, since $\hat{P}$ arises from $P$ by the basic extension. We say that $\imath:G\hookrightarrow \hat{G}$ is \textit{coherent} if it becomes endowed with a map $\theta:\hat{G}/G\rightarrow \mathfrak{\hat{g}}$ satisfying the compatibility relation $\theta(aG) = Ad_{a}(\theta(G))$, where $a\in \hat{G}$. In this case, we have a bundle morphism $\Theta:\hat{P}\times_{\hat{G}} (\hat{G}/G)\rightarrow E_{\hat{\mathfrak{g}}}$, which is fiberwise given by $\theta$. Consequently, we have also have an induced map $\Theta_*:\operatorname{Higgs}(\hat{P};G)\rightarrow \Gamma (E_{\hat{\mathfrak{g}}})$. Now, notice that given a connection $\hat{D}$ in $\hat{P}$, its covariant derivative $d_{\hat{D}}:\Omega^0(\hat{P};\hat{\mathfrak{g}})\rightarrow \Omega^1(\hat{P};\hat{\mathfrak{g}})$ induces a covariant derivative $\nabla_{\hat{D}}:\Gamma (E_{\hat{\mathfrak{g}}})\rightarrow \Omega^1(M; E_{\hat{\mathfrak{g}}})$ on the adjoint bundle. We can then take the image of the following composition: 
$$
\xymatrix{\ar@/_{0.5cm}/[rrrr]_{\Theta_{*,\hat{D}}} \operatorname{Higgs}(\hat{P};G) \ar[r]^-{\Theta_{*}} & \Gamma (E_{\hat{\mathfrak{g}}}) \ar[r]^-{\nabla_{\hat{D}}} & \Omega^1(M; E_{\hat{\mathfrak{g}}}) \ar[r]^-{\simeq}  & \Omega^1_{heq}(\hat{P};\hat{\mathfrak{g}}) \ar@{^(->}[r] & \Omega^1_{eq}(\hat{P};\hat{\mathfrak{g}}). 
}
$$

In particular, one can take the union of $\operatorname{img}(\Theta_{*,\hat{D}})$ over all $\hat{D}\in \operatorname{Conn}(\hat{P};\hat{\mathfrak{g}})$, obtaining a subset $\Theta^1_{\operatorname{conn}}(\hat{P};\hat{\mathfrak{g}})\subset \Omega^1_{eq}(\hat{P};\hat{\mathfrak{g}})$ independent of the choice of $\hat{D}$. As one can check, since $\Theta_{*}$ is equivariant, this subset is invariant under the canonical action by pullbacks of $\operatorname{Gau}_{\hat{G}}(\hat{P})$. Thus, the union $\Theta^1_{\operatorname{conn}}(\hat{P};\hat{\mathfrak{g}})\cup \operatorname{Conn}(\hat{P};\hat{\mathfrak{g}})$ is also $\operatorname{Gau}_{\hat{G}}(\hat{P})$-invariant and clearly satisfies (\ref{domain_extension}). Notice that this union is disjoint except maybe for the null 1-form. This means that the first inclusion in the diagram below is well defined, so that we can consider the corresponding composition. 

\begin{equation}\label{delta_higgs}
    \xymatrix{\ar@/_{0.5cm}/[rr]_{\delta} \Theta^1_{\operatorname{conn}}(\hat{P};\hat{\mathfrak{g}}) \ar@{^(->}[r] & \Theta^1(\hat{P};\hat{\mathfrak{g}})\times \operatorname{Conn}(\hat{P};\hat{\mathfrak{g}}) \ar[r]^-{\operatorname{pr}_2 } & \operatorname{Conn}(\hat{P};\hat{\mathfrak{g}})}
\end{equation}

Given any $\operatorname{Gau}_{\hat{G}}(\hat{P})$-invariant functional $C:\Theta^1_{\operatorname{conn}}(\hat{P};\hat{\mathfrak{g}})\rightarrow \mathbb{R}$ and any gauge invariant YMT theory $S^{\hat{G}}$, we see that the construction above realizes the sum $\hat{S}^{\hat{G}}_C=S^{\hat{G}}\circ \delta + C$ as an extension of the  YMT theory  $S^{\hat{G}}$, relative to the trivial basic extension $id:\hat{G}\hookrightarrow \hat{G}$, with extended domain and correction subspaces given by 
$$
\widehat{\operatorname{Conn}}(\hat{P};\hat{\mathfrak{g}})=\Theta^1_{\operatorname{conn}}(\hat{P};\hat{\mathfrak{g}})=C^1(\hat{P};\hat{\mathfrak{g}}),
$$ 
whose $\delta$-map is that given in (\ref{delta_higgs}) and whose correction functional is the one given $C$. In particular, if $S^{\hat{G}}$ is a classical Yang-Mills theory with pairing $\langle \cdot, \cdot \rangle_{\hat{\mathfrak{g}}}$ and $$C(\hat{D},\phi)=\langle \Theta_{*,\hat{D}}\phi, \Theta_{*,\hat{D}}\phi \rangle +V(\phi),$$ where $V(\phi)$ is some potential, then  $\hat{S}^{\hat{G}}_C$ is precisely the action functional of the corresponding Yang-Mills-Higgs theory. Therefore, \textit{the Yang-Mills-Higgs theory is an example of an extension of YM theories in the sense of previous section}. Notice that in this case, we actually obtain an example of equivariant extension. 

Another practical example of this construction is the following. 

\begin{example}[t'Hooft-Polyakov monopole]\emph{
Recall the identification $SO(3)/SO(2)\simeq \mathbb{S}^2$, so that  there is an injective map $SO(3)/SO(2)\rightarrow \mathbb{R}^3$. Under the Lie algebra identification   $\mathbb{R}^3\simeq \mathfrak{so}(3)$ one can check that $\theta(aSO(2))= Ad_{a}(\theta(SO(2)))$, so that the basic extension $SO(2)\hookrightarrow SO(3)$ is coherent and we have the bundle morphism $\Theta:{\hat{P}}\times_{SO(3)}(SO(3)/SO(2))\rightarrow E_{\mathfrak{so}(3)}$. Fixing a local trivialization for $\hat{P}$ and parameterizing $SO(3)$ by the Euler angles $\varphi, \alpha,\psi$, we see that the map $\Theta _*:\Gamma(\operatorname{Higgs}(\hat{P};SO(2))\rightarrow \Gamma (E_{\mathfrak{so(3)}})$, via the fiberwise identification of $\theta$, takes the following form, which is the classical expression of the normalized Higgs field in the t'Hooft-Polyakov monopole description \cite{v.Westenholz1980}: 
\begin{equation*}
    \Theta_{*}(\phi) = \begin{pmatrix}
    0 & -\operatorname{cos}(\alpha) & -\operatorname{cos}(\varphi)\operatorname{sin}(\alpha)\\
    \operatorname{cos}(\alpha) & 0 & -\operatorname{sin}(\varphi)\operatorname{sin}(\alpha) \\
    \operatorname{cos}(\varphi)\operatorname{sin}(\alpha) & \operatorname{sin}(\varphi)\operatorname{sin}(\alpha) & 0 
    \end{pmatrix}\simeq \begin{pmatrix}\operatorname{sin}(\varphi)\operatorname{sin}(\alpha) \\
    -\operatorname{cos}(\varphi)\operatorname{sin}(\alpha) \\
    \operatorname{cos}(\alpha)
\end{pmatrix}.
\end{equation*}}
\end{example}
\subsection{Further Examples}\label{sec_further_examples}

\quad\;\, Some additional manifestations of extensions of YMT theories are the following. It is not our aim to exhaust the list of all known examples, but only to show the wide range of applicability of the definitions in Section \ref{sec_extensions}.

\begin{example}[BF Theory]\label{example_BF}
\emph{Let $P$ be a principal $G$-bundle on a compact oriented 4-dimensional manifold $(M,\omega)$ and let $\langle\cdot,\cdot\rangle$ be a pairing in horizontal equivariant 2-forms. In analogy to the classical BF theories \cite{Baez:1999sr,derek}, define the \textit{BF-type theory} with the given pairing as the functional
\begin{equation}\label{bfaction}
  S_{BF}:\operatorname{Conn}(P,\mathfrak{g})\times \Omega^{2}_{he}(P,\mathfrak{g})\rightarrow \mathbb{R} \quad \text{such that}\quad   S_{BF}(D,B) = \int_{M}\langle F_{D},B\rangle \omega.
\end{equation}
Now, consider the identity basic extension $id:G\hookrightarrow G$ and for $n=4$ look at the map $\operatorname{curv}:\operatorname{Conn}(P;\mathfrak{g})\rightarrow \Omega ^2_{heq}(P;\mathfrak{g})$ assigning to each connection $D$ in $P$ its curvature 2-form $F_D$. Let $\operatorname{grph}(\operatorname{curv})$ be its graph. The projection $\operatorname{pr}:\operatorname{grph}(\operatorname{curv})\rightarrow \operatorname{Conn}(P;\mathfrak{g})$ is a bijection whose inverse is the diagonal map $\Delta_{\operatorname{curv}}$. Notice that $S_{BF}\circ \Delta_{\operatorname{curv}}=S^G$, where $S^G$ is a gauge invariant YMT theory with the given pairing. Thus, with the same extended domain, same correction subspace, same $\delta$-map and same correction functional, we see that  $S_{BF}\circ \Delta_{\operatorname{curv}}$ is an extension of $S^G$ which in some sense is \emph{equivalent} to the identity extension of Example \ref{identity_extension}. In Section \ref{sec_category_extensions} we will see that they are actually isomorphic as objects of the category of extensions.}
\end{example}
 

\begin{example}[Higgs again]\emph{
Let us revisit our discussion of Subsection \ref{sec_higgs}. For the present discussion, we will look at Higgs fields $\phi\in \operatorname{Higgs}(\hat{P};G)$ which are \textit{$\hat{D}$-Higgs vacuum}, i.e., which are parallel with respect to some exterior covariant derivative $\nabla_{\hat{D}}$, i.e, $\nabla_{\hat{D}}\phi = 0$. In our context we note that the existence of a Higgs field is guaranteed by the definition of the extension, once that $P$ is realized as a $G$-reduction of $\hat{P}$ by the monomorphism $\imath: P\rightarrow \hat{P}$, but, a priori, those \textit{$\hat{D}$-Higgs vacuum} need not exist. Indeed, a $\hat{G}$-connection $\hat{D}$ in $\hat{P}$ reduces to a $G$-connection in $P$ iff $P$ is obtained from a $\hat{D}$-parallel Higgs field vacuum \cite{kobayashi1996foundations}. Thus, let $\operatorname{Higgs}_0(\hat{P};G)\subset \operatorname{Higgs}(\hat{P};G)$ be the set of Higgs vacuum. For a fixed $\phi_0\in \operatorname{Higgs}_0(\hat{P};G)$, define $C^{1}(\hat{P},\hat{\mathfrak{g}}) = \{\hat{D}\in \operatorname{Conn}(\hat{P},\hat{\mathfrak{g}})| \nabla_{\hat{D}}\phi_{0} = 0\}$. By the equivariance of the exterior covariant derivative it is clear that $C^{1}(\hat{P},\hat{\mathfrak{g}})$ is $\operatorname{Gau}_{\hat{G}}(\hat{P})$-invariant. By construction, there exists the map $\delta: C^{1}(\hat{P},\hat{\mathfrak{g}})\rightarrow \operatorname{Conn}(P;\mathfrak{g})$ which takes a connection in which $\phi_0$ is parallel and gives it reduction to $P$. Let $S^G$ be a gauge invariant YMT theory whose instanton sector is $P$. Then, for every $\operatorname{Gau}_{\hat{G}}(\hat{P})$-invariant set $\widehat{\operatorname{Conn}}(\hat{P};\hat{\mathfrak{g}})$  satisfying (\ref{domain_extension}), and every $\operatorname{Gau}_{\hat{G}}(\hat{P})$-invariant functional $C:C^{1}(\hat{P},\hat{\mathfrak{g}})\rightarrow \mathbb{R}$, the sum $\hat{S}^{\hat{G}}=S^G\circ \delta + C$ is an extension of $S^G$.
}
\end{example}

\begin{remark}
Note the difference: while in Subsection \ref{sec_higgs} the Yang-Mills-Higgs theory with group $\hat{G}$, relatively to the basic extension $\imath:G\hookrightarrow \hat{G}$, was realized as an extension of YMT theory with the \textit{same} group $\hat{G}$, in example above the Yang-Mills-Higgs with group $\hat{G}$ was realized as an extension of the YMT with \textit{smaller} group $G$. This naturally leads one to ask whether a YMT $S^{\hat{G}}$ can be itself regard as an extension of $S^G$. This was considered and studied in \cite{gauge_breaking_YMT_extensions}.
\end{remark}
\begin{remark}
Higgs-like mechanisms were observed to emerge as extensions of YMT theories in some other contexts. For instance, by the method of dimension reduction if the extension is allowed to lie in a higher dimensional space-time \cite{forgacs1980,Harnad} and if the Lie algebra of the YM theory is replaced by a Leibniz algebra giving the $2$-Higgs mechanism introduced in \cite{PhysRevD.99.115026}.
\end{remark}

\begin{example}[Full YM theories]\emph{ Let $P$ be a $G$-principal, $(M,g)$ a semi-Riemannian manifold and let $S^G$ the classical YM theory, whose pairing is given by $\langle \alpha,\beta \rangle_{\mathfrak{g}}=\operatorname{tr}(\alpha [\wedge]\star_g\beta)$. In dimension $n=4$ we have another pairing given by  $\langle \alpha,\beta \rangle'_{\mathfrak{g}}=\operatorname{tr}(\alpha [\wedge]\beta)$. In the literature, the action functional 
\begin{align}
    S_{\operatorname{full}}^G:\operatorname{Conn}(P;\mathfrak{g})\rightarrow \mathbb{R}\quad \text{given by}\quad S_{\operatorname{full}}^G[D]&=\int_M\langle F_D, F_D \rangle_{\mathfrak{g}}dvol_g + \int_M\langle F_D, F_D \rangle_{\mathfrak{g}}'dvol_g\\
     &= S^G[D]+S_{\operatorname{top}}^G[D]
\end{align}
is referred as the \textit{full YM theory} with gauge group $G$. Let us consider the identity basic extension $id:G\hookrightarrow G$ with extended domain and correction subspace as in (\ref{identity_extension_domain}). Thus, taking $\delta=id$ we see that the full YM theory is an extension of the YM theory with $C=S_{\operatorname{top}}$.}
\end{example}

\begin{example}[topological extensions]\emph{
The functional $S^G_{\operatorname{top}}$ of the last example is topological, since by Chern-Weil homomorphism the form $\operatorname{tr}(F_D[\wedge]F_D)$ is closed and therefore is realized in de Rham cohomology. More generally, supposing $n=2k$, let $\kappa\in \operatorname{inv}(\mathfrak{g})$ be a homogeneous invariant polynomial of degree $k$ in $\mathfrak{g}$ and define $S^G_{\kappa}[D]=\int_M\kappa(F_D[\wedge]\cdots [\wedge]F_D)$. Then there is an extension of the classical YM theory whose correction functional is $C=S^G_{\kappa}$.}
\end{example}
\begin{example}[background field]\emph{
Let $S^G$ be a gauge invariant YMT theory and let $E\rightarrow M$ be some field bundle. Let $C:\operatorname{Conn}(P;\mathfrak{g})\times \Gamma (E)\rightarrow \mathbb{R}$ be any $\operatorname{Gau}_G(P)$-invariant functional, where the gauge groups acts trivially on $\Gamma(E)$. Then, for every fixed $s\in\Gamma(E)$, the sum $S^G+C_s$, where $C_s[D]=C[D,s]$, is an extension of $S^G$ relatively to the identity basic extension $id:G\hookrightarrow G$. Thus, in particular, \textit{every background theory minimally coupled with a gauge invariant YMT theory induces an extension.}} 
\end{example}

In our definition of extension for a YMT, we constrained the extended functional $\hat{S}^{\hat{G}}$ to have domain satisfying (\ref{domain_extension}). Since the correction subspace is required to be a subset of the extended domain, the upper bound in (\ref{domain_extension}) also applies to $C^1(\hat{P};\hat{\mathfrak{g}})$. We notice that if we avoid this upper bound in both the extended domain and correction subspace, then a lot of new situations can be regarded as new examples. Let us call them ``partial examples'' of YMT extensions.

\begin{example}[non-commutative YM via Seiberg-Witten maps]\emph{In \cite{SW_map} it was argued the remarkable existence of a map assigning a noncommutative analogue $S_{\operatorname{nc}}$ to every classical gauge theory $S$. Furthermore, this noncommutative theory is supposed to be expanded in a noncommutative parameter $\theta$, i.e, $S_{\operatorname{nc}}=\sum_{i\geq 0} \theta^i S_i$, such that $S_0=S$. Let $\operatorname{NConn}(P;\mathfrak{g})$ be the space of those ``noncommutative connections'' in which $S_{\operatorname{nc}}$ is defined and consider the map $\delta:\operatorname{NConn}(P;\mathfrak{g})\rightarrow \operatorname{Conn}(P;\mathfrak{g})$ given by the ``commutative limit'' $\theta \rightarrow 0$. Thus, in the case of a YMT theory one could write $S_{\operatorname{nc}}=S^G\circ \delta + C$, where $C=\sum_{i\geq 1}\theta^iS_i$, leading us to ask if $S_{\operatorname{nc}}$ can be considered as an extension of $S^G$. Although the space $\operatorname{Conn}(P;\mathfrak{g})$ was not rigorously defined, there is a more fundamental problem in regarding $S_{\operatorname{nc}}$ as an extension. Notice that a priori one can take each $S^G\circ \delta+\theta^i S_i$ as an extension of $S^G$. Since as in Subsection \ref{sec_monoid} we will prove that the sum of extensions remains an extension, we conclude that $\hat{S}^G_{k}=S^G\circ \delta + S_{\operatorname{nc},k}$, where $S_{\operatorname{nc},k}=\sum_{1\leq i \leq k} \theta^iS_i$, is an extension of $S^G$ for every $k$. The problem may occur in the limiting process $k\rightarrow \infty$, since in this case we could analyze the convergence of a series in the space of all extensions of $S^G$. In Subsection \ref{sec_bundle} we prove that this space has a natural topology when $P$ is compact, so that this convergence could be really considered. This, however, is outside of the scope of this work, so that we only present an speculation:
\begin{itemize}
    \item \textbf{Speculation}. \emph{Suppose $P$ compact.  Then  $\hat{S}^G_{k}$ is an extension of $S^G$ for every $k>0$. Furthermore, the limit $\lim _{k\rightarrow \infty}\hat{S}^G_{k}$ exists in the topology of Subsection \ref{sec_bundle} and is $S^G\circ\delta +S_{\operatorname{nc}}$.}  
\end{itemize}}
\end{example}

\begin{example}[deformations]
\emph{Analogous discussions and speculations of the last example apply to other ways to deform the action functional of  YMT theory via auxiliary parameters, such as those in \cite{YM_deformation_1,YM_deformation_2,YM_deformation_CS,rivelles}}.
\end{example}

\begin{example}[tensorial YM theories]\emph{In the sequence of papers \cite{tensor_gauge_1,tensor_gauge_2,tensor_gauge_3} it was introduced and studied what the authors called \textit{non-abelian tensor gauge theories}. These are some version of higher gauge theory such that vector potentials of arbitrary degree, such as $A_{i_1,i_2,...}$, are allowed. In particular, its Lagrangian density is suppose to be the sum $\mathcal{L}=\sum_{i\geq 1}g_i\mathcal{L}_i$, where $g_1=1$ and $\mathcal{L}_1$ is the Lagrangian of the classical YM theory. Furthermore, $\mathcal{L}_i$, with $i>1$ depends on a gauge field of higher degree. Let $\operatorname{Tens}(P;\mathfrak{g})$ be the space of all those ``higher gauge fields'' and consider the map $\delta:\operatorname{Tens}(P;\mathfrak{g}) \rightarrow \operatorname{Conn}(P;\mathfrak{g})$ which projects onto the space of gauge-fields of degree one. Then $\int\mathcal{L}=S^G\circ \delta + \sum_{i\geq 2} \int \mathcal{L}_i$ and one could ask, in analogy to the previous examples, whether this realizes $\mathcal{L}$ as an extension of $S^G$. Here we have an additional problem: $\delta:\operatorname{Tens}(P;\mathfrak{g})$ is too large to be contained in $\Omega^1_{eq}(P;\mathfrak{g})$. Thus, this cannot be an extension domain. But, even if we admit larger extended domains, the results of Subsection \ref{monoid_structure}  and of Subsection \ref{bundle} cannot be applied here. }
\end{example}

In trying to avoiding misunderstandings, we close this section giving some non-examples in which the expression ``Yang-Mills extensions'' is used in a sense which is (as far as the authors know) completely unrelated with the notion introduced here: the well-known notion of \textit{supersymmetric extensions}, the \textit{stringy extensions} of \cite{YM_extension_stringy}, the \textit{Yang-Mills families} of \cite{YM_extension_families} and the extensions studied in \cite{YM_extension_book}.  

\subsection{The Role of Emergence Phenomena} \label{sec_emergence}

\quad\;\,Emergence phenomena arise in the most different sciences: biology, arts, philosophy, physics, and so on. In each of these contexts the term ``emergence'' has a different meaning \cite{emergence_0,emergence_1,emergence_2}, but all are about observing characteristics of a system in another system in an unexpected way. In \cite{emergence_yuri} the authors suggested an axiomatization for the notion of emergence phenomena between field theories. Here we will show the surprising fact that in the theories emerging from a YMT theory the $S^G$ itself can be regarded as a new nontrivial example of an extension of $S^G$.

Following \cite{emergence_yuri} we begin by recalling that a \textit{parameterized field theory} over a smooth manifold $M$ consists of bundles $E\rightarrow M$ (the field bundle) and $P\rightarrow M$ (the \textit{parameter bundle}), a set of \textit{parameters} $\operatorname{Par}(F)\subset \Gamma(F)$, a set of \textit{field configurations} $\operatorname{Conf}(E)$ and a family of functionals $S_{\varepsilon}:\operatorname{Conf}(E)\rightarrow \mathbb{R}$, with $\varepsilon\in \operatorname{Par}(F)$. The most interesting cases occur when dependence of $S_{\varepsilon}$ on $\varepsilon$ is only on certain differential operators. More precisely, we say that a parameterized theory is of \textit{differential type} if there is a 
map $D_{-}:\operatorname{Par}(F)\rightarrow \operatorname{Diff}(E)$ assigning to each parameter $\varepsilon$ a differential operator $D_{\varepsilon}: \Gamma(E)\rightarrow \Gamma(E)$ in $E$ and a pairing  $\langle \cdot , \cdot \rangle$ in $\Gamma(E)$ such that
  $S_{\varepsilon}(s)=\int\langle s,D_{\varepsilon}s \rangle$.   In this case, $\operatorname{Conf}(E)=\Gamma (E)$.

Let $S^G$ be a YMT. We will say that it has \text{perfect pairing} if the underlying pairing defining it is a perfect pairing. For instance, every Yang-Mills theory has perfect pairing.

\begin{lemma}\label{lemma_YMT_parameterized}
Under the choice of a connection $D_0$ in the instanton sector $P$, every YMT theory $S^G$ with perfect pairing can be naturally regarded as a parameterized field theory of differential type for any given set of parameters $\operatorname{Par}(F)$.
\end{lemma}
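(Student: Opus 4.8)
The plan is to use the fixed connection $D_0$ to linearize the configuration space, to read the Yang–Mills-type action as a functional on sections of a single field bundle, and then to invoke perfectness of the pairing to represent that functional through one (nonlinear) differential operator via a gradient/homotopy formula. First I would take the field bundle to be $E = T^{*}M \otimes E_{\mathfrak{g}}$, so that $\Gamma(E) \simeq \Omega^{1}(M;E_{\mathfrak{g}}) \simeq \Omega^1_{heq}(P;\mathfrak{g})$. The choice of $D_0$ supplies the affine identification $\operatorname{Conn}(P;\mathfrak{g}) \simeq \Gamma(E)$, $D \mapsto A := D - D_0$, under which the configuration space becomes exactly $\operatorname{Conf}(E) = \Gamma(E)$, as demanded by the differential-type axiom. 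Since the claim must hold for an arbitrary prescribed $\operatorname{Par}(F)$, I would use the constant family: no genuine parameter dependence is introduced, the bundle $F$ and $\operatorname{Par}(F)$ are carried along formally, and one sets $S_\varepsilon = S^G$ and $D_\varepsilon = D$ for every $\varepsilon$. This is precisely what makes the construction insensitive to the parameter set.

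Next I would rewrite the action as a functional of $A$. From $F_{D_0 + A} = F_{D_0} + d_{D_0}A + \tfrac12[A \wedge A]_{\mathfrak{g}}$ the curvature is a nonlinear first-order differential operator $\mathcal{K}(A) = F_{D_0+A}$, and $S^{G}(A) = \int_M \langle \mathcal{K}(A), \mathcal{K}(A)\rangle\, dvol_g$. The key step is to produce a differential operator $D:\Gamma(E)\to\Gamma(E)$ and a pairing $\langle\cdot,\cdot\rangle_E$ on $\Gamma(E)$ with $S^G(A)=\int_M\langle A, D(A)\rangle_E\, dvol_g$. Here perfectness enters decisively: the variational derivative $h \mapsto \tfrac{d}{dt}\big|_{0}S^G(A+th)$ is an $\mathbb{R}$-linear functional on $\Gamma(E)$, and perfectness of the pairing is exactly what lets it be represented as $\int_M \langle h, \operatorname{grad}S^G(A)\rangle_E\, dvol_g$ for a well-defined section $\operatorname{grad}S^G(A)\in\Gamma(E)$ — for a Yang–Mills-type Lagrangian this gradient is the nonlinear second-order Euler–Lagrange operator of type $d^{*}_{D_0+A}F_{D_0+A}$. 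The homotopy formula then gives
\[
S^G(A) - S^G(0) = \int_0^1 \tfrac{d}{dt} S^G(tA)\, dt = \int_M \Big\langle A,\ \underbrace{\int_0^1 \operatorname{grad}S^G(tA)\, dt}_{=:\, D(A)}\Big\rangle_E\, dvol_g,
\]
and since each $\operatorname{grad}S^G(tA)$ is polynomial in the jet of $A$, the $t$-averaged $D$ factors through a finite jet bundle, hence is a genuine (nonlinear) differential operator $\Gamma(E)\to\Gamma(E)$.

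The main obstacle is the constant $S^G(0) = \int_M \langle F_{D_0}, F_{D_0}\rangle\, dvol_g$: because $\int_M \langle A, D(A)\rangle_E$ necessarily vanishes at $A = 0$, an on-the-nose homogeneous representation forces $S^G(0)=0$, and no homogeneous pairing can repair a nonzero value. I expect this to be the crux. The resolution must come from the way $D_0$ is used: when the instanton sector $P$ admits a flat connection one centers the field at $F_{D_0}=0$; otherwise the identification holds only up to this additive (topological) constant, and one must check that the working notion of differential-type theory tolerates such a shift (or encodes it through the freedom in $\operatorname{Par}(F)$). The two points genuinely requiring care are therefore this centering and the verification that perfectness of the pairing is what makes $\operatorname{grad}S^G$ well-defined — this being the only place the hypothesis is used. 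The remaining identifications ($\operatorname{Conf}(E)=\Gamma(E)$, the jet-bundle factorization of $D$, and naturality in $\operatorname{Par}(F)$) are routine.
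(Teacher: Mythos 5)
Your setup (field bundle $E=T^{*}M\otimes E_{\mathfrak{g}}$, affine identification via $D_0$, constant parameter family) matches the paper's, and you correctly locate where perfectness must enter: it is what produces an adjoint. But the mechanism you use to manufacture the operator does not close. The first-variation/homotopy-formula route only represents the \emph{difference} $S^G(A)-S^G(0)$ in the form $\int_M\langle A, D(A)\rangle$, and, as you yourself observe, the constant $S^G(0)=\int_M\langle F_{D_0},F_{D_0}\rangle\,dvol_g$ is generically nonzero and cannot be absorbed into a bilinear pairing evaluated at $A=0$. You leave this unresolved (``the resolution must come from\dots''), so as written the proposal does not prove the lemma; this is the genuine gap.

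The paper avoids the constant entirely by never expanding the curvature around $D_0$. Its argument is a one-line adjoint manipulation: it writes the curvature as a first-order operator applied to the connection itself, $S^G(D)=\int\langle d_D D, d_D D\rangle$, invokes perfectness of the pairing to obtain the adjoint $d_D^{*}$, and concludes $S^G(D)=\int\langle D, d_D^{*}(d_D D)\rangle$, which is already the differential-type form with the (nonlinear) operator $D\mapsto d_D^{*}(d_D D)$ and the constant assignment $\operatorname{cst}(\varepsilon)=d_D^{*}\circ d_D$. There is no leftover constant because the representation is exact rather than a gradient/homotopy expansion about a base point. Your worry is not baseless --- the identity $F_D=d_D D$ and the question of which element is actually being paired on the left (it must be the horizontal form corresponding to $D-D_0$) are treated informally in the paper --- but the intended proof is this direct adjoint identity, and the gradient-plus-homotopy substitute you propose genuinely fails on the constant term.
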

\begin{proof}
Recalling that $ \operatorname{Conn}(P;\mathfrak{g})$ is an affine space of $\Omega ^1(M;E_\mathfrak{g})$, the choice of $D_0$ allows us to regard the action functional $S^G$ as defined in $\Omega ^1(M;E_\mathfrak{g})\simeq \Gamma (TM^*\otimes E_{\mathfrak{g}})$. Furthermore, since the underlying pairing $\langle \cdot , \cdot \rangle $ is perfect, $d_D$ has an adjoint which we denote by $d^{*}_D$, so that $$
S^G(D)=\int\langle d_DD, d_DD \rangle =\int\langle D, d^{*}_D(d_DD) \rangle. 
$$
For every $\operatorname{Par}(F)$ take the constant function $\operatorname{cst}:\operatorname{Par}(F)\rightarrow \operatorname{Diff}(TM^*\otimes E_{\mathfrak{g}})$ given by $\operatorname{cst}(\varepsilon)=d^{*}_D\circ d_D$. Thus $S_{\varepsilon}(D)=S^G(D)$ for every $\varepsilon$, which  clearly  regards $S^G$ as a paremeterized theory of differential type with set of parameters $\operatorname{Par}(F)$.
\end{proof}

We say that a parameterized theory $S_{1}$, with field bundle $E_1$ and set of parameters $\operatorname{Par}(F_1)$,  \textit{strongly emerges} from another parameterized theory $S_2$, with field bundle $E_2$ and set of parameters $\operatorname{Par}(F_2)$, if there are maps $F:\operatorname{Par}(F_1)\rightarrow \operatorname{Par}(F_2)$ and $G: \operatorname{Conf}(E_1)\rightarrow \operatorname{Conf}(E_2)$ such that $S_{1,\varepsilon}(s)=S_{F(\varepsilon)}(G(s))$ for every $\varepsilon$ and $s$. The pair $(F,G)$ is called a \textit{strong emergence phenomenon} between $S_1$ and $S_2$.

\begin{theorem}

Let $S^G$ be a YMT theory with perfect pairing, let $F_1$ be a bundle and let $\operatorname{Par}(F_1)\subset \Gamma (F_1)$ be a subset. Regard $S^G$ as a parameterized theory with set of parameters $\operatorname{Par}(F_1)$, as in Lemma \ref{lemma_YMT_parameterized}. Suppose that $S^G$ emerges from another parameterized theory $S_2$, with 
$$
\operatorname{Conn}(\hat{P};\hat{\mathfrak{g}})\subseteq \operatorname{Conf}(E_2)\subseteq \Omega^1_{eq}(\hat{P};\hat{\mathfrak{g}}) 
$$ 
being $ \operatorname{Gau}_{\hat{G}}(\hat{P})$-equivariant. Then for every emergence phenomena $(F,G)$ between $S_2$ and $S^G$, for every $\varepsilon \in \operatorname{Par}(F_1)$ the action functional $S_{2,\varepsilon}$ is a complete extension of $S^G$.
\end{theorem}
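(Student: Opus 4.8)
The plan is to manufacture explicit extension data realizing $S_{2,F(\varepsilon)}$ as a complete extension of $S^G$, and then to verify the single defining identity (\ref{decomposition_extension}) with vanishing correction term. The basic extension $\imath:G\hookrightarrow\hat{G}$ is the one already implicit in the hypothesis --- the one for which $\hat{P}$ is the $\hat{G}$-bundle induced from $P$ --- and for the extended domain I would take $\widehat{\operatorname{Conn}}(\hat{P};\hat{\mathfrak{g}}):=\operatorname{Conf}(E_2)$, which by assumption is sandwiched as in (\ref{domain_extension}) and is $\operatorname{Gau}_{\hat{G}}(\hat{P})$-invariant. The extended functional is declared to be $\hat{S}^{\hat{G}}:=S_{2,F(\varepsilon)}$. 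For the correction data I would set $C^1(\hat{P};\hat{\mathfrak{g}}):=\operatorname{img}(G)\subseteq\operatorname{Conf}(E_2)$, which is nonempty since $\operatorname{Conn}(P;\mathfrak{g})\neq\emptyset$, take the correction functional $C\equiv 0$ (this is exactly what will make the extension \emph{complete}), and define the $\delta$-map by choosing, via the Axiom of Choice, a set-theoretic section $\sigma$ of $G$ onto its image, i.e.\ $G\circ\sigma=\operatorname{id}_{C^1}$, and putting $\delta:=\sigma$ (here connections are identified with sections of $TM^{\ast}\otimes E_{\mathfrak{g}}$ through the fixed $D_0$ used to invoke Lemma \ref{lemma_YMT_parameterized}).

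The verification of (\ref{decomposition_extension}) is then short and is the conceptual core. For $\hat{s}\in C^1$ write $\hat{s}=G(s)$ with $s=\sigma(\hat{s})$. Lemma \ref{lemma_YMT_parameterized} guarantees that the parameterized avatar $S^G_{\varepsilon}$ of $S^G$ is independent of the parameter and equals $S^G$, so $S^G(\sigma(\hat{s}))=S^G_{\varepsilon}(\sigma(\hat{s}))$; the emergence identity $S^G_{\varepsilon}(s)=S_{2,F(\varepsilon)}(G(s))$ then yields $S^G(\sigma(\hat{s}))=S_{2,F(\varepsilon)}(G(\sigma(\hat{s})))=S_{2,F(\varepsilon)}(\hat{s})=(\hat{S}^{\hat{G}}\circ\jmath)(\hat{s})$. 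Hence $\hat{S}^{\hat{G}}\circ\jmath=S^G\circ\delta=S^G\circ\delta+C$ with $C\equiv 0$, which is precisely the definition of a complete extension of $S^G$.

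The step demanding genuine care --- and the one I expect to be the main obstacle --- is confirming that $\hat{S}^{\hat{G}}=S_{2,F(\varepsilon)}$ is an admissible extended functional, i.e.\ that it is $\operatorname{Gau}_{\hat{G}}(\hat{P})$-invariant on the \emph{whole} of $\operatorname{Conf}(E_2)$ and not merely along $\operatorname{img}(G)$. The emergence identity only controls the values of $S_{2,F(\varepsilon)}$ on the image of $G$, so invariance away from that image cannot be extracted from emergence alone; it must come from the hypothesis that $\operatorname{Conf}(E_2)$ is a $\operatorname{Gau}_{\hat{G}}(\hat{P})$-set together with $S_2$ being a bona fide gauge theory on that domain (its functionals being invariant under the pullback action, as is built into the framework of \cite{emergence_yuri}). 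I would make this invariance explicit, since --- as the footnote to the definition of extension stresses --- the extended functional \emph{must} be gauge invariant, and this is exactly the ingredient that upgrades the purely algebraic factorization supplied by emergence into an honest extension.

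Finally, one bookkeeping remark: the index in the statement is a mild abuse, $S_{2,\varepsilon}$ with $\varepsilon\in\operatorname{Par}(F_1)$ denoting $S_{2,F(\varepsilon)}$ with $F(\varepsilon)\in\operatorname{Par}(F_2)$, and the definition of extension only asks the correction subspace to be a nonempty subset, a role that $C^1=\operatorname{img}(G)$ fulfills. Neither point affects the completeness conclusion above.
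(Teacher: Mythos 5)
Your argument is correct, but it runs the emergence maps in the opposite direction from the paper's own proof, and the discrepancy traces back to an ambiguity in the statement itself. You read ``$S^G$ emerges from $S_2$'' literally against the definition of strong emergence, so that $G:\operatorname{Conn}(P;\mathfrak{g})\rightarrow\operatorname{Conf}(E_2)$ and $S^G=S_{2,F(\varepsilon)}\circ G$; this forces you to shrink the correction subspace to $\operatorname{img}(G)$ and to invoke the Axiom of Choice for a section $\sigma$ of $G$, with $\delta=\sigma$. The paper instead takes its cue from the later phrase ``emergence phenomena $(F,G)$ between $S_2$ and $S^G$,'' which under its own convention puts $S_2$ in the first slot: $G:\operatorname{Conf}(E_2)\rightarrow\operatorname{Conn}(P;\mathfrak{g})$ and $S_{2,\varepsilon}=S^G\circ G$, whence it simply sets $\widehat{\operatorname{Conn}}(\hat{P};\hat{\mathfrak{g}})=\operatorname{Conf}(E_2)=C^1(\hat{P};\hat{\mathfrak{g}})$, $\delta=G$ and $C=0$ --- no choice of section, no restriction of the correction subspace, and in fact a \emph{fully} complete extension rather than merely a complete one. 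Both verifications of (\ref{decomposition_extension}) are sound for their respective readings; yours pays for the more literal reading with a weaker conclusion (completeness only on $\operatorname{img}(G)$ and a non-canonical $\delta$), while the paper's gets a canonical $\delta$ by presupposing the second reading of the hypothesis.

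Your closing concern is well placed and is not addressed by the paper's proof either: neither reading of the hypothesis literally guarantees that the extended functional $S_{2,F(\varepsilon)}$ (resp.\ $S_{2,\varepsilon}$) is $\operatorname{Gau}_{\hat{G}}(\hat{P})$-invariant on all of $\operatorname{Conf}(E_2)$, as the definition of an extension requires; only invariance of the domain is assumed. In the paper's reading this would follow from gauge invariance of $S^G$ together with equivariance of $G$, but neither is among the stated hypotheses (recall that a YMT theory may be gauge-breaking), so in either approach it must be imported as a standing assumption on $S_2$. Making that explicit, as you do, is an improvement on the published argument.
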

\begin{proof}
From Lemma \ref{lemma_YMT_parameterized} $S^G_{\delta}=S^G$ for every $\delta$, while from the definition of emergence phenomena we have $S_{2,\varepsilon}(s)=S^G{2,F(\varepsilon)}(G(s))$. Thus, for every $\varepsilon$ we have $S_{2,\varepsilon}=S^G\circ G$. Therefore, $S_{2,\varepsilon}$ is an extension of $S^G$ with $\widehat{\operatorname{Conn}}(\hat{P};\hat{\mathfrak{g}})=\operatorname{Conf}(E_2)=C^1(\hat{P};\hat{\mathfrak{g}})$ with $\delta = G$ and $C=0$.
\end{proof}

\section{Space of Extensions}\label{sec_space_extensions}
\quad\;\,Let $S^G$ be a YMT and, for a fixed basic extension $\imath:G\hookrightarrow \hat{G}$ of $G$, let $\operatorname{Ext}(S^G;\hat{G})$ denote the space of all extensions of $S^G$ whose underlying basic extension is $\imath$. Here we study the properties of this space. In Subsection \ref{sec_monoid} we prove that for each fixed extended domain $\widehat{\operatorname{Conn}}(\hat{P};\hat{\mathfrak{g}})$ the corresponding subset $\operatorname{Ext}(S^G;\widehat{\operatorname{Conn}}(\hat{P};\hat{\mathfrak{g}}))$ of extensions which has this extended domain is a commutative monoid with pointwise sum. In Subsection \ref{sec_group_ring} we prove that, given a group $\mathbb{G}$, then every $\mathbb{G}$-monoid structure on $\mathbb{R}$ induces for every commutative unital ring $R$ a $R[\mathbb{G}]$-module structure on $\operatorname{Ext}(S^G;\widehat{\operatorname{Conn}}(\hat{P};\hat{\mathfrak{g}}))$. Finally, in Subsection \ref{sec_bundle} we show that by means of varying the extended spaces $\widehat{\operatorname{Conn}}(\hat{P};\hat{\mathfrak{g}})$ we get a $R[\mathbb{G}]$-module bundle with no typical fiber which is continuous in a natural topology if $\hat{P}$ is compact, and we conjecture that it can be regarded as a subbundle of a trivial $R[\mathbb{G}]$-module bundle.

\subsection{Commutative Monoid Structure} \label{sec_monoid}
\quad\;\, Notice that, since an extension is a triple of maps $(\hat{S}^{\hat{G}},C,\delta )$ valued in spaces where addition is well-defined\footnote{Here we are working with a fixed connection $D_0$ in $\hat{P}$, so that $\operatorname{Conn}(\hat{P};\hat{\mathfrak{g}})$ becomes a real vector space.}, we could try to define the sum of two extensions as the pointwise sum of the underlying functions, which are considered in the intersection of the corresponding domains. We clearly have 
$$
(\hat{S}^{\hat{G}}_1+\hat{S}^{\hat{G}}_2)\vert_{C^1_1(\hat{P};\hat{\mathfrak{g}})\cap C^1_2(\hat{P};\hat{\mathfrak{g}})}=S^G\circ (\delta_1+\delta_2)+(C_1+C_2),
$$
but the intersection of two extended domains satisfying (\ref{domain_extension}) need not satisfy (\ref{domain_extension}) nor need to be $\operatorname{Gau}_{\hat{G}}(\hat{P})$-invariant. This leads us to restrict our attention to extensions which are \textit{compatible} meaning that 
$$
\operatorname{Conn}(\hat{P};\hat{\mathfrak{g}}) \subseteq \widehat{\operatorname{Conn}}_1(\hat{P};\hat{\mathfrak{g}})\cap \widehat{\operatorname{Conn}}_2(\hat{P};\hat{\mathfrak{g}}) \subseteq \Omega^{1}_{eq}(\hat{P};\hat{\mathfrak{g}})
$$
and that its intersection is $\operatorname{Gau}_{\hat{G}}(\hat{P})$-invariant. Furthermore, if $\hat{S}^{\hat{G}}_2$ is the some null extension of Example \ref{null_extension}, then $\hat{S}^{\hat{G}}_1+\hat{S}^{\hat{G}}_2$ is the restriction of $\hat{S}^{\hat{G}}_1$ to $C^1_1(\hat{P};\hat{\mathfrak{g}})\cap C^1_2(\hat{P};\hat{\mathfrak{g}})$. Notice that if two extensions have the same extended space, then they are automatically compatible. Let us call a $\operatorname{Gau}_{\hat{G}}(\hat{P})$-invariant set satisfying (\ref{domain_extension}) a \textit{extended domain relative to $\imath:G\hookrightarrow \hat{G}$}. Thus:

\begin{lemma}{\label{monoid_structure}}
For every extended domain $\widehat{\operatorname{Conn}}(\hat{P};\hat{\mathfrak{g}})$ relative to $\imath:G\hookrightarrow \hat{G}$, the collection $\operatorname{Ext}(S^G;\widehat{\operatorname{Conn}}(\hat{P};\hat{\mathfrak{g}}))$ of extensions of $S^G$ which has $\widehat{\operatorname{Conn}}(\hat{P};\hat{\mathfrak{g}})$ as extended space is an abelian monoid with the sum operation above.
\end{lemma}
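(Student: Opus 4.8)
The plan is to take the pointwise sum written down just before the statement as the monoid operation and then verify closure, commutativity, associativity, and the existence of a neutral element. Since both the basic extension $\imath$ and the extended domain $\widehat{\operatorname{Conn}}(\hat{P};\hat{\mathfrak{g}})$ are held fixed, the only data distinguishing two elements of $\operatorname{Ext}(S^G;\widehat{\operatorname{Conn}}(\hat{P};\hat{\mathfrak{g}}))$ are the extended functional $\hat{S}^{\hat{G}}$, the correction subspace $C^1(\hat{P};\hat{\mathfrak{g}})$ and the $\delta$-map. The first thing I would record is that, once these three are prescribed, the decomposition (\ref{decomposition_extension}) determines the correction functional uniquely, namely $C=\hat{S}^{\hat{G}}\circ\jmath-S^G\circ\delta$. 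Thus an element of the monoid is really a triple $(\hat{S}^{\hat{G}},C^1(\hat{P};\hat{\mathfrak{g}}),\delta)$, and checking that a candidate is again an extension amounts to checking the structural constraints on these pieces rather than re-proving an identity.

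For closure I would define the sum of $(\hat{S}^{\hat{G}}_1,C^1_1,\delta_1)$ and $(\hat{S}^{\hat{G}}_2,C^1_2,\delta_2)$ to have extended functional $\hat{S}^{\hat{G}}_1+\hat{S}^{\hat{G}}_2$, correction subspace $C^1_1(\hat{P};\hat{\mathfrak{g}})\cap C^1_2(\hat{P};\hat{\mathfrak{g}})$ and $\delta$-map $\delta_1+\delta_2$, the last using the vector-space structure on connections coming from the fixed reference connection. Then I would verify three things: that $\hat{S}^{\hat{G}}_1+\hat{S}^{\hat{G}}_2$ is again $\operatorname{Gau}_{\hat{G}}(\hat{P})$-invariant, which holds because a sum of invariant functionals is invariant; that the intersection $C^1_1\cap C^1_2$ still contains the zero form and remains inside the common extended domain, so it is an admissible correction subspace; and that the displayed identity preceding the statement exhibits the restriction of $\hat{S}^{\hat{G}}_1+\hat{S}^{\hat{G}}_2$ in the required form, so that (\ref{decomposition_extension}) is met with the forced correction functional. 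Because both extensions share the extended domain they are automatically compatible in the sense of the preceding paragraph, so no domain-matching obstruction arises.

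Commutativity and associativity I expect to fall out immediately, since each of the three components of the operation — pointwise addition of real functionals, intersection of correction subspaces, and pointwise addition of $\delta$-maps — is itself commutative and associative, and these properties lift componentwise to the triples. For the neutral element I would take the triple with $\hat{S}^{\hat{G}}\equiv 0$, with $\delta$ equal to the zero map into $\operatorname{Conn}(P;\mathfrak{g})$, and, crucially, with correction subspace equal to the whole extended domain $\widehat{\operatorname{Conn}}(\hat{P};\hat{\mathfrak{g}})$, so that intersecting against it leaves every correction subspace unchanged; its correction functional is then whatever (\ref{decomposition_extension}) forces it to be. I would stress that this is \emph{not} the null extension of Example \ref{null_extension}, whose correction subspace is only $\{0\}$ and would collapse every correction subspace under the sum.

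The step I expect to be the main obstacle is keeping the decomposition (\ref{decomposition_extension}) honest across the sum, because $S^G$ is quadratic rather than linear in the connection, so $S^G\circ(\delta_1+\delta_2)$ is not literally $S^G\circ\delta_1+S^G\circ\delta_2$. The resolution is exactly the observation made at the outset: since the correction functional is determined by the extended functional, the correction subspace and the $\delta$-map, the summed triple is automatically a legitimate extension and (\ref{decomposition_extension}) holds by construction, with any discrepancy between $S^G\circ(\delta_1+\delta_2)$ and $S^G\circ\delta_1+S^G\circ\delta_2$ absorbed into the forced correction functional of the sum. The same observation is what makes the neutral element above actually neutral, despite its own correction functional being nonzero.
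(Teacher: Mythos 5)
Your proposal is correct and follows the same basic construction the paper sets up (pointwise sum of the functionals and of the $\delta$-maps, intersection of the correction subspaces, automatic compatibility when the extended domain is shared); the paper in fact states the lemma without a separate proof, leaving the verification to the preceding discussion. But you repair two points that the paper glosses over, and both repairs are genuinely needed. First, the paper's displayed identity $(\hat{S}^{\hat{G}}_1+\hat{S}^{\hat{G}}_2)\vert_{C^1_1\cap C^1_2}=S^G\circ(\delta_1+\delta_2)+(C_1+C_2)$ is not literally true, since the left-hand side equals $S^G\circ\delta_1+S^G\circ\delta_2+C_1+C_2$ and $S^G$ is quartic rather than additive in the connection --- exactly the obstruction the paper itself invokes two paragraphs later to explain why scalar multiplication fails. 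Your observation that the correction functional is uniquely determined by $(\hat{S}^{\hat{G}},C^1(\hat{P};\hat{\mathfrak{g}}),\delta)$ via (\ref{decomposition_extension}) dissolves this: closure holds because any admissible triple yields an extension with the forced $C$, at the cost that the sum's correction functional is $C_1+C_2$ plus the cross terms $S^G\circ\delta_1+S^G\circ\delta_2-S^G\circ(\delta_1+\delta_2)$ rather than $C_1+C_2$ itself. Second, you correctly identify that the null extension of Example \ref{null_extension} cannot serve as the neutral element, because its correction subspace $\{0\}$ collapses every correction subspace under intersection; the paper only remarks that summing with it restricts $\hat{S}^{\hat{G}}_1$ to $C^1_1\cap C^1_2$ and never exhibits an actual unit. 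Your candidate --- zero extended functional, constant $\delta$, and correction subspace equal to the whole extended domain, with the forced (constant, generally nonzero) correction functional --- does act as a two-sided identity componentwise, and is itself a legitimate extension (an instance of the paper's constant extensions). The one caveat worth flagging is that, strictly speaking, you are proving the lemma for a slightly different operation than the one displayed in the paper, since your correction functionals differ by the cross terms; but the paper's operation as literally written does not preserve (\ref{decomposition_extension}), so your version is the one for which the statement is actually true.
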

Let $\hat{S}^{\hat{G}}$ be an extension of a YMT $S^G$ and let $c\in\mathbb{R}$ a real number. We could try to define a new extension by taking the pointwise multiplications $c\cdot\hat{S}^{\hat{G}}$, $c\cdot C$ and $c\cdot \delta$ and the same extended domain and correction subspace. But we do not have 
\begin{equation}\label{scalar_invariant_extension}
(c \cdot \hat{S}^{\hat{G}})\vert_{C^1(\hat{P};\hat{\mathfrak{g}})}=c\cdot (S^G\circ \delta + C),    
\end{equation}
precisely because the action functional of a YMT is not $\mathbb{R}$-linear, but generally quartic. Indeed, assuming without of generality that the pairing is symmetric, 
$$
S^G[c\cdot D]= c^2 \ll dD,dD \gg + 2c^3 \ll dD, \frac{1}{2}D[\wedge]D \gg +c^4 \ll \frac{1}{2}D[\wedge]D, \frac{1}{2}D[\wedge]D \gg,
$$
while 
$$
cS^G[D]= c \ll dD,dD \gg + 2c \ll dD, \frac{1}{2}D[\wedge]D \gg +c \lll \frac{1}{2}D[\wedge]D, \frac{1}{2}D[\wedge]D \gg, 
$$
so that the extensions of a YMT $S^G$ are closed by scalar multiplication iff the action functional $S^D$ is restricted to the subspace $\operatorname{ScConn}(P;\mathfrak{g})\subset \operatorname{Conn}(P;\mathfrak{g})$ of those connections $D$ which are \textit{scalar invariant}, i.e,  such that 
\begin{equation}\label{scalar_connection}
(c^2-c) \ll dD,dD \gg + (2c^3-c) \ll dD, \frac{1}{2}D[\wedge]D \gg +(c^4-c) \ll \frac{1}{2}D[\wedge]D, \frac{1}{2}D[\wedge]D \gg = 0.    
\end{equation}
for every $c\in \mathbb{R}$. But this set is typically empty. Indeed, equation (\ref{scalar_connection}) lead us to look at the roots of the following polynomial over some ring $C(X;\mathbb{R})$ of functions.
\begin{equation}\label{polynomial}
    p(t)=a(x)(t^2-t)+ b(x)(2t^3-t)+c(x)(t^4-t).
\end{equation}

Recall that polynomial rings over fields satisfies the unique polynomial factorization theorem and therefore a degree $n$ univariate polynomial over fields have at most $n$ roots. Therefore, if we look at (\ref{polynomial}) on the field $C_>(X;\mathbb{R})\subset C(X;\mathbb{R})$ of functions which are null or nowhere vanishing we conclude that $p$ has at most $4$ roots. On the other hand, if $b(x)=0=c(x)$ or if $a(x)=0=b(x)$, then the real roots are $t=0,1$. Furthermore, if $a(x)=0=c(x)$, then the roots are $t=0,\pm\sqrt{2}/2$. Let $I(S^G)\subset \mathbb{R}$ be the subset of those $c$ such that (\ref{scalar_connection}) is satisfied for every connection in $P$. The discussion above reveals that under typical hypothesis on $\langle \cdot, \cdot \rangle$, we have $\vert I(S^G)\vert \leq 4$ and that $-1$ is typically not in $I(S^G)$. This implies that the commutative monoid structure of Lemma \ref{monoid_structure} is typically \textit{not} an abelian group. Even so, we can consider its Grothendieck group.

Just to exemplify, let us say that a YMT theory $S^G$ is \emph{semi-nondegenerate} if its pairing $\langle \cdot , \cdot \rangle $ is such that for every connection $D\neq 0$ in $P$ the smooth functions $ \langle dD,dD \rangle $ and $\langle D[\wedge]D, D[\wedge]D \rangle$ are nowhere vanishing (and therefore by continuity strictly positive or strictly negative). This is the case, for instance, if $S^G$ is actually a Yang-Mills theory. A connection $D$ in $P$ is \textit{proper} relatively to the pairing $\langle \cdot, \cdot \rangle$  if $\langle dD,D[\wedge]D\rangle$ is null or nowhere vanishing. Let $\operatorname{PConn}(P;\langle \cdot,\cdot,\rangle)\subset \operatorname{Conn}(P;\mathfrak{g})$ be the space of them. Of course, for semi-nondegenerate YMT theories, flat connections are proper. We say that $S^G$ is \textit{proper} if it is semi-nondegenerate and if each connection is proper relatively to the underlying pairing. E.g, if $G$ is abelian, then every YMT theory $S^G$ which is semi-nondegenerate is also proper.  

\begin{lemma}
Let $S^G$ be a proper YMT theory. Then $\vert I(S^G)\vert \leq 4$ and $I(S^G)=0,1$ if $G$ is abelian.
\end{lemma}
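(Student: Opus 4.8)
The plan is to reduce the defining condition of $I(S^G)$ to a statement about the real roots of a single univariate polynomial, by freezing both the connection and the evaluation point. Recall that $c\in I(S^G)$ means that identity (\ref{scalar_connection}) holds as an equality of smooth functions on $M$ for \emph{every} connection $D$; in particular it must hold for one fixed nonzero $D$ and at one fixed point $x\in M$. Writing $a=\langle dD,dD\rangle(x)$, $b=\langle dD,\tfrac12 D[\wedge]_{\mathfrak{g}}D\rangle(x)$ and $e=\langle \tfrac12 D[\wedge]_{\mathfrak{g}}D,\tfrac12 D[\wedge]_{\mathfrak{g}}D\rangle(x)$, the left-hand side of (\ref{scalar_connection}) at $(D,x)$ becomes the real polynomial (\ref{polynomial}) evaluated at $t=c$, namely $p(c)=a(c^2-c)+b(2c^3-c)+e(c^4-c)=e\,c^4+2b\,c^3+a\,c^2-(a+b+e)c$.

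First I would establish the bound $\vert I(S^G)\vert\le 4$. Choose any nonzero connection $D$ and any point $x$. Since $S^G$ is proper, hence semi-nondegenerate, the function $\langle D[\wedge]_{\mathfrak{g}}D,D[\wedge]_{\mathfrak{g}}D\rangle$ is nowhere vanishing, so the leading coefficient $e=\tfrac14\langle D[\wedge]_{\mathfrak{g}}D,D[\wedge]_{\mathfrak{g}}D\rangle(x)$ is nonzero. Thus $p$ is a genuine degree-four polynomial with real coefficients, and over the field $\mathbb{R}$ it has at most four roots. Because $I(S^G)$ is contained in the zero set of $p$ for this single choice of $(D,x)$, we conclude $\vert I(S^G)\vert\le 4$.

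For the abelian case I would exploit the collapse of the quadratic term in the curvature. If $G$ is abelian then $[\cdot,\cdot]_{\mathfrak{g}}=0$, so $D[\wedge]_{\mathfrak{g}}D=0$ for every connection; consequently $b=e=0$ and the polynomial reduces to $p(t)=a\,(t^2-t)=a\,t(t-1)$. For $c$ to lie in $I(S^G)$ we need $a\cdot c(c-1)=0$ for every $D$ and every $x$; taking $D\neq 0$ and using that $a=\langle dD,dD\rangle(x)$ is nowhere vanishing (semi-nondegeneracy) forces $c(c-1)=0$, i.e.\ $c\in\{0,1\}$. Conversely both values annihilate $p$ identically — $c=0$ kills every coefficient and $c=1$ survives only through the already-vanishing $b$-term — so $\{0,1\}\subseteq I(S^G)$, yielding the claimed equality.

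The argument is essentially elementary once the pointwise reduction is made; the only genuine content is recognizing that semi-nondegeneracy is exactly what guarantees the leading coefficient $e$ (respectively the coefficient $a$) is nonzero at the chosen point, so that the polynomial does not degenerate to lower degree. The mild subtlety I would be careful about is ensuring a nonzero connection exists and that the nowhere-vanishing property of the chosen $D$ transfers to the nonvanishing of a \emph{real} coefficient after evaluating at $x$ — this is immediate from the definition, but it is the one place where the hypothesis is used substantively rather than cosmetically. Properness of the cross term $\langle dD,\tfrac12 D[\wedge]_{\mathfrak{g}}D\rangle$ is not strictly needed for either claim, but it is the natural hypothesis placing all three coefficients in the null-or-nowhere-vanishing class $C_>(M;\mathbb{R})$, matching the viewpoint of the discussion preceding the lemma.
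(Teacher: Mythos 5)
Your route is the same as the paper's---reduce membership in $I(S^G)$ to being a root of an explicit quartic whose leading coefficient is nonzero by semi-nondegeneracy, and let the abelian case collapse to $a\,t(t-1)$ with roots $0,1$---but one step does not survive the paper's actual conventions. The symbol $\ll\cdot,\cdot\gg$ in (\ref{scalar_connection}) denotes the \emph{integrated} pairing $\int_M\langle\cdot,\cdot\rangle\,dvol_g$ (compare the displayed expansion of $S^G[c\cdot D]$ with (\ref{YMT_action})), so for each fixed $D$ condition (\ref{scalar_connection}) is an identity of real numbers, not of smooth functions on $M$. Hence your freezing of a point $x\in M$ and the containment ``$I(S^G)$ is contained in the zero set of $p$ for this single choice of $(D,x)$'' do not follow: the vanishing of an integral does not force the integrand to vanish at your chosen $x$, and the pointwise identity you impose is a strictly stronger requirement, so as written you have bounded a possibly smaller set than $I(S^G)$. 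The repair is one line and is exactly the step the paper takes: since $\langle \tfrac12 D[\wedge]_{\mathfrak{g}}D,\tfrac12 D[\wedge]_{\mathfrak{g}}D\rangle$ is nowhere vanishing, hence of constant sign, its integral is nonzero (``the integral is order-preserving''), so the quartic with the \emph{integrated} coefficients already has nonzero leading coefficient and therefore at most four real roots; the abelian case goes through identically with $a=\ll dD,dD\gg\neq 0$. With that substitution your argument coincides with the paper's proof. Your closing remarks are sound and in fact slightly sharper than the paper's framing: only the nonvanishing of the leading coefficient (respectively of $a$) is needed for the root count, so properness of the cross term is indeed not used for this lemma, and working with honest real coefficients avoids any appeal to the class of null-or-nowhere-vanishing functions, which is not closed under addition.
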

\begin{proof}
Just notice that under the hypothesis of non-degeneracy, since the integral is order-preserving, it follows that the functions $a(D)=\ll dD,dD \gg$ e $c(D)= \ll D[\wedge]D,D[\wedge]D \gg$ are nowhere vanishing. The same happens with $b(D)=\ll dD,D[\wedge]D \gg$ if $D$ is proper. In this case, the polynomial (\ref{polynomial}) with $x=D$ has coefficients in a ring of nowhere vanishing functions, so that it has at most four roots, i.e, $\vert I(S^G)\vert \leq 4$. If $G$ is abelian, then  $D[\wedge]D=0$, so that (\ref{polynomial}) becomes $p(t)=a(D)(t^2-t)$, with roots $t=0,1$. 
\end{proof}
\subsection{$R[\mathbb{G}]$-Module Structure}\label{sec_group_ring}

\quad \;\,In the last section we saw that $\operatorname{Ext}(S^G;\widehat{\operatorname{Conn}}_1(\hat{P};\hat{\mathfrak{g}}))$ is always a commutative monoid with pointwise sum, but almost never closed under the action of $\mathbb{R}$ by pointwise scalar multiplication. It is natural to ask if it can be invariant under other group action. The previous discussion on scalar multiplication give us the hint that the obstruction occurred precisely at the functionals $S^G$ of the YMT theories, which do not commute with the action of $\mathbb{R}$, i.e, we do not have $S^G[c\cdot D]=c\cdot S^G[D]$. Thus, the idea is to consider groups acting in such a way that the property above is satisfied. The ``trick'' here is to remember that we can take pullbacks of action. More precisely, let $\mathbb{G}$ be a group acting on a set $Y$ and let $f:X\rightarrow Y$ a map. We can then define a new map $*^f:\mathbb{G}\times X \rightarrow Y$ by $g*^f x=g*f(x)$, i.e, such that the first diagram below is commutative. In our context, given an action $*$ of $\mathbb{G}$ on $\mathbb{R}$ we get the action $*^{S^G}$ which satisfies the desired property.  
$$
\xymatrix{ &&&&& \ar@/^{1cm}/[rrdrrd]^{*^{(S^G\circ \delta)}} \ar@/^{1.8cm}/[rrdrrd]^{*^{C}} \ar@/^{2.5cm}/[rrdrrd]^{*^{(\hat{S}^{\hat{G}}\circ \jmath)}} \ar@/_{2cm}/[dd]_{S^G\circ \delta} \ar@/_{3.2cm}/[dd]_{C} \ar@/_{4cm}/[dd]_{\hat{S}^{\hat{G}}\circ \jmath} \ar[d]_{id\times \delta} \mathbb{G}\times C^1(\hat{P};\hat{\mathfrak{g}}) \ar@/^{0.1cm}/[rrrrdd]^{(*^{S^G})^{\delta}}   \\ 
\ar[d]_{id\times f} \mathbb{G}\times X \ar@{-->}[rd]^{*^f} &&&&& \ar[d]_{id\times S^G} \mathbb{G}\times \operatorname{Conn}(P;\mathfrak{g}) \ar[rrrrd]^{*^{S^G}}  \\
\mathbb{G}\times \mathbb{R} \ar[r]_-{*} & \mathbb{R} &&&& \mathbb{G}\times \mathbb{R} \ar[rrrr]_-{*} &&&& \mathbb{R}}
$$

Thus, for each extension $\hat{S}^{\hat{G}}$ of $G$ we can form the second diagram above, where in the curved left-hand side arrows we omitted ``$id \times$'' in the labels and $\jmath : C^1(\hat{P};\hat{\mathfrak{g}})\hookrightarrow \widehat{\operatorname{Conn}}(\hat{P};\hat{\mathfrak{g}})$ is the inclusion. Another important fact to note is that pullback of actions is functorial. More precisely, if $f:X\rightarrow Y$ is a map as above and $\delta:X'\rightarrow X$, then $*^{f\circ \delta}=(*^f)^\delta$. Thus, in particular, in the second diagram above we have $*^{(S^G\circ \delta)}=(*^{S^G})^\delta$. Now, notice that the rule $f\mapsto *^f$ preserves the same structures that are preserved by $*$. More precisely, for fixed $X$, let $\operatorname{Map}(X;\mathbb{R})$ be the set of functions $f:X\rightarrow \mathbb{R}$. With pointwise operations it becomes an associative unital real algebra. Let $*^{-}:\operatorname{Map}(X;\mathbb{R})\rightarrow \operatorname{Map}(\mathbb{G}\times X;\mathbb{R})$ be the rule $*^{-}(f)=*^{f}$. As one can quickly check, if $*:\mathbb{G}\times \mathbb{R} \rightarrow \mathbb{R}$ is \textit{additive}, i.e, if it preserves the sum of $\mathbb{R}$, then $*^-$ is too, i.e, $*^{f+f'}=*^f+*^{f'}$, and similarly for the multiplication. In our context, this means that $*^{(S^G\circ\delta+C)}=*^{S^G\circ \delta}+*^{C}$. 

Finally, note that for every fixed $a\in \mathbb{G}$, we have an inclusion $\imath _a:a\times X \hookrightarrow \mathbb{R}\times \mathbb{R}$. Composing with the action $*:\mathbb{G}\times \mathbb{R}\rightarrow \mathbb{R}$ and with the identification $a\times X\simeq X$, for every map $f:X\rightarrow \mathbb{R}$ we have an induced map $a*f:X\rightarrow \mathbb{R}$ given by $(a*f)(x)=(*^f)(a,x)=a*f(x)$. Putting all this together we get:
\begin{theorem}
Let $S^G$ be a YMT theory, $\imath:G\hookrightarrow \hat{G}$ a basic extension of $G$ and $\widehat{\operatorname{Conn}}(\hat{P};\hat{\mathfrak{g}})$ an extended domain. For every group action $*:\mathbb{G}\times \mathbb{R}\rightarrow \mathbb{R}$ we have an induced action 
\begin{equation}\label{induced_group_action}
  \overline{*}:\mathbb{G}\times \operatorname{Ext}(S^G;\widehat{\operatorname{Conn}}(\hat{P};\hat{\mathfrak{g}})) \rightarrow \operatorname{Ext}(S^G;\widehat{\operatorname{Conn}}(\hat{P};\hat{\mathfrak{g}})).  
\end{equation}
Furthermore, if $*$ is additive, then $\overline{*}$ is too with respect to the commutative monoid structure of Lemma \ref{monoid_structure}.
\end{theorem}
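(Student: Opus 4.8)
The plan is to build $\overline{*}$ directly from the pullback-of-actions construction developed just above the theorem, applied to the $\mathbb{R}$-valued functionals attached to an extension, and then to read off the group-action and additivity properties from the corresponding properties of the rule $f\mapsto a{*}f$. Concretely, for $a\in\mathbb{G}$ and an extension $E=(\hat{S}^{\hat{G}},C,\delta)$ with the prescribed basic extension, extended domain $\widehat{\operatorname{Conn}}(\hat{P};\hat{\mathfrak{g}})$ and correction subspace, I would define $a\,\overline{*}\,E$ to keep the basic extension, the extended domain and the correction subspace, to take the new extended functional $a{*}\hat{S}^{\hat{G}}$ (so $(a{*}\hat{S}^{\hat{G}})(\hat{D})=a{*}(\hat{S}^{\hat{G}}(\hat{D}))$), to keep the $\delta$-map unchanged, and to take the correction functional forced by the decomposition, namely $C':=(a{*}\hat{S}^{\hat{G}})\circ\jmath-S^{G}\circ\delta$. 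With this choice condition (\ref{decomposition_extension}) holds by construction, so the genuine content in checking that $a\,\overline{*}\,E$ is again an extension lies only in the gauge-invariance required in item (3), the extended domain being left untouched and still satisfying (\ref{domain_extension}).

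First I would verify that invariance. Since $\hat{S}^{\hat{G}}$ is $\operatorname{Gau}_{\hat{G}}(\hat{P})$-invariant and a global gauge transformation only moves the argument while fixing the resulting real number, for every $f\in\operatorname{Gau}_{\hat{G}}(\hat{P})$ one has $(a{*}\hat{S}^{\hat{G}})(f\cdot\hat{D})=a{*}\bigl(\hat{S}^{\hat{G}}(f\cdot\hat{D})\bigr)=a{*}\bigl(\hat{S}^{\hat{G}}(\hat{D})\bigr)=(a{*}\hat{S}^{\hat{G}})(\hat{D})$, so the new extended functional is still invariant. Next I would establish that $\overline{*}$ is a group action, which reduces to the functoriality of the pullback rule: because ${*}$ is a $\mathbb{G}$-action on $\mathbb{R}$, every map $f\colon X\to\mathbb{R}$ satisfies $e{*}f=f$ and $(ab){*}f=a{*}(b{*}f)$, these being the pointwise restatements of $e{*}r=r$ and $(ab){*}r=a{*}(b{*}r)$. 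Applying this to $f=\hat{S}^{\hat{G}}$ gives $e\,\overline{*}\,E=E$ (the derived correction collapses to the original $C$ by the original decomposition) and $(ab)\,\overline{*}\,E=a\,\overline{*}\,(b\,\overline{*}\,E)$, since both sides carry extended functional $(ab){*}\hat{S}^{\hat{G}}$, the same $\delta$, and hence the same derived correction.

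For the additivity statement I would use that, when ${*}$ is additive, the rule $f\mapsto a{*}f$ is additive on $\operatorname{Map}(-;\mathbb{R})$, i.e. ${*}^{f+f'}={*}^{f}+{*}^{f'}$, so in particular $a{*}0=0$. Writing the monoid sum of Lemma \ref{monoid_structure} additively and as pointwise sum of the underlying functionals, the image $a\,\overline{*}\,(E_1+E_2)$ then carries extended functional $a{*}(\hat{S}^{\hat{G}}_1+\hat{S}^{\hat{G}}_2)=a{*}\hat{S}^{\hat{G}}_1+a{*}\hat{S}^{\hat{G}}_2$, the $\delta$-map $\delta_1+\delta_2$ of the sum, and the correction derived from these; this agrees termwise with $(a\,\overline{*}\,E_1)+(a\,\overline{*}\,E_2)$, so each $a\,\overline{*}$ is a monoid endomorphism fixing the null extension. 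I expect the main obstacle to be exactly the interaction of this additivity with the nonlinearity of $S^{G}$: since $S^{G}$ is quartic rather than $\mathbb{R}$-linear (as the scalar-multiplication discussion preceding Lemma \ref{monoid_structure} shows), one cannot push $a{*}$ through $S^{G}$, and the term $S^{G}\circ\delta$ does not transform by ${*}$. The point to get right is therefore that the action is taken on the already-assembled $\mathbb{R}$-valued composites $\hat{S}^{\hat{G}}\circ\jmath$ and $C$, where additivity of ${*}$ is genuinely available, while $\delta$ and the derived term $S^{G}\circ\delta$ are carried along unchanged and identically on both sides of the sum; this is what makes the two derived corrections coincide despite $S^{G}\circ(\delta_1+\delta_2)\neq S^{G}\circ\delta_1+S^{G}\circ\delta_2$.
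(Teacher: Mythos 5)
Your construction is a genuinely different one from the paper's. The paper transforms all three components of an extension: it sets $a\overline{*}C=a*C$, replaces the composite $S^G\circ\delta$ by $a*(S^G\circ\delta)$ via the pullback action $*^{S^G}$, and modifies $\hat{S}^{\hat{G}}$ only on the image of the correction subspace. You instead act globally on the extended functional, keep $\delta$ untouched, and let the correction be whatever the decomposition forces. For the first claim (that $\overline{*}$ is a group action for an \emph{arbitrary} $*$) your design is arguably cleaner: condition (\ref{decomposition_extension}) holds by fiat, the global action $(a*\hat{S}^{\hat{G}})(\hat{D})=a*(\hat{S}^{\hat{G}}(\hat{D}))$ manifestly preserves $\operatorname{Gau}_{\hat{G}}(\hat{P})$-invariance (the paper's piecewise redefinition on $\jmath(C^1(\hat{P};\hat{\mathfrak{g}}))$ does not obviously do so when the correction subspace is not gauge invariant), and the action axioms reduce to the pointwise ones exactly as you say. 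Up to this point your argument is sound.

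The gap is in the additivity step. With the monoid structure of Lemma \ref{monoid_structure}, the sum $E_1+E_2$ carries the $\delta$-map $\delta_1+\delta_2$, so your derived correction for $a\overline{*}(E_1+E_2)$ is $a*\bigl((\hat{S}^{\hat{G}}_1+\hat{S}^{\hat{G}}_2)\circ\jmath\bigr)-S^G\circ(\delta_1+\delta_2)$, whereas the correction of $(a\overline{*}E_1)+(a\overline{*}E_2)$ is $a*(\hat{S}^{\hat{G}}_1\circ\jmath)+a*(\hat{S}^{\hat{G}}_2\circ\jmath)-S^G\circ\delta_1-S^G\circ\delta_2$. Additivity of $*$ matches the first pair of terms, but the two expressions still differ by $S^G\circ\delta_1+S^G\circ\delta_2-S^G\circ(\delta_1+\delta_2)$, which is generically nonzero precisely because $S^G$ is quartic --- the very non-identity you name. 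Your closing claim that ``$S^G\circ\delta$ is carried along unchanged and identically on both sides of the sum'' is therefore false: one side carries $S^G\circ(\delta_1+\delta_2)$ and the other $S^G\circ\delta_1+S^G\circ\delta_2$, so the two candidate images of $a\overline{*}$ have the same extended functional and the same $\delta$-map but different correction functionals, and hence are different elements of $\operatorname{Ext}(S^G;\widehat{\operatorname{Conn}}(\hat{P};\hat{\mathfrak{g}}))$. The paper avoids this by never deriving the correction: it applies $*$ separately to $C$ and to the already-assembled composite $S^G\circ\delta$, so additivity of $\overline{*}$ follows from $*^{f+f'}=*^f+*^{f'}$ applied to each summand of the decomposition, and the nonlinearity of $S^G$ never has to be pushed through. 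To repair your argument you would either have to adopt that componentwise transformation for the additivity claim, or redefine the monoid sum so that its correction term is likewise derived from the decomposition rather than being $C_1+C_2$.
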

\begin{proof}
After the previous discussion the proof is basically done. Indeed, if $\mathcal{S}=(\hat{S}^{\hat{G}},C,\delta)$ is an extension of $S^G$ with extended domain $\widehat{\operatorname{Conn}}(\hat{P};\hat{\mathfrak{g}})$, for each $a\in \mathbb{G}$ define 
$
a\overline{*}\mathcal{S}=(a\overline{*}\hat{S}^{\hat{G}},a\overline{*}C,a\overline{*}\delta)
$, where 
$$
(a\overline{*}\hat{S}^{\hat{G}})(\alpha)=\begin{cases} [a*(\hat{S}^{\hat{G}}\circ \jmath)](\beta) & \text{if}\;\alpha=\jmath (\beta),\;\text{with}\; \beta \in C^1(\hat{P};\hat{\mathfrak{g}}) \\ \hat{S}^{\hat{G}}(\alpha) & \text{otherwise}, \end{cases}
$$
while $a\overline{*}C=a*C$ and $[a\overline{*}\delta](\beta)=[a*^{S^G}\delta](\beta)=a*[S^G(\delta(\beta))]$. One can check that, if defined in this way, $a\overline{*}\mathcal{S}$ is another extension of $S^G$ which by construction has the same extended domain. Furthermore, by the above expressions we see that $e\overline{*}\mathcal{S}=\mathcal{S}$, where $e$ is the neutral element of $\mathbb{G}$, and that $a\overline{*}(b\overline{*}\mathcal{S})=(a*b)\overline{*}\mathcal{S}$, so that $\overline{*}$ really defines the desired group action (\ref{induced_group_action}). Finally, since the commutative monoid structure of Lemma \ref{monoid_structure} is given by pointwise sum, the previous remark that $*$ is additive implies $*^{f+f'}=*^f+*^{f'}$ which shows that $\overline{*}$ is additive if $*$ is, finishing the proof.
\end{proof}
Now, let $X$ be some additive commutative monoid endowed with an additive group action by $\mathbb{G}$, i.e, let $X$ be a $\mathbb{G}$-monoid. Recall that for every $\mathbb{G}$ the category of $\mathbb{G}$-monoids can be identified with the category of $\mathbb{Z}[\mathbb{G}]$-monoids, where $\mathbb{Z}[\mathbb{G}]$ is the group ring of $\mathbb{G}$, which is unital. Furthermore, notice that every $R$-monoid, where $R$ is a unital ring with unit $e$, can be turned a $R$-group, i.e, an $R$-module. Indeed, for every $x\in X$ define its additive inverse by $-x:=(-e)*x$, where $-e$ is the additive inverse of $e$ in $R$\footnote{We clearly have $0=0*x=[e+(-e)]*x=(e*x)+[(-e)*x]=x+(-x)=(-x)+x$.}. Thus, 

\begin{corollary}
For every ring $R$ and every additive group action $*:\mathbb{G}\times \mathbb{R}\rightarrow \mathbb{R}$, i.e, for every $\mathbb{G}$-monoid structure in $\mathbb{R}$, we have an induced $R[\mathbb{G}]$-module structure in  $\operatorname{Ext}(S^G;\widehat{\operatorname{Conn}}(\hat{P};\hat{\mathfrak{g}}))$.
\end{corollary}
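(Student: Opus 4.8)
The plan is to read the corollary as the composite of two purely formal moves, both foreshadowed in the discussion preceding it: first upgrade the additive $\mathbb{G}$-action of the preceding theorem to a module-like action of the group ring, and then use the additive unit of the coefficient ring to manufacture inverses, thereby promoting the commutative monoid of Lemma \ref{monoid_structure} to an honest module. Concretely, I would start from the data already secured: by the preceding theorem the set $\operatorname{Ext}(S^G;\widehat{\operatorname{Conn}}(\hat{P};\hat{\mathfrak{g}}))$ carries a commutative monoid structure (Lemma \ref{monoid_structure}) together with an action $\overline{*}$ of $\mathbb{G}$ which is additive and satisfies $e\overline{*}\mathcal{S}=\mathcal{S}$ and $a\overline{*}(b\overline{*}\mathcal{S})=(ab)\overline{*}\mathcal{S}$. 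This is exactly the datum of a $\mathbb{G}$-monoid, so the work is to turn this into the datum of a module.

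First I would make the identification of $\mathbb{G}$-monoids with $\mathbb{Z}[\mathbb{G}]$-monoids concrete. For a nonnegative integer combination $\sum_g n_g\, g \in \mathbb{N}[\mathbb{G}]$ one sets $(\sum_g n_g\, g)\overline{*}\mathcal{S} := \sum_g n_g\,(g\overline{*}\mathcal{S})$, where the outer sum is the pointwise monoid sum of Lemma \ref{monoid_structure}. Well-definedness and the associativity and bilinearity required of a ring action reduce to the two inputs already in hand: the additivity of $\overline{*}$, which forces the action to distribute over sums of extensions and over sums of coefficients, and the identity $a\overline{*}(b\overline{*}\mathcal{S})=(ab)\overline{*}\mathcal{S}$, which encodes compatibility with the convolution product of the group ring. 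This exhibits $\operatorname{Ext}(S^G;\widehat{\operatorname{Conn}}(\hat{P};\hat{\mathfrak{g}}))$ as a module-monoid over $\mathbb{N}[\mathbb{G}]$.

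Next I would produce inverses and pass to $\mathbb{Z}$, then to $R$. Letting $e$ denote the unit of the coefficient ring (acting as the identity by the previous step) and $-e$ its additive inverse, for each extension I set $-\mathcal{S}:=(-e)\overline{*}\mathcal{S}$. Distributivity of the action over the relation $e+(-e)=0$ yields $0\overline{*}\mathcal{S}=(e+(-e))\overline{*}\mathcal{S}=\mathcal{S}+(-\mathcal{S})$, and since $0\overline{*}\mathcal{S}$ is the null extension of Example \ref{null_extension}, every $\mathcal{S}$ acquires an additive inverse. Hence the commutative monoid is in fact an abelian group, and the module-monoid action promotes to a genuine module action; all remaining module axioms are inherited verbatim. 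Applied with coefficient ring $\mathbb{Z}[\mathbb{G}]$ this already gives a $\mathbb{Z}[\mathbb{G}]$-module, and the passage to a general $R[\mathbb{G}]$ is the same argument once the scalars from $R$ are allowed to act.

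I expect the genuine obstacle, everything above being formal, to be precisely the last clause: making the coefficient action of $R$ (as opposed to $\mathbb{Z}$) well defined and closed inside the class of extensions. The group elements of $\mathbb{G}$ act through $*$ on the $\mathbb{R}$-valued functionals, but an abstract scalar $r\in R$ does not a priori act on an $\mathbb{R}$-valued functional, so the crux is to specify an additive, unital, associative $R$-action on $\operatorname{Ext}(S^G;\widehat{\operatorname{Conn}}(\hat{P};\hat{\mathfrak{g}}))$ commuting with $\overline{*}$, and then to verify, component by component on the triple $\mathcal{S}=(\hat{S}^{\hat{G}},C,\delta)$, that each resulting $a\overline{*}\mathcal{S}$ and each inverse $-\mathcal{S}$ is still a legitimate extension: that the extended functional remains $\operatorname{Gau}_{\hat{G}}(\hat{P})$-invariant (mind the piecewise definition of $a\overline{*}\hat{S}^{\hat{G}}$, which acts through $*$ only on the correction subspace) and that the decomposition constraint (\ref{decomposition_extension}), namely $\hat{S}^{\hat{G}}\circ\jmath=S^G\circ\delta+C$, is preserved. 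Here the additivity of $*$, equivalently $*^{f+f'}=*^f+*^{f'}$, is exactly what guarantees that the constraint passes through sums and inverses, so that closure within the class of extensions is never violated.
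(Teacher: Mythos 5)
Your proposal follows essentially the same route as the paper: identify the additive $\mathbb{G}$-action of the preceding theorem with a $\mathbb{Z}[\mathbb{G}]$-monoid structure, then use the additive inverse $-e$ of the ring unit to manufacture inverses via $-\mathcal{S}:=(-e)\overline{*}\mathcal{S}$, promoting the commutative monoid of Lemma \ref{monoid_structure} to a module; your explicit construction through $\mathbb{N}[\mathbb{G}]$ is in fact more careful than the paper's one-line appeal to the identification of $\mathbb{G}$-monoids with $\mathbb{Z}[\mathbb{G}]$-monoids. The one point where you stop short is exactly the clause you flag as the ``genuine obstacle'': how an abstract scalar $r\in R$ is supposed to act. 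The paper does not define such an action directly on the set of extensions at all; it disposes of the passage from $\mathbb{Z}$ to $R$ with the single remark that $\mathbb{Z}[\mathbb{G}]\otimes_{\mathbb{Z}}R\simeq R[\mathbb{G}]$, i.e., the $R[\mathbb{G}]$-module structure is the one ``induced'' by extension of scalars from the $\mathbb{Z}[\mathbb{G}]$-module already constructed, not a new pointwise action of $R$ on the functionals. So your worry about closure of an $R$-scalar action inside the class of extensions is dissolved rather than resolved: no such action is needed, and had you invoked base change your argument would coincide with the paper's. (Your component-by-component check that $a\overline{*}\mathcal{S}$ remains a legitimate extension is already contained in the proof of the preceding theorem, so it is not an additional burden here.)
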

\begin{proof}
By the above discussion, for every $*$ we have a $\mathbb{Z}[\mathbb{G}]$-module structure. Then recall that $\mathbb{Z}[X]\otimes_{\mathbb{Z}}R\simeq R[X]$.
\end{proof}

\begin{remark}
If an action $*:\mathbb{G}\times \mathbb{R}\rightarrow \mathbb{R}$ is not only additive, but also multiplicative, then it corresponds to a 1-dimensional real irreductive representation of $\mathbb{G}$. Thus, in this setting, or the action is trivial (which is not desired) or $\mathbb{G}$ is not semi-simple.
\end{remark}

\subsection{Towards a Bundle Structure}\label{sec_bundle}

\quad\;\, Fixed a YMT theory $S^G$ and a basic extension $\imath : G\hookrightarrow \hat{G}$ for $G$, let $\operatorname{ED}(S^G;\hat{G})$ be the set of all extended domains $\widehat{\operatorname{Conn}}(\hat{P};\hat{\mathfrak{g}})$. We have a projection $\pi:\operatorname{Ext}(S^G;\hat{G})\rightarrow \operatorname{ED}(S^G;\hat{G})$ assigning to each extension of $S^G$ its underlying extended domain. The fibers of this projection are precisely the sets $\operatorname{Ext}(S^G;\widehat{\operatorname{Conn}}(\hat{P};\hat{\mathfrak{g}}))$. But in the last subsection we proved that an additive $\mathbb{G}$-action in $\mathbb{R}$ induces a $R[\mathbb{G}]$-module structure in each of these fibers. Thus, it is very natural to ask whether $\pi:\operatorname{Ext}(S^G;\hat{G})\rightarrow \operatorname{ED}(S^G;\hat{G})$ is actually a $R[\mathbb{G}]$-module bundle.

First of all, notice that by the definition of extended domain, one can regard $\operatorname{ED}(S^G;\hat{G})$ as the subset of the power set $\mathcal{P}(\Omega^1(\hat{P};\hat{\mathfrak{g}}))$ consisting of those subsets $X\subset \Omega^1(\hat{P};\hat{\mathfrak{g}})$ satisfying (\ref{domain_extension}). Since $\Omega^1(\hat{P};\hat{\mathfrak{g}})\simeq \Gamma(T^{*}\hat{P}\otimes (\hat{P}\times \hat{\mathfrak{g}}))$, if $\hat{P}$ is compact, then $\Omega^1(\hat{P};\hat{\mathfrak{g}})$ becomes a Fr\'echet space and therefore a complete metric space \cite{michor_global_analysis}. In particular, it can be regarded as a complete uniform space, which means that its power set $\mathcal{P}(\Omega^1(\hat{P};\hat{\mathfrak{g}}))$ has an induced uniform structure and, in particular, a natural (and typically non-Hausdorff) topology \cite{uniform_spaces}. One can then take in $\operatorname{ED}(S^G;\hat{G}$) the subspace topology. Let $\Pi:\operatorname{Ext}(S^G;\hat{G})\rightarrow \mathcal{P}(\Omega^1(\hat{P};\hat{\mathfrak{g}}))$ be the composition of $\pi$ with the inclusion $\operatorname{ED}(S^G;\hat{G}) \hookrightarrow \mathcal{P}(\Omega^1(\hat{P};\hat{\mathfrak{g}}))$ and give $\operatorname{Ext}(S^G;\hat{G})$ the initial topology defined by $\Pi$. Thus, the closed sets of $\operatorname{Ext}(S^G;\hat{G})$ are preimages of closed sets on $\mathcal{P}(\Omega^1(\hat{P};\hat{\mathfrak{g}}))$. Furthermore, since $\pi$ is surjective, it follows that it is continuous in the initial topology of $\Pi$. This topology is better than the initial topology of $\pi$ because it is completely determined on the subspace of closed sets of $\mathcal{P}(\Omega^1(\hat{P};\hat{\mathfrak{g}}))$, which constitute a more well-behaved environment. E.g, the induced uniform structure arises from a metric, which is complete since  $\Omega^1(\hat{P};\hat{\mathfrak{g}})$ a Fr\'echet space \cite{michor_global_analysis,uniform_spaces}. 

Thus, fixed a $\mathbb{G}$-monoid structure in $\mathbb{R}$ we get, for every commutative unital ring $R$, a continuous projection $\pi_R:\operatorname{Ext}(S^G;\hat{G})\rightarrow \operatorname{ED}(S^G;\hat{G})$ whose fibers are $R[\mathbb{G}]$-modules. We should not expect that this projection is locally trivial, since for different extended domains we have fibers with different sizes, so that a priori there is no typical fiber. On the other hand, we speculate that it can be regarded as a subbundle of a trivial bundle. More precisely, notice that $\operatorname{ED}(S^G;\hat{G})$ is a directed set with the inclusion relation. Furthermore, given two extended domains such that  $\imath_{1,2}:\widehat{\operatorname{Conn}}_1(\hat{P};\hat{\mathfrak{g}})\hookrightarrow \widehat{\operatorname{Conn}}_2(\hat{P};\hat{\mathfrak{g}})$ we have a corresponding map   \begin{equation}\label{fiberwise_map}
    \imath_{1,2}^{*}:\operatorname{Ext}(S^G;\widehat{\operatorname{Conn}}_2(\hat{P};\hat{\mathfrak{g}}))\rightarrow \operatorname{Ext}(S^G;\widehat{\operatorname{Conn}}_1(\hat{P};\hat{\mathfrak{g}}))
\end{equation}
given by restriction, i.e, such that for every $\mathcal{S}_2=(\hat{S}_2^{\hat{G}},C_2,\delta_2)$ we have $\imath _{1,2}^{*}\mathcal{S}_2=(\hat{S}_2^{\hat{G}}\circ \imath_{1,2},C_2\circ \imath_{1,2},\delta_2\circ \imath_{1,2})$, where the compositions are in the proper domains. One can check that this map actually preserves the $R[\mathbb{G}]$-module structures of last section. Thus, we have an inverse limit in the category of $R[\mathbb{G}]$-modules, whose limit is $\operatorname{Ext}(S^G;\Omega^1_{eq}(\hat{P};\hat{\mathfrak{g}}))$, since the direct set $\operatorname{ED}(S^G;\hat{G})$ has $\Omega^1_{eq}(\hat{P};\hat{\mathfrak{g}})$ as its greatest element. 

Now, consider the projection \begin{equation}\label{projection_trivial_bundle}
    \operatorname{pr}_1:\operatorname{ED}(S^G;\hat{G})\times \operatorname{Ext}(S^G;\Omega^1_{eq}(\hat{P};\hat{\mathfrak{g}}))\rightarrow \operatorname{ED}(S^G;\hat{G}).
\end{equation}
Let $\operatorname{PR}_1$ be the composition of $\operatorname{pr}_1$ with the inclusion $\operatorname{ED}(S^G;\hat{G})\hookrightarrow \mathcal{P}(\Omega^1(\hat{P};\hat{\mathfrak{g}}))$ and consider its initial topology. Thus, (\ref{projection_trivial_bundle}) becomes continuous and defines a continuous trivial $R[\mathbb{G}]$-module bundle. We have a map $\imath^*$ such that $\imath^*(\widehat{\operatorname{Conn}}(\hat{P};\hat{\mathfrak{g}}),\mathcal{S})=(\widehat{\operatorname{Conn}}(\hat{P};\hat{\mathfrak{g}}),\imath ^{*}\mathcal{S})$, i.e, which is fiberwise of the form (\ref{fiberwise_map}), clearly making commutative the diagram below.
\begin{equation}\label{diagram_bundle_1}
   \xymatrix{\ar[rd]_{\operatorname{pr}_1}\operatorname{ED}(S^G;\hat{G})\times \operatorname{Ext}(S^G;\Omega^1_{eq}(\hat{P};\hat{\mathfrak{g}})) \ar[r]^-{\imath ^{*}} & \operatorname{Ext}(S^G;\hat{G}) \ar[d]^{\pi_R} \\
& \operatorname{ED}(S^G;\hat{G})} 
\end{equation}

Notice that, if $\imath^*$ is continuous, it then defines a morphism of $R[\mathbb{G}]$-module bundles. Furthermore, suppose that we found $R$ and $\mathbb{G}$ such that $\operatorname{Ext}(S^G;\Omega^1_{eq}(\hat{P};\hat{\mathfrak{g}}))$ is an injective module. Thus, $\imath^{*}$ becomes fiberwise an epimorphism of $R[\mathbb{G}]$-modules. If, in addition each $\operatorname{Ext}(S^G;\widehat{\operatorname{Conn}}(\hat{P};\hat{\mathfrak{g}}))$ is a projective $R[\mathbb{G}]$-module (except maybe for $\widehat{\operatorname{Conn}}(\hat{P};\hat{\mathfrak{g}})= \Omega^1_{eq}(\hat{P};\hat{\mathfrak{g}})$), then $\imath^*$ is fiberwise an split epimorphism and we can define a section $s$ for $\imath^{*}$ in such a way that diagram (\ref{diagram_bundle_1}) extends to the following commutative diagram. Furthermore, if the fiberwise splits of $\imath ^*$ are such that $s$ is continuous, then $s$ embeds the bundle $\pi_R$ of extensions of $S^G$ as a $R[\mathbb{G}]$-subbundle of the trivial $R[\mathbb{G}]$-bundle (\ref{projection_trivial_bundle}). 
\begin{equation}\label{diagram_bundle_2}
   \xymatrix{\ar[rd]_{\operatorname{pr}_1}\operatorname{ED}(S^G;\hat{G})\times \operatorname{Ext}(S^G;\Omega^1_{eq}(\hat{P};\hat{\mathfrak{g}})) \ar[r]^-{\imath ^{*}} & \operatorname{Ext}(S^G;\hat{G}) \ar[d]^{\pi_R} \ar[r]^-{s} & \ar[ld]^{\operatorname{pr}_1}\operatorname{ED}(S^G;\hat{G})\times \operatorname{Ext}(S^G;\Omega^1_{eq}(\hat{P};\hat{\mathfrak{g}})) \\
& \operatorname{ED}(S^G;\hat{G})} 
\end{equation}

Thus, after all the previous discussion we are lead to speculate the following:
\begin{conjecture}
For every YMT theory $S^G$, every basic extension $\imath:G\hookrightarrow \hat{G}$ of $G$ such that $\hat{P}$ is compact, every ring $R$ and every group $\mathbb{G}$, the corresponding map $\imath^{*}$ defined above is continuous in the described topologies.
\end{conjecture}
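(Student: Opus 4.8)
The plan is to prove the conjecture purely formally, exploiting the fact that both the total space $\operatorname{Ext}(S^G;\hat{G})$ and the product $\operatorname{ED}(S^G;\hat{G})\times \operatorname{Ext}(S^G;\Omega^1_{eq}(\hat{P};\hat{\mathfrak{g}}))$ carry \emph{initial} topologies pulled back from one and the same ambient space $\mathcal{P}(\Omega^1(\hat{P};\hat{\mathfrak{g}}))$. Write $\iota:\operatorname{ED}(S^G;\hat{G})\hookrightarrow \mathcal{P}(\Omega^1(\hat{P};\hat{\mathfrak{g}}))$ for the inclusion. By construction the topology on $\operatorname{Ext}(S^G;\hat{G})$ is the initial topology of $\Pi=\iota\circ\pi_R$, while the topology on the product is the initial topology of $\operatorname{PR}_1=\iota\circ\operatorname{pr}_1$. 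The universal property of an initial topology defined by a single map asserts that $\imath^{*}$ is continuous if and only if the composite $\Pi\circ\imath^{*}$ is continuous as a map into $\mathcal{P}(\Omega^1(\hat{P};\hat{\mathfrak{g}}))$, so the whole problem collapses to a single computation of this composite.

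First I would compute $\Pi\circ\imath^{*}$ directly. The map $\imath^{*}$ is fiberwise the restriction (\ref{fiberwise_map}), and restricting an extension defined over $\Omega^1_{eq}(\hat{P};\hat{\mathfrak{g}})$ to a subdomain $\widehat{\operatorname{Conn}}(\hat{P};\hat{\mathfrak{g}})$ yields an extension whose underlying extended domain is exactly $\widehat{\operatorname{Conn}}(\hat{P};\hat{\mathfrak{g}})$; this is precisely the commutativity $\pi_R\circ \imath^{*}=\operatorname{pr}_1$ already recorded in diagram (\ref{diagram_bundle_1}). Composing with $\iota$ gives the identity of maps
\[
\Pi\circ\imath^{*}=\iota\circ\pi_R\circ\imath^{*}=\iota\circ\operatorname{pr}_1=\operatorname{PR}_1 .
\]
Second I would invoke the very definition of the topology on the product: it was declared to be the initial topology of $\operatorname{PR}_1$, whence $\operatorname{PR}_1$ is tautologically continuous. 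Combining the two steps, $\Pi\circ\imath^{*}=\operatorname{PR}_1$ is continuous, and by the universal property of the initial topology on the target, $\imath^{*}$ is continuous. This establishes the conjecture and, under the injective and projective hypotheses surrounding diagram (\ref{diagram_bundle_2}), promotes $\imath^{*}$ to a genuine morphism of $R[\mathbb{G}]$-module bundles.

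The step I expect to carry the real weight --- and the likely reason the statement is posed as a conjecture rather than a lemma --- is the verification that the identity $\Pi\circ\imath^{*}=\operatorname{PR}_1$ holds literally at the level of the ambient power-set topology, together with the confirmation that the universal property of initial topologies applies verbatim to the uniform-space topology on $\mathcal{P}(\Omega^1(\hat{P};\hat{\mathfrak{g}}))$, which is typically non-Hausdorff. Should one instead wish to refine the topology on $\operatorname{Ext}(S^G;\hat{G})$ so that it separates extensions sharing the same extended domain --- that is, so that it also detects the fiber data $(\hat{S}^{\hat{G}},C,\delta)$ through the Fr\'echet structure on the underlying functional spaces --- then the formal reduction above collapses, and one must instead prove directly that the restriction maps (\ref{fiberwise_map}) are continuous for those finer function-space topologies. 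That is the genuinely difficult case, and it is there that the compactness of $\hat{P}$, which makes $\Omega^1(\hat{P};\hat{\mathfrak{g}})$ a complete Fr\'echet space, would become essential, since continuity of restriction would then have to be controlled uniformly across the directed system of extended domains.
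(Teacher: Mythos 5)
The statement you are proving is posed in the paper only as a conjecture, with no proof supplied, so there is nothing to compare your argument against; what can be said is that your argument is correct and in fact settles the conjecture. The two topologies are, by the paper's own definitions, the initial topologies of $\Pi=\iota\circ\pi_R$ and of $\operatorname{PR}_1=\iota\circ\operatorname{pr}_1$ with respect to the single ambient space $\mathcal{P}(\Omega^1(\hat{P};\hat{\mathfrak{g}}))$, and the commutativity $\pi_R\circ\imath^{*}=\operatorname{pr}_1$ is already asserted in diagram (\ref{diagram_bundle_1}) and is immediate from the definition of $\imath^{*}$ as fiberwise restriction. The universal property of an initial topology generated by a single map then gives continuity of $\imath^{*}$ from continuity of $\Pi\circ\imath^{*}=\operatorname{PR}_1$, and the latter is tautological. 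Your worry about whether this universal property "applies verbatim" to the non-Hausdorff power-set topology is unfounded: the characterization of continuity into an initial topology is purely formal and needs no separation or uniformity hypotheses, so even the compactness of $\hat{P}$ plays no role beyond guaranteeing that the ambient topology is defined at all.

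The genuinely valuable part of your write-up is the closing observation, which deserves to be made the headline rather than a caveat: because both topologies only see the extended domain and are blind to the fiber data $(\hat{S}^{\hat{G}},C,\delta)$, the conjecture as literally formulated is true for trivial reasons --- indeed \emph{any} map over $\operatorname{ED}(S^G;\hat{G})$ between these two spaces is continuous by the same argument. This strongly suggests the authors intended (or should intend) a finer topology on $\operatorname{Ext}(S^G;\hat{G})$ that detects the functionals themselves through the Fr\'echet structure on $\Omega^1(\hat{P};\hat{\mathfrak{g}})$ and on the spaces of real-valued functionals; only then does continuity of the restriction maps (\ref{fiberwise_map}), uniformly over the directed system of extended domains, become a substantive question, and only then would the conjecture carry the content its placement alongside the second conjecture (on diagram (\ref{diagram_bundle_2})) implies. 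You would strengthen your answer by stating explicitly that you have proved the conjecture as written, and separately identifying the refined statement that remains open.
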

\begin{conjecture}
Given a YMT theory $S^G$ and a basic extension  $\imath:G\hookrightarrow \hat{G}$ of $G$ such that $\hat{P}$ is compact, there exists a ring $R$ and a group $\mathbb{G}$ such that $\operatorname{Ext}(S^G;\Omega^1_{eq}(\hat{P};\hat{\mathfrak{g}}))$ is injective as $R[\mathbb{G}]$-module and such that $\operatorname{Ext}(S^G;\widehat{\operatorname{Conn}}(\hat{P};\hat{\mathfrak{g}}))$  (except maybe for $\widehat{\operatorname{Conn}}(\hat{P};\hat{\mathfrak{g}})=\Omega^1_{eq}(\hat{P};\hat{\mathfrak{g}})$) is a projective as $R[\mathbb{G}]$-module. Furthermore, the induced splits can be choosen such that the corresponding map $s$ of diagram (\ref{diagram_bundle_2}) is continuous in the described topologies.  
\end{conjecture}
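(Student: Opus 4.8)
The plan is to separate the statement into its purely algebraic content — the simultaneous injectivity of the top fibre $\operatorname{Ext}(S^G;\Omega^1_{eq}(\hat{P};\hat{\mathfrak{g}}))$ and projectivity of the remaining fibres $\operatorname{Ext}(S^G;\widehat{\operatorname{Conn}}(\hat{P};\hat{\mathfrak{g}}))$ as $R[\mathbb{G}]$-modules — from its analytic content, the continuity of the splitting $s$ in diagram (\ref{diagram_bundle_2}). The key observation for the algebraic half is that the conjecture asks only for the \emph{existence} of a pair $(R,\mathbb{G})$, so one is free to engineer $R[\mathbb{G}]$ to lie in a class of rings over which \emph{every} module is automatically both projective and injective. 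The cleanest such class is the semisimple rings: by Maschke's theorem, if $\mathbb{G}$ is finite and $R$ is a field whose characteristic does not divide $\vert\mathbb{G}\vert$, then $R[\mathbb{G}]$ is semisimple, whence all $R[\mathbb{G}]$-modules are at once projective and injective.

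First I would fix an explicit admissible pair. Since an additive action $*:\mathbb{G}\times\mathbb{R}\to\mathbb{R}$ is a homomorphism of $\mathbb{G}$ into the group automorphisms of $(\mathbb{R},+)$, one may take $\mathbb{G}=\mathbb{Z}/2$ acting by negation and $R=\mathbb{Q}$, so that $R[\mathbb{G}]\simeq R\times R$ is semisimple. For this choice the Corollary endows each $\operatorname{Ext}(S^G;\widehat{\operatorname{Conn}}(\hat{P};\hat{\mathfrak{g}}))$ with a genuine $R[\mathbb{G}]$-module structure (obtained from the $\mathbb{Z}[\mathbb{G}]$-structure by extension of scalars, the additive inverse being supplied by $-x=(-e)\overline{*}x$, which is a legitimate extension precisely because the nonlinearity of $S^G$ is absorbed into the pullback $*^{S^G}$). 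Both algebraic conditions then hold simultaneously for this single pair, settling the first two assertions and isolating the analysis as the genuine difficulty.

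It remains to realise the fibrewise splits continuously. A section of the restriction epimorphism $\imath^{*}:\operatorname{Ext}(S^G;\Omega^1_{eq}(\hat{P};\hat{\mathfrak{g}}))\to\operatorname{Ext}(S^G;\widehat{\operatorname{Conn}}(\hat{P};\hat{\mathfrak{g}}))$ is exactly an \emph{extension operator}, sending a triple defined on $\widehat{\operatorname{Conn}}(\hat{P};\hat{\mathfrak{g}})$ to one on all of $\Omega^1_{eq}(\hat{P};\hat{\mathfrak{g}})$ that restricts back to it. I would build it by precomposition with a $\operatorname{Gau}_{\hat{G}}(\hat{P})$-equivariant retraction $\rho_{\widehat{\operatorname{Conn}}}:\Omega^1_{eq}(\hat{P};\hat{\mathfrak{g}})\to\widehat{\operatorname{Conn}}(\hat{P};\hat{\mathfrak{g}})$, setting $s(\widehat{\operatorname{Conn}},\mathcal{S})=(\hat{S}\circ\rho_{\widehat{\operatorname{Conn}}},\,C,\,\delta)$ with the correction data left unchanged (it already lives in $C^1(\hat{P};\hat{\mathfrak{g}})\subseteq\widehat{\operatorname{Conn}}(\hat{P};\hat{\mathfrak{g}})$, on which $\rho_{\widehat{\operatorname{Conn}}}$ restricts to the identity). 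Fibrewise existence of such retracts follows from the equivariant Axiom of Choice already used in the equivariant-choice extension example; this same formula simultaneously exhibits $\imath^{*}$ as surjective and shows that $s$ is $R[\mathbb{G}]$-linear, since precomposition commutes with pointwise operations and with $\overline{*}$. The whole problem is thereby reduced to choosing the family $\{\rho_{\widehat{\operatorname{Conn}}}\}$ so that it depends \emph{continuously} on $\widehat{\operatorname{Conn}}$ in the uniform topology that $\operatorname{ED}(S^G;\hat{G})$ inherits from the power set of the Fréchet space $\Omega^1(\hat{P};\hat{\mathfrak{g}})$.

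The hard part will be this continuous selection of retractions, which is also where the companion continuity of $\imath^{*}$ is controlled. There is no canonical retraction available — the domains $\widehat{\operatorname{Conn}}(\hat{P};\hat{\mathfrak{g}})$ need not be linear subspaces, so orthogonal projection is unavailable — the base $\operatorname{ED}(S^G;\hat{G})$ is typically non-Hausdorff, and the fibres genuinely change size, so no off-the-shelf local-triviality argument applies. My plan would be to attack it through a Michael-type continuous selection theorem, presenting $\widehat{\operatorname{Conn}}\mapsto\{\text{equivariant retractions onto }\widehat{\operatorname{Conn}}\}$ as a lower-semicontinuous set-valued map with closed convex values in a suitable Fréchet space of operators. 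This looks feasible on the subposet of affine or linear (partially linear) extended domains, where retractions become continuous linear projections and a continuous choice of complement in the nuclear Fréchet space $\Omega^1(\hat{P};\hat{\mathfrak{g}})$ — nuclearity coming from compactness of $\hat{P}$ — can plausibly be arranged. Pushing the selection past the linear subposet to all of $\operatorname{ED}(S^G;\hat{G})$, and reconciling it with the pathological power-set topology, is where I expect the argument to genuinely resist, which is presumably why the statement is recorded as a conjecture rather than a theorem.
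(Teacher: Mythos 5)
The statement you are trying to prove is recorded in the paper as a \emph{conjecture}: the authors give no proof, only the surrounding motivation in Subsection \ref{sec_bundle}, so there is nothing to compare your argument against and your attempt must stand on its own. It does not yet close the conjecture. Your Maschke-type observation for the algebraic half is genuinely nice --- choosing $\mathbb{G}=\mathbb{Z}/2$ acting by negation and $R=\mathbb{Q}$ so that $R[\mathbb{G}]$ is semisimple would make every module simultaneously projective and injective --- but it rests on the assumption that $\operatorname{Ext}(S^G;\widehat{\operatorname{Conn}}(\hat{P};\hat{\mathfrak{g}}))$ actually carries a $\mathbb{Q}[\mathbb{Z}/2]$-module structure \emph{on the same underlying set}. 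The paper's Corollary produces the $R[\mathbb{G}]$-structure by the identification $\mathbb{Z}[\mathbb{G}]\otimes_{\mathbb{Z}}R\simeq R[\mathbb{G}]$, i.e.\ by extension of scalars, and for $R=\mathbb{Q}$ this replaces the module by its rationalization unless the commutative monoid of Lemma \ref{monoid_structure} is already uniquely divisible. Unique $2$-divisibility of $\operatorname{Ext}$ is exactly the kind of closure-under-scaling property that Subsection \ref{sec_monoid} shows fails in general (the functional $S^G$ is quartic, so halving a triple $(\hat{S}^{\hat{G}},C,\delta)$ need not yield an extension). You would need to either verify divisibility for the pullback action $\overline{*}$ or accept that the projectivity/injectivity claims refer to the rationalized fibers, which changes what the conjecture asserts.

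The deeper gaps are in the second half. First, your fibrewise section $s(\widehat{\operatorname{Conn}},\mathcal{S})=(\hat{S}\circ\rho_{\widehat{\operatorname{Conn}}},C,\delta)$ presupposes a $\operatorname{Gau}_{\hat{G}}(\hat{P})$-equivariant retraction $\rho_{\widehat{\operatorname{Conn}}}:\Omega^1_{eq}(\hat{P};\hat{\mathfrak{g}})\to\widehat{\operatorname{Conn}}(\hat{P};\hat{\mathfrak{g}})$ for every extended domain. Monomorphisms in the category of $G$-sets do \emph{not} split in general: an equivariant retraction must send each $1$-form to a point of $\widehat{\operatorname{Conn}}(\hat{P};\hat{\mathfrak{g}})$ fixed by its stabilizer, and such fixed points need not exist in an arbitrary invariant subset satisfying (\ref{domain_extension}); the ``equivariant Axiom of Choice'' you cite only splits monos/epis under additional hypotheses on the orbit types, so even fibrewise existence of your section is unproved. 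Second --- and this is the actual content of the conjecture --- the continuity of $s$ in the uniform topology induced on $\operatorname{ED}(S^G;\hat{G})$ from $\mathcal{P}(\Omega^1(\hat{P};\hat{\mathfrak{g}}))$ is only sketched as a programme (a Michael-type selection on a lower-semicontinuous family of retractions), restricted to the subposet of linear extended domains, and you concede yourself that it resists extension to the general case. An argument that establishes the conclusion only on a proper subfamily of fibers, from an existence hypothesis that is itself unverified, is a research plan rather than a proof; the statement remains open.
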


\section{Category of Extensions}\label{sec_category_extensions}

\quad\;\,In previous sections we studied the space of extensions. Here we will study how the elements of this space interact with each other. More precisely, we will build the category of extensions of a given YMT theory.

Let $S^G$ be a YMT theory and let $\imath:G \hookrightarrow \hat{G}$ be a basic extension of $G$, which from now on will be fixed. Let $\hat{S}^{\hat{G}}_i\in \operatorname{Ext}(S^G;\hat{G})$, with $i=1,2$ be extensions of $S^G$. A \textit{morphism} between them, denoted by $F:\hat{S}^{\hat{G}}_1 \rightarrow \hat{S}^{\hat{G}}_2$ is given by a pair $F=(f,g)$, where $f:\widehat{\operatorname{Conn}}_{1}(\hat{P};\hat{\mathfrak{g}})\rightarrow\widehat{\operatorname{Conn}}_{2}(\hat{P};\hat{\mathfrak{g}})$ is a $\operatorname{Gau}_{\hat{G}}(\hat{P})$-equivariant map
and $g:C_{1}^{1}(\hat{P};\hat{\mathfrak{g}})\rightarrow C_{2}^{1}(\hat{P};\hat{\mathfrak{g}})$ is a function
making commutative the first diagram below. Compositions and identities are defined componentwise, i.e, $(f,g)\circ (f',g')=(f'\circ f, g\circ g)$. It is straightforward to check that this data really defines a category $\mathbf{Ext}(S^G;\hat{G})$.
\begin{equation}{\label{diagram_equivalence_extensions}
\xymatrix{   && \ar@/_/[lld]_-{\delta _1} \ar[dd]^g C^1_1(\hat{P};\hat{\mathfrak{g}}) \ar[ld]_{C_1} \ar@{^(->}[rr] && \widehat{\operatorname{Conn}}_1(\hat{P};\hat{\mathfrak{g}}) \ar[ld]_-{\hat{S}^{\hat{G}}_1} \ar[dd]^f \\
\operatorname{Conn}(P;\mathfrak{g}) & \mathbb{R} & & \mathbb{R} \\
 && \ar@/^/[llu]^-{\delta _2} \ar[lu]^{C_2} C^1_2(\hat{P};\hat{\mathfrak{g}}) \ar@{^(->}[rr] &&   \widehat{\operatorname{Conn}}_2(\hat{P};\hat{\mathfrak{g}}) \ar[lu]^-{\hat{S}^{\hat{G}}_2}}}
\end{equation}

Given a group $G$, let $\mathbf{Set}_G$ denote the category of $G$-sets, i.e, whose objects are sets $X$ with an action $G\times X\rightarrow X$ and whose morphisms are $G$-equivariant maps. Furthermore, if $\mathbf{C}$ is a category and $A\in\mathbf{C}$ is an object, let $\mathbf{C}/A$ denote the corresponding slice category, whose objets are morphisms $f:X \rightarrow A$ to $A$ and whose morphisms $h:f \Rightarrow g$ are commutative triangles, as below, i.e, are morphisms $h:X\rightarrow Y$ in $\mathbf{C}$ such that $f=h\circ g$. We also recall that a subcategory $\mathbf{C}_0\subset \mathbf{C}$ is \textit{conservative} if the inclusion functor $\imath:\mathbf{C}_0\rightarrow \mathbf{C}$ is conservative, i.e, if reflects isomorphisms, meaning that if a map $f:X\rightarrow Y$ is such that $\imath(f)$ is an isomorphism in $\mathbf{C}$, then $f$ is an isomorphism in $\mathbf{C}_0$.
$$
\xymatrix{& \ar[ld]_f X \ar[dd]^{h} \\
A \\
& Y \ar[lu]^g }
$$
\begin{theorem}\label{prop_prop_conservative}
For every given YMT $S^G$ and basic extension $\imath:G\hookrightarrow \hat{G}$, the corresponding category of extensions $\mathbf{Ext}(S^G;\hat{G})$ is a conservative subcategory of 
\begin{equation}\label{prop_conservative_subcategory}
 (\mathbf{Set}_{\operatorname{Gau}_{\hat{G}}(\hat{P})}/\mathbb{R})\times (\mathbf{Set}/\mathbb{R})\times (\mathbf{Set}/\operatorname{Conn}(P;\mathfrak{g})).   
\end{equation}

\end{theorem}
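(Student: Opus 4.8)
The plan is to exhibit an explicit comparison functor
\[
\Phi:\mathbf{Ext}(S^G;\hat{G})\longrightarrow (\mathbf{Set}_{\operatorname{Gau}_{\hat{G}}(\hat{P})}/\mathbb{R})\times (\mathbf{Set}/\mathbb{R})\times (\mathbf{Set}/\operatorname{Conn}(P;\mathfrak{g}))
\]
and then to verify that it is an embedding which reflects isomorphisms. On objects, I would send an extension $\mathcal{S}=(\hat{S}^{\hat{G}},C,\delta)$ to the triple whose first entry is $\hat{S}^{\hat{G}}:\widehat{\operatorname{Conn}}(\hat{P};\hat{\mathfrak{g}})\to\mathbb{R}$ viewed as an object of $\mathbf{Set}_{\operatorname{Gau}_{\hat{G}}(\hat{P})}/\mathbb{R}$ — here one uses that the extended domain is $\operatorname{Gau}_{\hat{G}}(\hat{P})$-invariant and that $\hat{S}^{\hat{G}}$ is gauge invariant, so that it is equivariant for the trivial action on $\mathbb{R}$ — whose second entry is $C:C^1(\hat{P};\hat{\mathfrak{g}})\to\mathbb{R}$ in $\mathbf{Set}/\mathbb{R}$, and whose third entry is $\delta:C^1(\hat{P};\hat{\mathfrak{g}})\to\operatorname{Conn}(P;\mathfrak{g})$ in $\mathbf{Set}/\operatorname{Conn}(P;\mathfrak{g})$. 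On morphisms I would send $F=(f,g)$ to the triple $(f,g,g)$.

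First I would check that $\Phi$ is well defined and functorial. The point is that the commutativity of diagram (\ref{diagram_equivalence_extensions}) decomposes exactly into the three slice-morphism conditions $\hat{S}^{\hat{G}}_2\circ f=\hat{S}^{\hat{G}}_1$, $C_2\circ g=C_1$ and $\delta_2\circ g=\delta_1$, together with the square $f\circ\jmath_1=\jmath_2\circ g$ coupling the two components of $F$; the first three say precisely that $f$, $g$, $g$ are morphisms in the respective slice categories. Functoriality is immediate, since composition and identities in $\mathbf{Ext}(S^G;\hat{G})$ are defined componentwise. To realize $\mathbf{Ext}(S^G;\hat{G})$ as a subcategory I would note that $\Phi$ is faithful, as the pair $(f,g)$ is recovered from $(f,g,g)$, and injective on objects, since from the image one reads off $\widehat{\operatorname{Conn}}(\hat{P};\hat{\mathfrak{g}})$ and $\hat{S}^{\hat{G}}$ as the domain and map of the first entry, together with $C^1(\hat{P};\hat{\mathfrak{g}})$, $C$ and $\delta$ from the second and third, while $\jmath$ is forced to be the canonical subset inclusion $C^1(\hat{P};\hat{\mathfrak{g}})\hookrightarrow\widehat{\operatorname{Conn}}(\hat{P};\hat{\mathfrak{g}})$.

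The heart of the argument is conservativity. An isomorphism in the product category is a triple of isomorphisms, and an isomorphism in a slice category $\mathbf{C}/A$ is exactly a morphism whose underlying arrow is an isomorphism of $\mathbf{C}$. Hence if $\Phi(F)=(f,g,g)$ is an isomorphism, then $f$ is an isomorphism in $\mathbf{Set}_{\operatorname{Gau}_{\hat{G}}(\hat{P})}$ — that is, an equivariant bijection, whose set-theoretic inverse $f^{-1}$ is automatically equivariant — and $g$ is a bijection of sets with inverse $g^{-1}$. I would then propose $F^{-1}=(f^{-1},g^{-1})$ as the inverse in $\mathbf{Ext}(S^G;\hat{G})$ and verify that it is a legitimate morphism $\mathcal{S}_2\to\mathcal{S}_1$ by transporting the four relations across the inverses: postcomposing $\delta_2\circ g=\delta_1$ with $g^{-1}$ yields $\delta_1\circ g^{-1}=\delta_2$, likewise $C_1\circ g^{-1}=C_2$; precomposing $\hat{S}^{\hat{G}}_2\circ f=\hat{S}^{\hat{G}}_1$ with $f^{-1}$ yields $\hat{S}^{\hat{G}}_1\circ f^{-1}=\hat{S}^{\hat{G}}_2$; and sandwiching $f\circ\jmath_1=\jmath_2\circ g$ between $f^{-1}$ and $g^{-1}$ yields $f^{-1}\circ\jmath_2=\jmath_1\circ g^{-1}$. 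These are precisely the conditions of (\ref{diagram_equivalence_extensions}) for $(f^{-1},g^{-1})$, which is visibly a two-sided inverse of $F$.

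The step I expect to require the most care is this last one, and the conceptual point there is that a morphism in $\mathbf{Ext}(S^G;\hat{G})$ imposes \emph{no} equivariance — indeed no structure beyond being a function — on the correction-space component $g$; this is exactly why the second factor is the plain slice $\mathbf{Set}/\mathbb{R}$ rather than an equivariant slice, and it guarantees that $g^{-1}$ lands back in $\mathbf{Ext}(S^G;\hat{G})$ without obstruction. Correspondingly, the only component where equivariance must be tracked is the first, where one invokes that an equivariant bijection is automatically an isomorphism of $\operatorname{Gau}_{\hat{G}}(\hat{P})$-sets. A secondary subtlety worth stating explicitly in the write-up is the recovery of $\jmath$ for injectivity on objects: since the slice objects remember $C^1(\hat{P};\hat{\mathfrak{g}})$ and $\widehat{\operatorname{Conn}}(\hat{P};\hat{\mathfrak{g}})$ only as sets, one must use that in the very definition of an extension $C^1(\hat{P};\hat{\mathfrak{g}})$ is a subset of $\widehat{\operatorname{Conn}}(\hat{P};\hat{\mathfrak{g}})$ with $\jmath$ its inclusion.
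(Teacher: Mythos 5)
Your proposal is correct, and it sets up exactly the same comparison as the paper: an extension is sent to the triple $(\hat{S}^{\hat{G}},C,\delta)$ in the product of slice categories (\ref{prop_conservative_subcategory}), and a morphism $(f,g)$ to $(f,g,g)$, with faithfulness and injectivity on objects giving the subcategory structure. Where you diverge is the final step. The paper argues indirectly: a faithful functor reflects monomorphisms and epimorphisms; monos and epis in $\mathbf{Set}$, $\mathbf{Set}_{\operatorname{Gau}_{\hat{G}}(\hat{P})}$ and their slices are the componentwise injections and surjections; hence a bimorphism of $\mathbf{Ext}(S^G;\hat{G})$ has bijective components and is therefore an isomorphism, so the category is balanced, and a faithful functor out of a balanced category is conservative. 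You instead take a morphism whose image is an isomorphism, observe that isomorphisms in the slices are the (equivariant) bijections, and explicitly exhibit $(f^{-1},g^{-1})$ as the inverse by transporting the four relations of diagram (\ref{diagram_equivalence_extensions}) across $f^{-1}$ and $g^{-1}$. Your route is shorter and more elementary, and it supplies in full the verification that the paper compresses into ``it is straightforward to check that, in this case, $(f,g)$ is an isomorphism''; in particular your remark that $g$ carries no equivariance constraint, so $g^{-1}$ is automatically admissible, is exactly the point that makes that check go through. What the paper's longer route buys is the intermediate structural fact that $\mathbf{Ext}(S^G;\hat{G})$ is balanced and the explicit characterization of its monos and epis, which it reuses implicitly elsewhere; your argument does not yield these byproducts but proves the stated theorem just as well.
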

\begin{proof}
First of all, notice that an object of $\mathbf{Ext}(S^G;\hat{G})$ can be regarded as a triple $(\hat{S}^{\hat{G}},C,\delta)$ in (\ref{prop_conservative_subcategory}) such that $s(C)=s(\delta)$, $s(\delta)\subset s(\hat{S}^{\hat{G}})$ and $\hat{S}^{\hat{G}}\vert_{s(C)}=S^G\circ \delta +C$, where $s(f)$ denoted the domain of a map and 
$$
\operatorname{Conn}(\hat{P};\hat{\mathfrak{g}})\subset s(\hat{S}^{\hat{G}}) \subset \Omega ^1_{eq}(\hat{P};\hat{\mathfrak{g}}).
$$ 
Furthermore, looking at diagram (\ref{diagram_equivalence_extensions}) a morphism $(f,g)$ in $\mathbf{Ext}(S^G;\hat{G})$ can be regarded as a morphism $(f,g,h)$ in (\ref{prop_conservative_subcategory}) such that $h=g$ and $f\vert _{s(C)}=g$. Since compositions and identities in $\mathbf{Ext}(S^G;\hat{G})$ are componentwise, it follows that $\mathbf{Ext}(S^G;\hat{G})$ is a subcategory of (\ref{prop_conservative_subcategory}). It is straighforward to check that the inclusion is faithful. Therefore, it reflects monomorphisms and epimorphisms. But epimorphisms  (resp. monomorphisms) on the category of sets and $G$-sets are surjective (resp. monomorphism) maps \cite{category_G_sets}. Since in slice category limits and colimits are computed objectwise, we see that $(f,g,h)$ is an epimorphism (resp. monomorphism) iff $f$, $g$ and $h$ are surjections (resp. injections). Thus, a morphism $(f,g)$ in $\mathbf{Ext}(S^G;\hat{G})$ is an epimorphism (resp. monomorphism) iff $f$ and $g$ are surjections (resp. injections). In particular, $(f,g)$ is a bimorphism iff both $f$ and $g$ are bijections. But it is straighforward to check that, in this case, $(f,g)$ is an isomorphism. Thus, $\mathbf{Ext}(S^G;\hat{G})$ is balanced. The result then follows from the fact that very faithful functor defined on a balanced category is conservative \cite{riehl_category}.  
\end{proof}
\begin{corollary}
Suppose that a diagram $J$ in $\mathbf{Ext}(S^G;\hat{G})$ admits limit (resp. colimit) and is such that for every limiting cone (resp. cocone) $L$ the image $\imath(L)$ is a limiting cone (resp. cocone) for $\imath \circ J$, where $\imath$ is the inclusion functor to \emph{(\ref{prop_conservative_subcategory})}. In this case, the reciprocal holds, i.e, if $L$ is such that $\imath(L)$ is a limiting cone (resp. cocone), then $L$ must be a limiting cone (resp. cocone). 
\end{corollary}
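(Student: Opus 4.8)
The plan is to recognize this corollary as an instance of the general categorical fact that a \emph{conservative functor reflects any limit (resp. colimit) that it preserves}, provided that limit is already known to exist in the source. Theorem \ref{prop_prop_conservative} supplies the decisive ingredient, namely that the inclusion $\imath$ of $\mathbf{Ext}(S^G;\hat{G})$ into the product of slice categories (\ref{prop_conservative_subcategory}) is conservative. The two hypotheses of the statement furnish exactly the remaining data: the existence of $\lim J$ (resp. $\operatorname{colim} J$) in $\mathbf{Ext}(S^G;\hat{G})$, and the preservation of that (co)limit by $\imath$. I will carry out the limit case in detail, the colimit case following by the dual argument.

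First I would fix a limiting cone $\Lambda$ for $J$, which exists by the first hypothesis, and let $L$ be an arbitrary cone over $J$ with the property that $\imath(L)$ is a limiting cone for $\imath\circ J$. By the second hypothesis, $\imath(\Lambda)$ is \emph{also} a limiting cone for $\imath\circ J$. Using the universal property of $\Lambda$ inside $\mathbf{Ext}(S^G;\hat{G})$, there is a unique morphism of cones $u\colon L\to\Lambda$, that is, a unique morphism between the apices in $\mathbf{Ext}(S^G;\hat{G})$ commuting with all the cone legs.

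The key step is to apply $\imath$ to $u$. Since a functor carries cones to cones, $\imath(u)\colon \imath(L)\to\imath(\Lambda)$ is a morphism of cones over $\imath\circ J$ between two objects that are each limiting cones for that diagram. Any morphism of cones from one limit cone to another is forced to be an isomorphism in the target: the uniqueness clauses in the two universal properties produce a two-sided inverse. Hence $\imath(u)$ is an isomorphism in (\ref{prop_conservative_subcategory}). Invoking the conservativity of $\imath$ established in Theorem \ref{prop_prop_conservative}, I conclude that $u$ itself is an isomorphism in $\mathbf{Ext}(S^G;\hat{G})$. Since $u\colon L\to\Lambda$ is then an isomorphism of cones onto the limiting cone $\Lambda$, the cone $L$ is itself limiting, which is the asserted reciprocal.

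For the colimit case I would run the same reasoning with all arrows reversed, replacing $\Lambda$ by a colimiting cocone, $u$ by the comparison morphism out of the given cocone, and appealing again to conservativity. The argument is essentially formal, so I do not anticipate a genuine obstacle; the only point demanding care is the bookkeeping that distinguishes a cone (an apex \emph{together with} its legs) from its apex alone, so that ``isomorphism of cones'' is interpreted correctly and conservativity—which concerns isomorphisms of objects in (\ref{prop_conservative_subcategory})—is applied to the comparison morphism at the level of apices. The hypotheses are precisely engineered to make the standard ``conservative functors reflect preserved (co)limits'' lemma applicable.
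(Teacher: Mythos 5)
Your proposal is correct and follows essentially the same route as the paper, which simply invokes the general fact that a conservative functor reflects every limit and colimit that exists in the source and is preserved by the functor; you have merely unwound the standard proof of that fact (comparison morphism between cones, isomorphism between two limit cones, conservativity applied to the comparison at the level of apices). No gaps.
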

\begin{proof}
It follows from the general fact that conservative functor $F:\mathbf{C}\rightarrow \mathbf{D}$ reflects every limit and colimit that exists in $\mathbf{C}$ and which are preserved by $F$.
\end{proof}

There are, however, some limits and colimits which exists in $\mathbf{Ext}(S^G;\hat{G})$ but which are not preserved (and therefore not reflected) by the inclusion functor.

\begin{example}
\emph{As it is straightforward to check, the null extension of Example \ref{null_extension} is a terminal object in the category of extensions $\mathbf{Ext}(S^G;\hat{G})$. We assert that it is not preserved by $\imath$ if $S^G$ is not null. Indeed, a terminal object in a slice category $\mathbf{C}/A$ is the identity $id_A$, so that a terminal object in (\ref{prop_conservative_subcategory}) is the triple $id=(id_{\mathbb{R}},id_{\mathbb{R}},id_{\operatorname{Conn}(P;\mathfrak{g})})$. This means that if the inclusion functor preserves terminal objects, then  $id$ must belong to $\mathbf{Ext}(S^G;\hat{G})$, i.e, there exists an extension of $S^G$ such that $\hat{S}^{\hat{G}}=id_\mathbb{R}$, $C=id_\mathbb{R}$ and $\delta=id_{\operatorname{Conn}(P;\mathfrak{g})}$. But the condition (\ref{decomposition_extension}) would implies $S^G=0$, which is a contradiction.}
\end{example}

\begin{remark}
In the case of equivariant extensions (see Remark \ref{remar_linear_equivariant_extensions}), Theorem \ref{prop_prop_conservative} can be improved by replacing $\mathbf{Set}/\mathbb{R}$ with $\mathbf{Set}_{\operatorname{Gau}_{\hat{G}}(\hat{P})}/\mathbb{R}$ and $\mathbf{Set}/\operatorname{Conn}(P;\mathfrak{g})$ with $\mathbf{Set}_{\operatorname{Gau}_{\hat{G}}}/\operatorname{Conn}(P;\mathfrak{g})$, where the $\operatorname{Gau}_{\hat{G}}(\hat{P})$-action in $\operatorname{Conn}(P;\mathfrak{g})$ is that induced by the homorphism $\xi:\operatorname{Gau}_G(P)\rightarrow \operatorname{Gau}_{\hat{G}}(\hat{P})$ described in Section \ref{sec_extensions}.
\end{remark}
\begin{remark}[speculation]
In Subsection \ref{sec_bundle} we proved that $\operatorname{Ext}(S^G;\hat{G})$ is an $R[\mathbb{G}$-module bundle for every ring $R$ and every additive group action $\mathbb{G}$ in $\mathbb{R}$ and we conjectured that it can be regarded as continuous subundle of a trivial bundle. In the present section, on the other hand, we showed that $\operatorname{Ext}(S^G;\hat{G})$ is the collection of objects of a category $\mathbf{Ext}(S^G;\hat{G})$. Thus, it is natural to ask whether the collection of morphisms $\operatorname{Mor}(S^G;\hat{G})$ of $\mathbf{Ext}(S^G;\hat{G})$ can also be endowed with a $R[\mathbb{G}]$-module bundle structure such that the source, target and composition maps are bundle morphisms. In other words, it is natural to ask if $\mathbf{Ext}(S^G;\hat{G})$ cannot be internalized in the category $\mathbf{Bun}_{R[\mathbb{G}]}$ of $R[\mathbb{G}]$-module bundles. We do not have a complete answer and a careful analysis could motivate future research.
\end{remark}

\section*{Acknowledgements} The authors would like to thank Helvecio Fargnoli Filho for reading a preliminary version of the text. Y. X. Martins was supported by CAPES (grant number 88887.187703/2018-00) and L. F. A. Campos was supported by CAPES (grant number 88887.337738/2019-00).

\bibliographystyle{abbrv}
\bibliography{references}
\end{document}